\newtheorem{theorem}{Theorem}%[section]
\newtheorem{corollary}{Corollary}
\newtheorem{lemma}{Lemma}
\newtheorem{proposition}{Proposition}
\newtheorem*{acknowledgements}{Acknowledgements}
\theoremstyle{remark}
\newtheorem{remark}{Remark}
\newtheorem{example}{Example}
\theoremstyle{definition}
\newtheorem{definition}{Definition}
\colorlet{texto}{black!50!gray} 
\newcommand{\titulote}[1]{%
  \ifodd\value{page}%
    \protect\parbox{0.97\linewidth}{#1}\hfill%
  \else%
    \hfill\protect\parbox{0.97\linewidth}{#1}%
  \fi%
}
\newcommand{\Aut}{\mathrm{Aut}}
\newcommand{\Autc}{\Aut_{\mathrm{c}}}
\renewcommand{\and}{\quad\mbox{and}\quad}
\newcommand{\T}{\mathbb{T}}
\newcommand{\Tr}{\mathrm{Tr}}
\newcommand{\al}{\alpha}
\newcommand{\mtr}[1]{\mathrm{#1}}
\newcommand{\A}{\mathcal{A}}
\newcommand{\dif}[1]{\mathrm{d}#1}
\newcommand{\re}{\mathbb{R}}
\newcommand{\dervpar}[2]{\frac{\partial #1}{\partial #2}}
\newcommand{\si}{\sigma}
\newcommand{\G}{\mathcal{G}}
\renewcommand{\H}{\mathcal{H}}
\newcommand{\B}{\mathcal{B}}
\newcommand{\C}{\mathbb{C}}
\newcommand{\ii}{\mathrm{i}}
\newcommand{\ee}{\mathrm{e}}
\newcommand{\inv}{^{-1}} 
\newcommand{\mtc}[1]{\mathcal{#1}} 
\newcommand{\Z}{\mathbb{Z}}
\newcommand{\V}{\mtc{V}}
\newcommand{\with}{\,\,\mtr{with} \,\,}
\newcommand{\where}{\qquad\mbox{where}\,\,}
\newcommand{\K}{\mathcal{K}} 
\newcommand{\La}{\Lambda}
\newcommand{\R}{\mathcal R}
\newcommand{\Df}{\mathbb{\mathcal{D}}}
\newcommand{\andd}{\quad \mtr{and}\quad}	 
\newcommand{\Sym}{\mathfrak{S}}
\newcommand{\hp}[1]{^{(#1)}}
\renewcommand{\phi}{\varphi}
\newcommand{\fey}{\mathfrak{Feyn}}
\newcommand{\feyn}{\fey_3(\phi^4)}
\newcommand{\feymel}{\fey_D(\phi^4_{\mathsf{m}})}
\newcommand{\Grph}[1]{\mathsf{Grph}_{\mathrm{c},#1}}
\newcommand{\tcol}{\Grph{3}}
\newcommand{\suml}{\sum\limits}
\newcommand{\Rb}{\mathcal{R}}
\newcommand{\bJ}{\bar J}
\newcommand{\kthreeg}{\raisebox{-.2\height}{\includegraphics[width=.43cm]{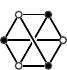}}}
\newcommand{\fder}[2]{\frac{\delta #1}{\delta #2}}
\newcommand{\wh}{_{\mathrm{w}\vphantom{b}}}
\newcommand{\bl}{_{\mathrm{b}}}
\def\moverlay{\mathpalette\mov@rlay}
\def\mov@rlay#1#2{\leavevmode\vtop{%
   \baselineskip\z@skip \lineskiplimit-\maxdimen
   \ialign{\hfil$\m@th#1##$\hfil\cr#2\crcr}}}
\newcommand{\charfusion}[3][\mathord]{
    #1{\ifx#1\mathop\vphantom{#2}\fi
        \mathpalette\mov@rlay{#2\cr#3}
      }
    \ifx#1\mathop\expandafter\displaylimits\fi}
\newcommand{\cupdot}{\charfusion[\mathbin]{\cup}{\cdot}}
\def\leqno{\tagsleft@false}
\def\reqno{\tagsleft@false}
\def\fleqn{\@fleqnfalse}
\def\cneqn{\@fleqnfalse}
\begin{document}
% 
% 
% \title  
% \titlerunning{ }

\fontsize{11.2}{14.3}\selectfont  

\title{{The full Ward-Takahashi identity for \\ colored tensor models}} 
% \titlerunning{The full Ward-Takahashi identity for colored tensor models }
\author{Carlos I. P\'erez-S\'anchez} 
\address{Mathematisches Institut der Westf\"alischen Wilhelms-Universit\"at, Einsteinstra\ss e 62,
48149 M\"unster, Germany
}
\email{perezsan@uni-muenster.de}

% \date{Received: 1st December 2016 / Accepted: 22 December 2017 / Online: 22 February 2018}

%  \communicated{  }

\maketitle
\begin{abstract}
  Colored tensor models (CTM) is a random geometrical approach to
  quantum gravity.  We scrutinize the structure of the connected
  correlation functions of general CTM-interactions and organize them
  by boundaries of Feynman graphs.  For rank-$D$ interactions including,
  but not restricted to, all melonic $\varphi^4$-vertices---to wit, 
  solely those quartic vertices that can lead to dominant spherical contributions in  
  the large-$N$ expansion---the aforementioned boundary graphs are shown to be
  precisely all (possibly disconnected) vertex-bipartite 
  regularly edge-$D$-colored graphs. The
  concept of CTM-compatible boundary-graph automorphism is introduced
  and an auxiliary graph calculus is developed.  With the aid of 
  these constructs, certain $\mathrm U(\infty)$-invariance
  of the path integral measure is fully exploited in order to derive a
  strong Ward-Takahashi Identity for CTMs with a symmetry-breaking
  kinetic term. For the rank-$3$ $\varphi^4$-theory, we get the
  exact integral-like equation for the 2-point function. Similarly,
  exact equations for higher multipoint functions can be readily
  obtained departing from this full Ward-Takahashi identity. Our
  results hold for some Group Field Theories as well. Altogether, our
  non-perturbative approach trades some graph theoretical methods 
  for analytical ones. We believe that these tools can be extended 
  to tensorial SYK-models. 
  \end{abstract}
% \DeclareCaptionLabelFormat{mylabel}{#1 #2.\hspace{1.5ex}}
% \captionsetup[figure]{labelformat=mylabel,labelsep=none,name=Fig.}
\captionsetup[figure]{labelfont={bf},labelformat=mylabel }
\captionsetup[table]{labelfont={bf},labelformat=mylabel ,labelsep=none ,name={Table}} 
% \keywords{Quantum field theory -- quantum gravity -- random tensor models -- discrete geometry -- edge-colored graphs.}
% ------------- 
%% -------------------------- 
\setcounter{tocdepth}{4}
\tableofcontents
\section{Introduction}

The term \textit{colored random tensor models} is a collective for
random geometries obtained from quantum field theories for tensor
fields.  Aiming at a theory of quantum gravity in dimension $D\geq 2$,
these models are machineries of weighted triangulations of piecewise
linear manifolds, the weights being defined by certain path
integrals. In that probabilistic ambit, which we shall leave soon,
what we obtain here is, crudely, recursions for the connected
correlation function $G^{(2k+2)}$ in terms of $G^{(2k)}$---the
\textit{Ward-Takahashi Identities} (WTI)--- which are a consequence of
$\mathrm U(\infty)$-symmetries in the measure of their generating
functional, that is to say the free energy $\log Z[J,\bar J]$ (see below).
 Here we do not specialize in
the construction of those measures, nor use the probability
terminology, but we adhere to the physical one (e.g. we tend to use
\textit{propagator} instead of \textit{correlation}, etc.; this does
not imply that their probabilistic meaning could not be tracked back,
though). Accordingly, we drop qualificative ``random'' and stick to
\textit{colored tensor models} (CTM).
These correlation functions reflect, as we shall prove, some of the
structure of the tensor fields.  The tensors have forbidden
symmetries, which has been deemed \textit{color}.  In the
arbitrary-dimensional setting the coloring is needed in order for the
Feynman expansion to restrict to exactly those graphs one can
associate a sensible $\Psi$-complex to \cite[Lemma
1]{Gurau:2009tw}\footnote{ Pseudosimpicial or $\Psi$-complexes allow
  simplices to have more than a common face. Moreover, ostensibly, the
  coloring of GFTs is not absolutely necessary \cite{Smerlak}, but we
  stick in this paper to colors, as they more easily permit a
  systematic identification of graphs as spaces. Later on, we discuss
  models which drop coloring or part of.}.  As a byproduct of this
coloring, these theories might have several, say $a_{k}(D)$, independent
correlation functions of the same number $2k$ of points:
$G\hp {2k}_1,\ldots, G\hp{2k}_{a_{k}(D)}$.\par  This is not a feature exclusively of 
the complex tensor models that we analyze, but it will also be present 
in the (real) tensorial SYK-models (after Sachdev-Ye-Kitaev \cite{kitaev,SachdevYe})
that have been studied lately \cite{WittenSYK,Bonzom:2017pqs}
if one considers them \textit{not} as a $0+1$ field theory (as in \cite{Gross:2017aos}), but allows
spacial degrees of freedom, e.g. as in \cite{SYKTGFT}. In this sense, the 
present article could be useful if one wants to solve the (melonic 
sector of) that theories. 
\par

The initial idea in the primitive versions of random tensor models was
to reproduce, in higher dimensions, the success of random matrices in
modelling 2D-quantum gravity \cite{Ambjorn,dFGZ}.  The consummation of
this generalization had to wait long, however, until the analogue of
the large-$N$ expansion, which, as in matrix models, is bedrock of
most physical applications, was found \cite{Nexpansion}.  For these
higher dimensional analogues of random matrices, what empowered the
$1/N$-expansion is an integer called \textit{Gur\u{a}u's degree},
which for rank-$2$ tensor models (complex matrix models), coincides
with the genus (see Def. \ref{ex:exampleribbon}).  Crucially, for
dimensions greater that two, the degree is not a topological
invariant; in particular this integer has complementary information to
homology and is able to tell apart triangulations of homeomorphic
spaces.  Being tensor models a theory of random geometry, the fact
that their large-$N$ expansion relies on a non-topological quantity is
a rather wished feature, by which the theory of random tensors gains
reliability as a properly geometric quantum gravity framework for
dimensions $D\geq 2$. \par

The \textit{Tensor Track} \cite{track_update,track3,track4}
encompasses several classes of tensor models as study objects and
synthesizes these random-geometry-foundations in a
gravity-quantization program that has as watermark to leave the core
of quantum field theory intact---whenever possible.  Rooting itself in
Wilson's approach to renormalization and functional integrals, the
novelty in the tensor track is trading the locality of interactions
for invariance under certain large unitary groups
(Sec. \ref{sec:preliminaries}).  The origins of the Tensor Track are
also amends to the renormalization of Group Field Theory (GFT).  In
\cite{track_update}, Rivasseau stated Osterwalder-Schrader-like rules
that tensor models should satisfy.  One of the principal frameworks in
the Tensor Track is precisely that of CTM.  We provide in the next
paragraph an encapsulated description of alike settings sometimes
evoked by the name ``tensor model'' and studied also in the Tensor
Track.  \par
Belonging to the clade of Group Field Theories \cite{Freidel}, colored
tensor models were propelled by Gur\u{a}u. Roughly, GFTs are scalar
field theories on $D$-fold products of compact Lie groups (see
e.g. \cite{OritiGFT} for their relation to Loop Quantum Gravity, and
\cite{RovelliReisenberger} for the origin of the group manifold in the
context of spin foams).  Their Feynman diagrams encode simplicial
complexes: fields (interpreted as $D-1$ simplices) are paired by
propagators (see Sec. \ref{cgft}).  Monomials in the interaction part
of the action $S_{\mathrm{int}}$, typically of degree $D+1$ in the
fields, are understood as $D$-simplices, so Feynman graphs are gluigs
of these.  It was in that framework where the idea of coloring, which
facilitated the large-$N$ expansion, emerged \cite{Gurau:2009tw}. Ever
since, the auspicious tensor model family has dramatically grown: GFTs
with other unitary groups like $\mathrm{SU}(2)$ \cite{su2} and,
recently, with orthogonal groups \cite{On}. Another framework, not
addressed here, but which our results might be extended to, are
\textit{multi-orientable tensor models} \cite{MO}, which have some
symmetry of the
$\mathrm{U}(N)\times \mathrm{O}(N) \times \mathrm{U}(N)$-hybrid type
(assuming rank $3$). They retain still some of the graphs forbidden by
coloring and are still treatable with the large-$N$ expansion
\cite{reviewtanasa}.  Tensor Group Field Theories is another
GFT-related setting to which actually some of our results are extended
(Section \ref{cgft} for the $\mathrm U(1)^D$-group).  Concerning
renormalization of TGFTs a good deal of results pertaining the
classification of these models, has been undertaken specially by Ben
Geloun and Rivasseau \cite{4renorm} in $D=4$, and Ousmane Samary,
Vignes-Tourneret in $D=3$. The former model (BGR), a TGFT on
$\mathrm{U}(1)^4$, is one of the prominent $4$-dimensional models
which, moreover, as its authors themselves proved, is a renormalizable
field theory to all orders in perturbation theory.  Among all its
relatives, CTM render the best-behaved spaces. We choose to
temporarily constrain to this framework because geometric notions
become more transparent there.  We also ought to show the surjectivity
of certain map having as domain the Feynman graphs of a fixed tensor
model action.  That set is meagerer in the CTM-framework, where the
result becomes then stronger.
% Also, the
% collection of Feynman graphs $\fey_D(S_{\mathrm{int}})$ for a fixed CTM-action 
% $S_{\mathrm{int}}$  is more magere,
% and since we shall need to show the surjectivity of certain map 
% with source-set $\fey_D(S_{\mathrm{int}})$, choosing to work in this
% framework makes our result stronger.  
\par 
Relying on it, the main result of the present work, the \textit{full
   Ward-Takahashi Identity} (Theorem \ref{thm:full_Ward}), is
 non-perturbative QFT for tensor models, in essence.  Historically, the WTI
 appeared in matrix models in order to show the vanishing of the
 $\beta$-function of the Grosse-Wulkenhaar model---which had been
 already accomplished by other methods at one \cite{GWbeta} and three
 loops \cite{DR}---to all orders in perturbation theory.  The ultimate
 proof \cite{DGMR}, by Disertori, Gur\u{a}u, Magnen and Rivasseau,
 still perturbatively, was based on a Ward Identity (WI) also derived
 by them there \cite[Sec. 3]{DGMR}.  Later on, Grosse and Wulkenhaar
 \cite{gw12} retook the WI for their self-dual $\phi^{\star 4}_4$-model (see
 \eqref{GW} with $\Omega=1$) to give a non-perturbative proof that \textit{any} 
 quartic matrix model has a vanishing $\beta$-function.  We adapt the
 non-perturbative matrix model approach of \cite{gw12} to colored tensor
 models.  The full WTI (Theorem \ref{thm:full_Ward}) is proven for an
 arbitrary rank and for absolutely general CTM-interactions.  It holds
 for $\mathrm U(1)$-Group Field Theories (GFTs) as well, by
 Fourier-transforming them.  
  
\noindent
\textbf{\\ The strategy}.  We closely follow the treatment given in
\cite[Sec. 2]{gw12} to the Grosse-Wulkenhaar $({ \Omega=1 })$-model, a
$\phi^{\star 4}$-theory in Moyal $(\re^4,\star)$ which becomes, in the Moyal
matrix basis, a matrix model \cite[Sec. 2]{gw:matrixbase04}. In
\cite{gw12} the Ward identities are used to decouple the tower of
Schwinger-Dyson equations (SDE), which results in an integro-differential
equation for the two-point function, in terms of which, via algebraic
recursions, the theory can be solved, i.e. all $2k$-point-functions
(which are the non-vanishing ones) are thus determined.  Simply
stated, the strategy can be split in two tasks. First, to expand the
free energy $W[J,\bar J]=\log Z[J,\bJ]\sim \sum_{\mathbf p}
\sum_{\partial \mathcal F} (1/\sigma(\partial \mathcal F)) G_{\partial \mathcal F}(\mathbf p)
\cdot \partial \mathcal F (J,\bar J)(\mathbf p)$ in boundaries $\partial \mathcal F$ of
Feynman graphs $ \mathcal F$ of a specific model, with source-variables $J$ and
$\bar J$ and $\mathbf p$ being momenta and $\sigma(\partial\mathcal F)$ a symmetry factor.  
For matrix models this approach has an astonishing result and needs, in our setting, mainly three steps:
 \begin{itemize}
 \item Finding the right symmetry factors $\sigma(\partial  \mathcal F)$,
 which in turn requires
 the CTM-compatible concept of \textit{automorphism of colored graphs}. 
 This new concept, contrary to the existent in the literature of 
 graph encoded manifolds, precisely exhibits compatibility with the CTM-structure 
 (see Sec. \ref{sec:automorphisms}). Automorphism 
 groups are also computed. 
\item \textit{Non-triviality}.  Since $W[J,\bar
  J]=\log Z[J,\bJ]$ cancels out the disconnected Feynman graphs, one
  has to construct \textit{connected} Feynman graphs with possibly
  disconnected, arbitrary boundary graph $\B$.  This would ensure that
  each introduced correlation function $G_\B\hp n$ describes indeed
  a process in the model under study.  We develop first, in Section
  \ref{sec:graphtheory}, an operation introduced in \cite{cips} for
  rank $2$ and interpreted there as the connected sum, and take it
  further to arbitrary rank $D$. This operation sends two 
  Feynman graphs of a fixed model to a Feynman graph of the same model
  (Prop. \ref{thm:suma_grado}). Furthermore, the divergence degree
  that controls the large-$N$ expansion behaves additively with
  respect to it (Prop. \ref{thm:suma_grado}).
 
\item \textit{Completeness}.  The exact set of boundary graphs is
  expected to be model-dependent.  We determine it for
  \textit{quartic} (for $D\geq 4$ \textit{quartic melonic})
  interactions and show that it is the \textit{whole} set of
  $D$-colored graphs (see Section \ref{sec:completeness}).
 \end{itemize}
 The second task is to actually derive the WTI from these constructs.
 In order to be able to read off from $W$ any correlation function, a
 graph calculus is developed in Section \ref{sec:graphcalculus}.
 \\
 \noindent
 \\
\textbf{The results}.  
For tensor models, a version of the WTI was obtained in  
\cite{DineWard}, with emphasis on ranks $3$ and $4$.  Here we go a different,
considerably longer way that has the following advantages:
\begin{itemize}
\item it is a \textit{non-perturbative} treatment.  This approach
  shows a way out of treating single Feynman graphs in tensor models
  and proposes analytic methods instead.  We prove that the
  correlation functions are indexed by boundary graphs, though, so
  graph theory cannot be fully circumvented.
 \item  it exhibits the intricate, so far unknown structure of
   the Green's functions. 
      That the structure of the boundary sector of 
   \textit{single} models had not been studied underlies this shortcoming. 
   Green's functions are indexed by 
   all boundary graphs; for quartic interactions,
   namely by all $D$-colored graphs. Using \cite{counting_invariants} (see eqs. 
   \eqref{eq:invariants3} and \eqref{eq:invariants4} below) there are then
   in rank-$3$, four $4$-point functions, 
   eight $6$-point functions; for $D=4$, eight
   $4$-point, forty nine $6$-point functions and so on. 
 
 \item it is the \textit{full} WTI. Roughly speaking, the
   Ward-Takahashi identities contain a skew-symmetric tensor $E_{mn}$
   times a double derivative on the partition function. This double
   derivative splits in a part proportional to $\delta_{mn}$, which is
   annihilated by $E_{mn}$, and the rest. The existing WTI in
   \cite{DineWard} does not contain the former term.  It was enough
   for successfully treating a ``melonic-approximation'' \cite{us} and
   writing down a closed integro-differential equations for the
   lower-order correlation functions. Our aim, on the other hand, is
   the full theory. Accordingly, we compute here \textit{all terms}:
   non-planar contributions, in the matrix case, and non-melonic terms
   ---their tensor-model counterpart--- are all recovered.
 %
 % (see eq. \cite{diff_prop})
 %
 \end{itemize}  
 \begin{figure}\centering $\qquad$
 \includegraphics[height=3.1cm]{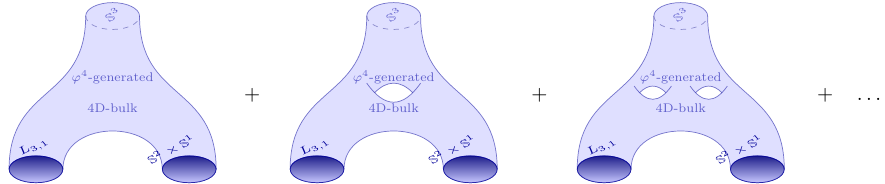}
\caption{Geometric picture of the expansion of a 
concrete Green's function in Gur\u au's degree for a particular 
correlation function. Gur\u au's degree is depicted by a handle 
(it is not a topological invariant, though)\label{fig:expgurau}
}
\end{figure}
After succinctly introducing the general setting of CTMs in next
section, we recap in Section \ref{sec:graphtheory} the main graph
theory of colored tensor models\footnote{A much more thorough
  exposition is given in \cite{cips} (keeping a very similar
  notation).} but adding some new definitions and results useful in
order to find, in Section \ref{sec:completeness}, the boundary sector
of quartic theories. This has a twofold application. On the one hand
it is basis for the expansion of the free energy in boundary graphs
(Section \ref{sec:ciclos}) which we use in Section \ref{sec:WI} to
obtain the full WTI.  On the other hand, it is useful in finding the
spectrum of manifolds that a specific CTM is able to generate.  We
offer some non-sphere examples of prime factors graphs generated by
boundaries of quartic CTMs---here a lens space and
$\mathbb S^2\times \mathbb S^1$.  Section
\ref{sec:geom_interpretation} serves to emphasize this and following
aspect about the results of Section \ref{sec:completeness}: If $\B$ is
a graph with $n$ vertices representing a manifold $M$, then the
multi-point function $G\hp{n}_\B$ is expected to have geometrical
information about all compact, oriented $4$-manifolds bounded by $M$.
In this bordism picture---here including the vacuum graphs to the picture,
for which $M$ is empty---some manifolds cannot be obtained 
from tensor model Feynman graphs, independently of the particular model, e.g. from the onset, 
Freedman's $E_8$ manifold cannot appear \cite{freedman1982}. 
Notice that in dimension $4$, the categories of topological and PL-manifolds (PL$_4$) are not equivalent,
so manifolds with non-trivial Kirby-Siebenmann class \cite{KirbySiebenmann}  cannot be tensor model graphs. 
Nevertheless\footnote{I thank the referee for the comments concerning the $4$-dimensional case}, the PL$_4$ category is the same as the category of smooth $4$-manifolds
\cite{simple4PL}. Therefore, in dimension $4$, tensor models
still can in principle access all smooth structures, and which of them are obtained,
is model dependent. (It is likely that the model given by the four ``pillow-like'' invariants 
in $D=4 $ colors, what we here call the $\phi^4_{4,\mathsf{m}}$-theory, suffices to generate them all.)  \par 
Each Green's function can be expanded in subsectors determined by
common value of Gur\u au's degree $\omega$, symbolically represented
as in Figure \ref{fig:expgurau} for
$M= L_{3,1}\sqcup (\mathbb S^1\times \mathbb S^2) \sqcup \mathbb S^3$
(see ex. \ref{ex:4bordism}). That expansion, as in the matrix case,
can lead to closed integro-differential equations for sectors such
sectors.  In particular, this paper provides techniques to find
integro-differential equations that these Green's functions obey.

  \normalsize

%% -------------------------- 
%% -------------contents

%% -------------------------- 
%% -------------------------- 
%% -------------------------- 
\section{Colored tensors models} \label{sec:preliminaries}  
The next setting describes a theory that works in certain high-energy
scale $\La$. With that resolution, an ordinary scalar vertex shows more
structure. For instance, this one:
\begin{equation}
\tikz[scale=0.7,  
    baseline=4ex,shorten >=.1pt,node distance=18mm, 
     auto,
    every state/.style={draw=black,inner sep=.4mm,text=black,minimum size=0},
]
{
\coordinate (O) at (0,0);
\foreach \ang in {45,135,-135,-45}
\draw (\ang:1.5) -- (O);
\node at (-6,0) {$\phi^4$-interaction at the energy scale $\La_0$};
;
\node[state, minimum size=0,circle, draw,fill=black!50] at (O) {};
\begin{scope}[shift={(3,0)}]
\end{scope}
\draw[dashed,gray] (O) -- (-30:4);
\draw[dashed,gray] (O) -- (29:4);
\node[state, minimum size=0,circle, draw,fill=black!50] at (O) {};
}\hspace*{-.992cm}
\raisebox{-0.7182\height}{\includegraphics[width=3.95cm]{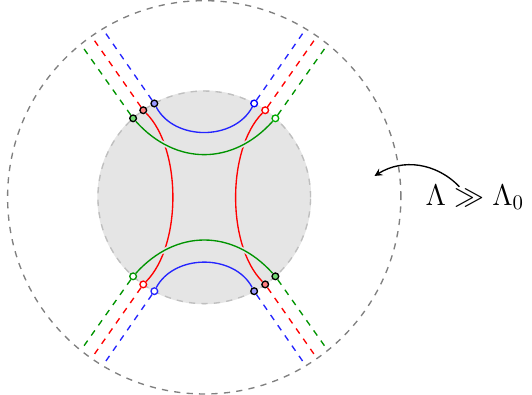}}
\label{general_three}
\end{equation}
At the energy scale $\Lambda$ there is a $\mathrm{U}(N_1\cdots
N_D)$-symmetry that is broken into
$\mathrm{U}(N_1)\times\cdots\times \mathrm{U}(N_D)$ giving 
rise to more invariants.  One postulates tensor fields $\phi$ and
$\bar\phi$ that transform independently under each unitary group
factor. Concretely, being $\H_c$ Hilbert spaces, usually
$\ell_2([1,N])$, but also $\ell_2([-N,N])$, our fields are tensors
$\phi,\bar\phi: \H_1 \otimes \H_2 \otimes \ldots \otimes \H_D\to \C$
that transform like
\begin{align*}
\phi_{a_1a_2\ldots a_D}\mapsto \phi'_{a_1a_2\ldots a_D}&=
\sum_{b_k}  W^{(k)}_{a_kb_k}
\phi_{a_1a_2 \ldots b_k \ldots a_D}\,,  
  \\
%   \quad
\bar\phi_{p_1p_2\ldots p_D}\mapsto\bar \phi'_{p_1p_2\ldots p_D}&= 
\sum_{q_k}
\overline W^{(k)}_{p_kq_k}
\bar \phi_{p_1p_2\ldots q_k \ldots p_D}\,,
\end{align*}
for every $W^{(k)}\in \mathrm U(N_k)$ and for each one of the
so-called \textit{colors} $k=1,\ldots, D$. Here, the rank of the
tensors, $D\geq 2$, is the dimension of the random geometry we want to
generate. For sake of simplicity, one sets $N_k=N$, for each color
$k$, but one insists in distinguishing each factor of the group
$\mathrm{U}(N)^{ D}$. Each such factor acts independently on a single
index of both $\phi$ and $\bar \phi$, which is refereed to as
\textit{tensor-coloring.}  The energy scale $\Lambda$ can be seen as
(a monotone increasing function of) this large integer $N$.
Symbolically we write the indices of each tensor in $\Z^D$, but one
should think of it as a cutoff-lattice $(\Z_N)^D$. \par

The classical action functional is build from a selection of connected
$\mathrm{U}(N)^{\otimes D}$-invariants, which are given by traces
$\{\Tr_{\mathcal{ B}_\alpha}(\phi,\bar\phi)\}_{\al}$ indexed by
regularly $D$-edge colored, vertex-bipartite graphs. We shorten this
term simply to $D$-\textit{colored graphs} (see
Sec. \ref{sec:graphtheory} for details). There is, in any rank, only
one quadratic invariant,
$\Tr_2(\phi,\bar\phi)=\sum_{\mathbf a\in \Z^D} \bar\phi_{\mathbf a}
\phi_{\mathbf a}$,
which is, as always, understood as the kinetic part.  Higher order
invariants as
\begin{align}\label{k33}
  \Tr_{K_{\mathrm{c}}{(3,3)}}(\phi,\bar\phi) &= \sum_{\mathbf{a,b,c,p,q,r}}    
                                               (\bar \phi_{r_1r_2r_3}\bar\phi_{q_1q_2q_3} \bar\phi_{p_1p_2p_3}) \cdot &\\
  \nonumber
                                             && \hspace{-5cm}  (\delta_{a_1p_1} \delta_{a_2r_2} \delta_{a_3q_3}  
                                                \delta_{b_1q_1} \delta_{b_2p_2} \delta_{b_3r_3}
                                                \delta_{c_1r_1} \delta_{c_2q_2} \delta_{c_3p_3}) \cdot 
                                                (\phi_{a_1a_2a_3} \phi_{b_1b_2b_3}  \phi_{c_1c_2c_3})\,,
\end{align} 
 are the \textit{interaction vertices\footnote{ Due to the common
     occurrence of the word \textit{vertex} both by field theory and
     graph theory, we cannot opt, unfortunately, for a concise
     terminology.},} in this rank-$3$ example $\Tr_{K_{\mathrm{c}}{(3,3)}}(\phi,\bar\phi) $ being
  of sixth degree, and
 the sum being carried over \textit{momenta} $\mathbf{a,\ldots,r}\in \Z^{
   3}$. The $D$-colored graph $\B$ that indexes a generic interaction
 vertex $\Tr_{\B}$ is obtained by the prescription in Table
 \ref{equivalencias}.
\begin{table}\centering
 \includegraphics[width=.80\textwidth]{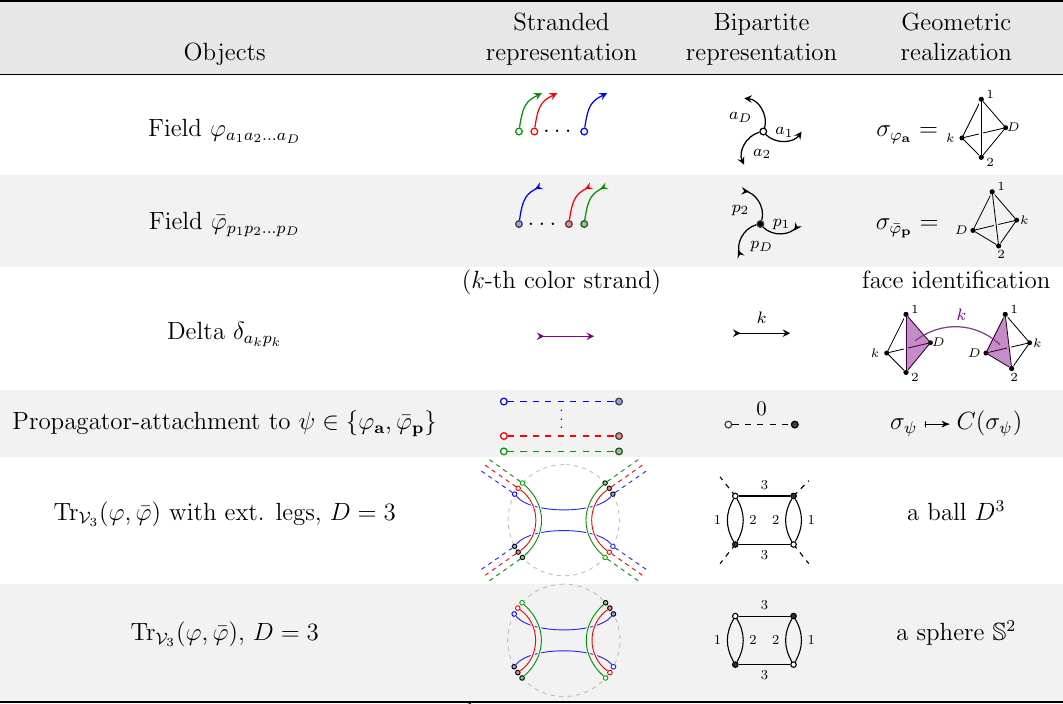}
 \caption{\label{equivalencias}  A dictionary between two
   equivalent representations of graphs and their associated geometric realization is shown. A
   more detailed construction is exposed in Section
   \ref{sec:geom_interpretation}.  Here $C$ denotes the cone of a
   simplex  }
\end{table}
Thus, for instance in $D=3$ colors,
the colored graph $K_{\mathrm c}(3,3)$ that indexes the 
interaction vertex \eqref{k33} is 
\begin{equation} \label{k33stranded}
 \centering
  \raisebox{-0.5\height}{\includegraphics[height=3.3cm]{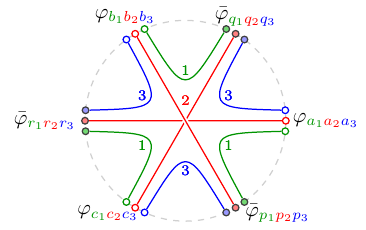}}
\end{equation}
This somehow obsolete notation is the so-called \textit{stranded
  representation}. We shall now use an equivalent, simpler notation of
these graphs: the \textit{bipartite representation}. This transition
is summarized in Table \ref{equivalencias} and allows a connection
with the graph theoretical representation of piecewise-linear
manifolds \cite{survey_cryst}, as we explain later in Section
\ref{sec:geom_interpretation}, which is the main link to the geometry
of CTMs. However, the graphs one actually associates a
(pseudo)manifold-meaning to arise in the Feynman expansion of
\begin{equation*} 
 Z[J,\bar J]=\frac{\int\Df [\phi, \bar\phi] \,\ee^{\Tr{(\bar J\phi)}+
\Tr{(\bar\phi J)}-N^{D-1}S[\phi,\bar\phi]}}
 {\int\Df [\phi, \bar\phi]\, \ee^{-N^{D-1}S[\phi,\bar\phi]}}\,,\!\!\!\!\!\!\where
\Df [\phi, \bar\phi]:=\!\!\!\prod\limits_{\,\,\,\,\mathbf {a} \in
\Z^D} \!\!\! N^{D-1}\frac{\dif\varphi_{\mathbf a} 
\dif\bar\phi_{\mathbf{a}}}{2\pi \ii}\ee^{-\Tr_{2}(\varphi,\bar\phi)},
\end{equation*}
and have one extra edge between any pair of Wick-contracted fields.
Associated to these Wick's contractions is the \textit{$0$-color},
drawn always dashed (or in the stranded representation, $D$ parallel
lines as reads in Table \ref{equivalencias}) and the graph one remains
with turns out to be $(D+1)$-colored (open or closed) as explained in
the next example.
\begin{example} \label{ex:inicial}
 We will study a particular model: \textit{the
   $(\phi_{D=3}^4)$-theory}.  Its interaction vertices are
\begin{align} 
\label{vertices}
  \mathcal V_1  &=\lambda \cdot \,\,  \raisebox{-.46\height}{\includegraphics[width=1.68cm]{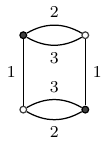}},   
  & & \mathcal{V}_2 \hspace{- .63cm} 
  &= \lambda\cdot \,\, \raisebox{-.46\height}{\includegraphics[width=1.68cm]{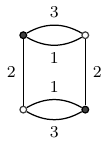}}\, , & & \mathcal{V}_3 
  &= \lambda \cdot \,\,\raisebox{-.46\height}{\includegraphics[width=1.68cm]{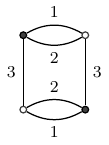}}
    \,.
\end{align} 
We have chosen directly the bipartite representation but, in order to
clarify the switch of notations explained in Table
\ref{equivalencias}, we consider one of the $\mathcal
O(\lambda^2)$-vacuum-graph contributions to the integral 
$\int
\Df[\phi,\bar \phi] \exp{(-S_0)} (\Tr_{\V_3}(\phi,\bar\phi)
\Tr_{\V_1}(\phi,\bar\phi)) $, given in Figure
\ref{fig:feynd3}.  It will be seen thereafter that this graph is a
(pseudo)simplicial complex that triangulates the sphere $\mathbb S^3$
with eight $3$-simplices.
\begin{figure}\centering\includegraphics[width=.72 \textwidth]{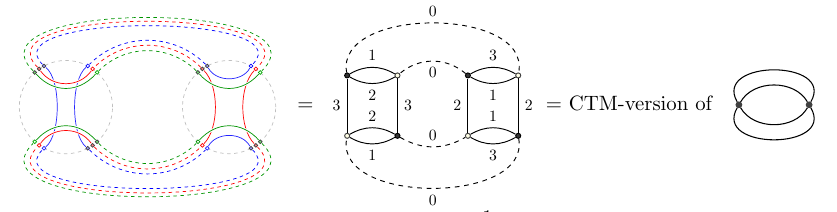} 
\caption{\label{fig:feynd3} Change of notations. Forgetting the tensor
  structure these both are an enriched version of the leftmost Feynman
  diagram in an ordinary scalar theory }
\end{figure}
\end{example}

\begin{remark}\label{rem:SA}
  Tensor field theory also has propagators that break the invariance
  in the action, in this case under the unitary groups.  It is
  therefore sensible to consider a slightly modified trace with a
  symmetry-breaking term $E$ in the quadratic term:
%   \begin{equation}
  $S[\phi,\bar\phi]=\Tr_2{(\bar \phi,E\phi)}+
    \sum_\alpha\Tr_{\mathcal{
        B}_\alpha}(\phi,\bar\phi)$, 
%         , \label{generalthree} \end{equation}
  with $E:\H_1 \otimes \H_2\otimes \H_3\to
  \H_1 \otimes \H_2\otimes \H_3$ \textit{self-adjoint},
  $\Tr_2(\bar\phi,E\phi)=\Tr_2(\overline{E\phi},\phi)$.  The first term is
  distinguished, and represents the
  kinetic part of the action, where $E$ could be interpreted as the
  Laplacian.\end{remark}

\fontsize{11.2}{14.3}\selectfont  

\section{Colored graph theory}\label{sec:graphtheory}
In this section we intersperse examples aimed at explaining a series
of definitions that concern the CTM-graphs. Each Feynman graph will be
taken connected, but boundary graphs of these need not to be so,
whence the occurrence of the disconnected graphs in our definitions.

\begin{definition}
  A $D$-\textit{colored graph} is a finite graph
  $\G=(\G\hp 0,\G\hp 1)$ that is vertex-bipartite and regularly
  edge-$D$-colored in the following sense:
\begin{itemize}
\item the vertex-set of $\G$, denoted by $\G\hp 0$, is composed by
  \textit{black} $\G\hp 0\bl$ and \textit{white} vertices
  $\G\hp 0 \wh$: $\G\hp 0= \G\hp 0\wh\cupdot\G\hp 0\bl$,
\item   any edge $e\in \G\hp 1$ is attached to precisely one white
  vertex $a$ and one black one $w$, which we denote by $t(e)=a, s(e)=w$
  or, alternatively, $e=\overline{aw}$ (thus the number 
  of white and black vertices is the same; loops are forbidden),  
\item the edge set is regularly $D$-colored, i.e.
  $\G\hp 1=\cupdot_{k=1}^D\G\hp 1_k$, where $\G_k\hp 1$ are the
  \textit{color-$k$ edges}. Moreover at each vertex there are $D$
  differently colored incident edges.
\end{itemize}
We write $\Grph{D}$ for the set of all \textit{connected} $D$-colored
graphs and $\amalg\Grph{D}$ for the set of (possibly) disconnected
graphs with finite number of connected components. Each connected
component of the subgraph of $\G$ with edges colored by a subset
$I=\{i_1,\ldots,i_q\}\subset \{1,\ldots,D\}$ of cardinality $q$ is
called $q$-\textit{bubble}.  On top of the edge-color set $I$, one
needs a vertex $v$ or edge $e$ of $\G$ that specifies the connected
component. The notation for a bubble is therefore $\G^I_{v}$,
$\G^I_{e}$ or, if specifying the colors that do not appear is easier,
$I=\{1,\ldots,D\} \setminus \{c_1,\ldots,c_r\}$, say,
$\G^{\hat c_1,\ldots, \hat c_r}_v$.  We write $\G\hp q$ for the set of
$q$-bubbles of $\G$; in particular, $\G\hp 2$ is the set of the
\textit{faces} of $\G$. \end{definition}

Graphs in either set $\Grph{D}$ or $\amalg \Grph D$ are said to be
\textit{closed}, in contrast to:

\begin{definition}
  A graph $\G$ is an \textit{open} $(D+1)$-colored graph if, first,
  its vertex-set is bipartite in the sense of (i) and (ii) below and
  if the edge set $\G^{(1)}=\cupdot_{c=0}^D\G^{(1)}_c$ is
  quasi-regularly (up to the $0$ color) $(D+1)$-\textit{colored} in
  the sense of (a) and (b):
\begin{itemize}
\item[(i)] the vertex-set is bipartite,
  $\G^{(0)}=\G^{(0)}_{\mathrm{w}\vphantom{b}}\cupdot\G^{(0)}_{\mathrm{b}}$,
  where $\G^{(0)}_{\mathrm{w\vphantom{b}}}$ are the \textit{white},
  and $\G^{(0)}_{\mathrm{b}}$ the \textit{black} vertices, and any
  edge $e$ is adjacent to precisely one vertex in $\G\hp 0\bl$ and a
  vertex in $\G\hp 0\wh$. Therefore one has the same number of black
  and white vertices,
\item[(ii)] any vertex is either \textit{internal} or \textit{external},
  $\G^{(0)}=\G^{(0)}_{\mathrm{inn}}\cupdot\G^{(0)}_{\mathrm{out}}$;
  moreover, the set $\G^{(0)}_{\mathrm{inn}}$ of internal vertices is
  regular with valence $D+1$ and external vertices have valence $1$,
\end{itemize} 
and, denoting by $ \G^{(1)}_c$ the edge-set of color-$c$:
\begin{itemize}
 \item[(a)] for each color $c$ and each inner vertex  
   $v\in\G^{(0)}_{\mathrm{inn}}$, there is exactly one color-$c$
   edge, $e\in \G^{(1)}_c$, attached to $v$,
 \item[(b)] external vertices $v \in\G_{\mathrm{out}}\hp 0$ are
   attached only to color-$0$ edges. We call both a vertex in
   $\G_{\mathrm{out}}\hp 0$ and the edge attached to it an
   \textit{external leg.}
\end{itemize}
As is common in QFT, we sometimes drop the external vertices and keep
only the external (in this case $0$-colored) edge.  Given an open
graph $\G$ one can extract a (in general non-regularly) colored graph
$\mathrm{inn}(\G)$ defined by
% \begin{equation}
 $\mathrm{inn}(\G)\hp 0 = \G\hp 0_{\mathrm{inn}}$  and $ 
 \mathrm{inn}(\G)\hp 1 = \G\hp 1  \setminus \{\mbox{external legs of }\G\}$.
%  \,.
% \end{equation}
 The graph $\mathrm{inn}(\G)$ is called \textit{amputated} graph.  For
 any $p\in \Z_{\geq 0}$, we set
\begin{equation}
 \Grph{D+1}\hp {2p}:= \big\{ \G \mbox{ open }
 (D+1) \label{eqn:legs} \mbox{-colored }\big|\,\, \#\big(\G\hp
 0_{\mathrm{out}}\big)=2p \big\}\,.
\end{equation}
The factor $2$ arises from vertex-bipartiteness.  Here for $p=0$, of
course $\Grph{D+1}\hp {0}:=\Grph{D+1}$.
\end{definition}

\begin{example}\label{ex:colgraph} 
In this example $K_{\mtr c}(3,3)$ is the colored utility graph, which,  
tangentially, is the ``bipartite'' version of the stranded
representation of \eqref{k33stranded}. One has:
\[\label{eq:k33BW}
 K_{\mtr c}(3,3)=\raisebox{-.46\height}{
  \includegraphics[height=1.7cm]{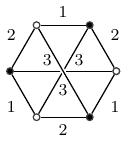}} \in \Grph{3}
  \quad \mbox{ and } \quad
  \raisebox{-.46\height}{
  \includegraphics[height=1.7cm]{gfx/k33BandW_labels_weiss.pdf}}  \sqcup 
  \raisebox{-.46\height}{
  \includegraphics[height=1.5cm]{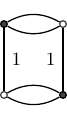}}  \in \amalg\Grph{3}. 
\]

\end{example}
\begin{example}\label{ex:open_graph}
The  graph $\mathcal K$ below is open and lies in $\Grph{3+1}\hp 6$. 
We depict also its amputation, $\mathrm{inn}(\mathcal K)$:
\begin{equation*} 
\mathcal{ K}= \raisebox{-0.485\height}{\includegraphics[height=4.2cm]{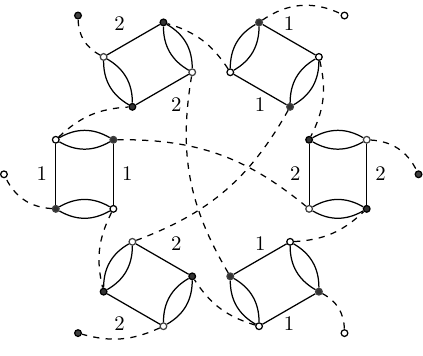}} 
\qquad 
\mathrm{inn}(\mathcal K) = \raisebox{-0.48\height}{
\includegraphics[height=4.0cm]{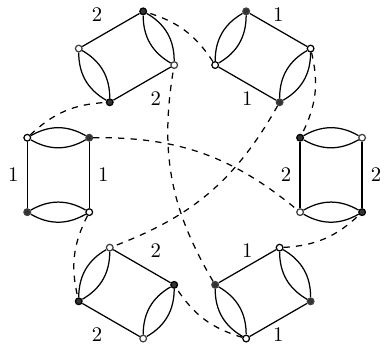}} 
\end{equation*} 
\end{example}

\begin{definition}
The \textit{boundary graph} $\partial \G$ of a $(D+1)$-colored graph $\G\in \Grph{D+1}\hp{2k}$
 defined by:
\begin{itemize}
 \item  its vertex set is $(\partial\G)^{(0)}=\G_{\mathrm{out}}$,
 inheriting the bipartiteness of the $\G_{\mathrm{out}}$.
\item the edge set $(\partial\G)^{(1)}$ is partitioned by colors
  $k\in\{1,\ldots,D\}$. For each color $k$, one sets
  $(\partial\G)^{(1)}_k:=\{(0k)\mbox{-colored paths in }\G\}$.  The
  incidence relations are given by the following rule: a white vertex
  $a\in (\partial\G)^{(0)}\wh$ is connected to a black vertex
  $x\in(\partial\G)^{(0)}\bl $ by a $k$-colored edge
  $e_k\in \partial \G\hp 1_k$ if and only if there is a
  $(0k)$-bicolored path in $\G$ between the external vertices $a$ and
  $x$.
\end{itemize}
\end{definition}
One can easily see that $\partial \G \in \amalg \Grph{D}$ by 
identifying $(0i)$-bicolored edges with $i$-colored edges, for $i=1,\dots,D$.
\begin{example}
The next graph is the \textit{cone} of $K_{\mtr c}(3,3)$,
\[ 
C(K_{\mtr c}(3,3)) =\raisebox{-0.45\height}{\includegraphics[height=2.3cm]{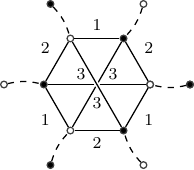}}
\]
The boundary of $C(K_{\mtr c}(3,3))$ is obviously $K_{\mtr c}(3,3)$
itself (ex. \ref{ex:colgraph}).  In our construct, it will be
important to be able to generate arbitrary graphs
$\B\in \amalg\Grph{D}$ as boundaries of a certain theory with
\textit{fixed} interaction vertices. Then, generating them by
\textit{coning} $\B$ ---that is, by adding an external color-$0$ leg
to each vertex of $\B$--- is not an option, for one would need to add
to the classical action the interaction vertex given by the connected
components of $\mathrm{inn}(C\B)\in \amalg \Grph D$ (and thereby
additional coupling constants should in principle be measured).  The
boundary graph $\partial \K$ of $\K$ in example \ref{ex:open_graph} is
$K_{\mtr c}(3,3)$.  This is the ``right'' type of graph for us,
e.g. obtained solely from a $\phi^4$-theory.
\end{example}

\begin{definition}\label{ex:exampleribbon}
Given a graph $\G\in\Grph{D+1}$, each cycle $\sigma \in \mathfrak{S}_{D+1} $ 
a ribbon graph $\mathcal J_\sigma$ called \textit{jacket}, which 
is specified by:
\begin{align*}
\mathcal J_\sigma \hp 0 = \G\hp 0 , \quad
\mathcal J_\sigma\hp 1 = \G\hp 1, \quad 
\mathcal J_\sigma\hp 2 = \{ f\in \G\hp 2 : 
f \mbox{ has colors }  \sigma^q(0)\mbox{ and }\sigma^{q+1}(0) ,  q\in\Z\}.
\end{align*}
Here $\sigma^q(0)$ is the $q$-fold application of $\sigma$ to $0$.
Obviously $\sigma$ and $\sigma\inv$ lead to the same jacket.  Moreover
each jacket, being a ribbon graph, has a genus \cite{cips} and the sum
of the genera of the $D!/2$ jackets of $\G$ is called \textit{Gur\u au's degree} 
and denoted by $\omega(\G)$. If a graph has a vanishing
degree, it is called \textit{melon}.  For $D=2$ then Gur\u au's degree
is the genus of the graph, as the only jacket is the graph itself;
melons in rank-$2$ are planar ribbon graphs. In any degree, melons
triangulate spheres \cite{GurauRyan}.
\end{definition}

\begin{example}\label{ex:jackets} 
  The necklace graph $\mathcal N$ defined by eq. \eqref{eq:necklace}
  has two spherical jackets $\mathcal J_{(1234)}$ and
  $\mathcal J_{(1423)}$ and a toric jacket $\mathcal J_{(1324)}$ (see
  \cite{cips} for the full computation).  Jackets are the
  graph-version of surfaces corresponding to Heegaard splittings
  \cite{GurauRyan}.  Hence the geometric realization of $\mathcal N$
  has a genus-$0$ Heegaard splitting and is therefore a sphere.  Also
  $\mathcal J_{(1324)}$ in $\G$ is the ``Clifford torus'' $\T^2$ in
  $\mathbb S^3$.
\end{example}

One way to
determine Gur\u au's degree \cite[App. A, Prop 1]{uncoloring} of a 
graph $\G\in\Grph{D+1} $ is to count its faces 
$\G\hp2$ and to use the formula 
\begin{equation} \label{eq:facecounting}
|\G\hp 2|= \frac12{D \choose 2}\cdot {|\G\hp 0|} +D-\frac{2\omega(\G)}{(D-1)!} \,.
\end{equation} 
The relevance of this integer relies in the analytic control it gives
to the theory of random tensor models. Here, the amplitude $\A(\G)$ of
Feynman graphs $\G$ in CTMs has the following behavior
% \begin{equation}
 $\A(\G) \sim N^{D-\frac{2 \omega(\G)}{(D-1)!}} \,.$
%  \label{weightN}
% \end{equation}
 
\begin{definition}\label{def:general_model}
  A \textit{colored tensor model} $V(\phi,\bar\phi)_D$ is determined
  by three items. First, an integer $D \geq 2$, called
  \textit{dimension} of the model. This integer $D$ is the rank the
  tensors. Secondly, by an \textit{action}
  \[ V(\phi,\bar\phi)_D= \sum_{\B\, \in\, \Omega} \lambda_\B
  \,\Tr_\B(\phi,\bar \phi)\,,\]
  where $\Omega\subset \Grph{D}$, $|\Omega|<\infty $, and
  $\lambda_\B\in \re$.  Finally, by a \textit{kinetic term}
  $E:\bigotimes_{c=1}^D \H_c \to \bigotimes_{c=1}^D \H_c$ that is
  self-adjoint in the sense of
  $\Tr_2(\overline{E\phi},\phi)=\Tr_2(\bar\phi,E\phi)$.  Usually terms
  $E\neq 1$ are employed to make connection with GFTs and TGFTs, as
  the Laplacian boils down to such a term. We will often obviate $E$
  and specify the model only by the potential. The set of
  \textit{(connected) Feynman diagrams} of the \textit{model}
  $V(\phi,\bar\phi)_D$ is denoted by $\fey_D (V)$ and satisfies
\[
\fey_D (V)= \big\{ \G \in \cupdot _{k=0}^\infty 
\Grph{D+1}\hp{2k}\,\big|\, \mathrm{inn}(\G)^{\hat 0} 
\in \Omega \mbox{ and }  (\mathrm{inn}(\G))_0\hp 1 \neq \emptyset  \big\} \,.
\]
The graphs in $\Grph{D+1}\hp 0 \cap \fey_D(V)$ are called \textit{vacuum graphs}
of the model $V$. We are interested in honest Feynman graphs,
that is, those having internal propagators
(in other words, those that are not 
the cone of an interaction vertex). This explains the mysterious
restriction $(\mathrm{inn}(\G))_0\hp 1 \neq \emptyset$.
\end{definition}

\begin{definition}\label{thm:surg_ribb}
  Let $\Rb$ and $\mathcal Q$ be connected $(D+1)$-colored graphs,
  $\Rb \in \Grph{D+1}\hp{2k}$ and $ \mathcal Q\in\Grph{D+1}\hp{2l}$.
  Let $k$ be any color and let $e$ and $f$ be color-$k$ edges in $\Rb$
  and $\mtc Q$, respectively, i.e.  $e\in \Rb\hp 1_k$ and
  $f\in \mathcal Q_k\hp 1$.  We define the graph
  $\Rb \tensor*[_e]{\#}{_{\!f}} \mtc Q $ as follows:
\begin{align*}
(\Rb  \tensor*[_e]{\#}{_{\!f}} \mtc Q) \hp 0 & = \Rb\hp 0\cup \mtc Q\hp 0,\\
(\Rb  \tensor*[_e]{\#}{_{\!f}} \mtc Q) \hp 1 & = 
(\Rb\hp 1 \setminus \{e\})\cup (\mtc Q \hp 1 \setminus\{ f\}) \cup \{ E,F\},
\end{align*}
being $E$ and $F$ new $k$-colored edges defined by $s(E)=s(e)$,
$t(E)=t(f)$ and $s(F)=s(f)$, $t(F)=t(e)$ (see Figure
\ref{fig:ST}). Otherwise, the incidence relations and coloring are
inherited from those of $\mathcal R$ and $\mathcal Q$.  This implies
that $\Rb \tensor*[_e]{\#}{_{\!f}} \mtc Q$ is a connected graph in
$\Grph{D}\hp{2l+2l}$. \end{definition}

It is obvious that if one chooses only color-$0$ edges $e$ and $f$,
one can restrict $\#$ to a well-defined binary operation on the set of
Feynman graphs,
\[ \tensor*[_e]{\#}{_{\!f}}
: \fey_D(V)\times \fey_D(V)\to \fey_D(V),
\]
 for arbitrary rank-$D$ colored (complex) tensor model $V(\phi,\bar\phi)$.

\begin{figure}\centering
\includegraphics[height=3.0cm]{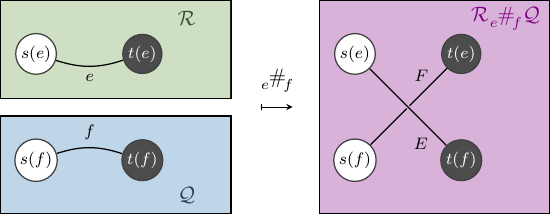} 
\caption{ On the definition of $\#$. Here $s$ and $t$ are source and target,
respectively  \normalsize \label{fig:ST}
}
\end{figure}
This operation $\#$ was defined in \cite{cips} for $3$-colored graphs
that are Feynman diagrams of rank-$2$ tensor models. It is
straightforward to check that $\#$ is associative. The notation is due
to the fact that on $\Grph{3}\times \Grph{3}$, $\#$ is the
graph-theoretical connected sum.  We use it now in higher dimensions,
but for $D\geq 3$, we (still) do not interpret $\#$ as connected sum.
We have, nevertheless the following result, which for $D=2$ has been
proven in \cite[Lemma 3]{cips}.

\begin{proposition} \label{thm:suma_grado}
For arbitrary edges $e\in \G_c\hp 1$, $f\in\K_c\hp 1$ of any color $c$, 
the operation $\tensor*[_e]{\#}{_{\!f}}$ behaves additively with respect to 
Gur\u{a}u's degree i.e. 
$\omega(\G \tensor*[_e]{\#}{_{\!f}} \K) =\omega(\G)+\omega(\K)$, for 
any graph $\G,\K \in \Grph{D+1}$.
\end{proposition}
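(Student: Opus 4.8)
The cleanest route is via the face-counting formula \eqref{eq:facecounting}, which expresses $\omega(\G)$ linearly in terms of $|\G\hp 2|$ and $|\G\hp 0|$. So I would first establish how vertices and faces behave under $\tensor*[_e]{\#}{_{\!f}}$, and then just plug into that formula. For the vertices this is immediate from the definition: $(\G\tensor*[_e]{\#}{_{\!f}}\K)\hp 0 = \G\hp 0 \cupdot \K\hp 0$, so $|(\G\tensor*[_e]{\#}{_{\!f}}\K)\hp 0| = |\G\hp 0| + |\K\hp 0|$. The real content is the claim that faces also add, i.e. $|(\G\tensor*[_e]{\#}{_{\!f}}\K)\hp 2| = |\G\hp 2| + |\K\hp 2|$. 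Given that, since both graphs have the same rank $D+1$, the formula \eqref{eq:facecounting} gives
\begin{equation*}
\frac{2\omega(\G\tensor*[_e]{\#}{_{\!f}}\K)}{(D-1)!}
= \frac12\binom{D}{2}\bigl(|\G\hp 0|+|\K\hp 0|\bigr) + D - |(\G\tensor*[_e]{\#}{_{\!f}}\K)\hp 2|
= \frac{2\omega(\G)}{(D-1)!} + \frac{2\omega(\K)}{(D-1)!},
\end{equation*}
where in the last step I use the additivity of faces and vertices together with the fact that the lone ``$+D$'' gets absorbed because two copies of \eqref{eq:facecounting} contribute $2D$ but the combined graph also has $D$ colors other than $c$ contributing faces through $e,F$ — more precisely, I would simply write the combined-graph instance of \eqref{eq:facecounting} and subtract off the two individual instances, and check the constants cancel. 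Multiplying by $(D-1)!/2$ yields the claim.

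**Counting the faces.** The heart of the argument is the bijection on faces. Recall a face is a $2$-bubble, i.e. a connected bicolored subgraph on color pairs $\{i,j\}$; in $\G\tensor*[_e]{\#}{_{\!f}}\K$ we must track, for each unordered pair of colors, how the bicolored cycles reorganize. If neither $i$ nor $j$ equals the surgery color $c$, then no $\{i,j\}$-colored edge is touched by the operation, so the $\{i,j\}$-bubbles of $\G\tensor*[_e]{\#}{_{\!f}}\K$ are exactly those of $\G$ together with those of $\K$ — a trivial bijection. The interesting case is the $D$ color pairs containing $c$, say $\{c,j\}$. Here $e=\overline{s(e)t(e)}$ in $\G$ lies on some $\{c,j\}$-bubble $B_\G$, and $f$ lies on some $\{c,j\}$-bubble $B_\K$; after surgery the two new $c$-edges $E$ (from $s(e)$ to $t(f)$) and $F$ (from $s(f)$ to $t(e)$) splice $B_\G$ and $B_\K$ together. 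I claim they splice into a \emph{single} $\{c,j\}$-bubble: removing $e$ from $B_\G$ and $f$ from $B_\K$ turns each closed bicolored cycle into an open bicolored path with two loose $c$-ends, and $E$, $F$ reconnect the four loose ends into one cycle whose length is $|B_\G|+|B_\K|$ (in edges). So for each color $j\neq c$, the two $\{c,j\}$-faces $B_\G, B_\K$ are replaced by one $\{c,j\}$-face, while all other $\{c,j\}$-faces persist unchanged and in bijection with the disjoint union. Net change in the face count from the $\{c,j\}$ sector: $-1$. Summing over the $D$ colors $j\neq c$ gives a total deficit of $D$, which is \emph{exactly} the surplus ``$+D$'' appearing twice-minus-once in the bookkeeping of \eqref{eq:facecounting} above — confirming $|(\G\tensor*[_e]{\#}{_{\!f}}\K)\hp 2| = |\G\hp 2| + |\K\hp 2| - D$, and this $-D$ is precisely what makes the additivity of $\omega$ come out.

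**The main obstacle.** The one genuine subtlety I anticipate is verifying that each pair $B_\G$, $B_\K$ really fuses into \emph{one} bubble rather than, say, two — i.e. that $E$ and $F$ do not re-close the paths separately. This is where one must use that $E$ connects $s(e)$ to $t(f)$ and $F$ connects $s(f)$ to $t(e)$, i.e. the ``cross'' gluing of Definition \ref{thm:surg_ribb} and Figure \ref{fig:ST}, and not the ``parallel'' one; with the parallel gluing one would indeed get two bubbles and additivity would fail. Tracing the bicolored walk carefully — start at $t(e)$, follow $F$ to $s(f)$, traverse the remnant of $B_\K$ back to $t(f)$, follow $E$ to $s(e)$, traverse the remnant of $B_\G$ back to $t(e)$ — shows it closes up only after exhausting both remnants, giving the single cycle. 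A small degenerate case worth a line: if $B_\G$ (or $B_\K$) is itself just a double edge (a bubble of length $2$ on colors $\{c,j\}$), removing $e$ leaves a single $j$-edge as the ``path'', and the argument still goes through verbatim. Once this fusion is pinned down, everything else is the linear-algebra bookkeeping with \eqref{eq:facecounting} sketched above, and the connectedness of $\G\tensor*[_e]{\#}{_{\!f}}\K$ (already asserted in Definition \ref{thm:surg_ribb}) guarantees $\omega$ is even defined on it.
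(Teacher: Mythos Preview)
Your proposal is correct and follows essentially the same approach as the paper: both count how vertices and faces change under $\tensor*[_e]{\#}{_{\!f}}$ and then feed the result into the face-counting formula \eqref{eq:facecounting}. Your treatment is in fact more careful than the paper's on the one nontrivial point---that the cross-gluing $E,F$ fuses each pair of $\{c,j\}$-bubbles into a \emph{single} bubble rather than two---which the paper asserts without tracing the walk; your explicit verification of this (and the degenerate length-$2$ case) is a welcome addition. One presentational quibble: your opening sketch first states $|(\G\tensor*[_e]{\#}{_{\!f}}\K)\hp 2|=|\G\hp 2|+|\K\hp 2|$ and only later corrects it to $|\G\hp 2|+|\K\hp 2|-D$; in a final write-up you should state the $-D$ version from the start to avoid confusing the reader.
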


\begin{proof}
  We use the face-counting formula \eqref{eq:facecounting} to
  calculate Gur\u{a}u's degree and compute how it changes after
  $ \tensor*[_e]{\#}{_{\!f}}$.  First, notice that the vertices of
  $\G$ and $\K$ add up exactly to those of
  $\G \tensor*[_e]{\#}{_{\!f}} \K$.  Concerning faces, in $\G$ there
  are exactly $D$ two-bubbles containing the edge $e$, namely the
  connected component $\G_{e}^{(cd)}$, where $d$ is any color but $c$
  itself.  By the same token, there are $D$ faces of $\K$ whose
  boundary loop contains $f$. Erasing $e$ and $f$ in favor of the
  edges $E$ and $F$ in $\G \tensor*[_e]{\#}{_{\!f}} \K$ puts the
  bubbles $\G_{e}\hp{cd}$ and $\K_f\hp{cd}$ together in a single
  one. This happens for each color $d\neq c$, whence
  $ |\G\hp2| + |\K\hp2|-D=|(\G \tensor*[_e]{\#}{_{\!f}} \K)\hp2 | ={D
    \choose 2 } |(\G \tensor*[_e]{\#}{_{\!f}} \K)\hp 0|/2+D-
  2\omega(\G \tensor*[_e]{\#}{_{\!f}} \K)/(D-1)!\,$.
  Then using formula \eqref{eq:facecounting} for both $\G$ and $\K$
  yields the result. \end{proof}

\begin{example}
  We consider two copies of the $(D+1)$-colored graph with two
  vertices, $\mathcal M$. It has only planar jackets, whence its Gur\u
  au's degree is zero. Therefore, if $e_i$ denotes the only color-$i$
  edge of $\mathcal M$, by Proposition \ref{thm:suma_grado}, one has
  $\mathcal P= \mathcal M \tensor*[_{e_1}]{\#}{_{\!{e_1}}}\mathcal M
  \tensor*[_{e_0}]{\#}{_{\!{e_0}}}\mathcal M
  \tensor*[_{e_D}]{\#}{_{\!{e_D}}}\mathcal M $
  is a melon, for
  $\omega(\mathcal P)=\omega(\mathcal M
  \tensor*[_{e_1}]{\#}{_{\!{e_1}}}\mathcal M)+ \omega(\mathcal M
  \tensor*[_{e_D}]{\#}{_{\!{e_D}}}\mathcal M)= 4\omega(\mathcal
  M)=0$.
  This graph will be handy in the sequel (in
  Eq. \eqref{eq:boundary_csum}, specifically) in order to separate
  boundary components (see Lemma \ref{thm:separatrix}). By a similar
  argument one can see that the vacuum graph in example
  \ref{ex:inicial} is a melon. Since melons triangulate spheres
  \cite{GurauRyan}, our claim there is proven.
\end{example} 
 
\subsection{Colored graph automorphisms} \label{sec:automorphisms} The
available concept of automorphism in the theory of manifold
crystallization \cite[Sec. 1]{survey_cryst} and graph-encoded
manifolds of the late 70s and early 80s cannot be used here, for
boundary graphs $\partial \mathcal C$ have a bipartite-vertex set
(which is moreover labeled by the momenta corresponding to the ones
carried by open legs of $\mathcal C$; see Sec. \ref{expansion_rk3});
here we introduce the concept that discloses the compatibility with
the whole CTM-structure.
\begin{definition} 
  An \textit{automorphism} $\Theta$ of a graph $\G\in \Grph{D}$ is a
  couple of permutations $\Theta=(\theta,\tilde \theta)$ of the set of
  vertices $\theta \in \mathrm{Sym}(\G\hp0)$ and the set of edges
  $\tilde \theta \in \mathrm{Sym}(\G\hp1)$ that respects
  \begin{itemize}
 \item  \textit{bipartiteness}:
   $\theta|_{\G\hp0_{\vphantom{b}\mathrm{w}}}\in
   \mathrm{Sym}(\G\hp0_{\vphantom{b}\mathrm{w}})$ and
   $\theta|_{\G\hp0_{\vphantom{b}\mathrm{b}}}\in
   \mathrm{Sym}(\G\hp0_{\vphantom{b}\mathrm{b}})$,
 \item  \textit{edge-coloring}: for any color $c$ and 
 $e_c\in\G\hp1_c$, then $\tilde \theta(e_c)\in \G\hp1_c$,
 \item  \textit{adjacency:} let $s:\G\hp 1\to
   \G\hp0_{\vphantom{b}\mathrm{w}}$ and $t:\G\hp 1\to
   \G\hp0_{\vphantom{b}\mathrm{b}}$ respectively denote the source and
   target maps. Then the following diagrams are commutative:
 \[
 \includegraphics[width=6.6cm]{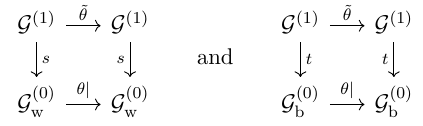}
\]
% commute.
\end{itemize}
\end{definition}
 
We denote by $\Autc(\G)$ the group of automorphisms of the colored
graph $\G$.  Notice that $\Theta\in \Autc(\G)$ has no more information
than a permutation of white (or black) vertices plus ``preserving the
structure of colored graph''. That is to say, let $r=|\G\hp0|/2$ and
suppose that $\tau\in \Sym_{r}$ is such that there exists an
automorphism $\Theta=(\theta, \tilde \theta) \in\Autc(\G)$ that
restricts to $\tau$, $\theta|_{\G\hp0_{\vphantom{b}\mathrm{w}}}=\tau$.
We construct the other pieces of $\Theta$, beginning with
$\tilde \theta$. For an arbitrary color $j$, let $e_j$ be an edge in
$ \G\hp1_j$.  Set then
\[
\tilde\theta (e_j) := 
\mbox{ the only $j$-colored edge in $s\inv (\tau(s(e_j)))$ .}
\]
In terms of $\tilde\theta$, we define $\theta$ for black vertices: let
$p\in\G\hp0_{\vphantom{b}\mathrm{b}}$ and let, for arbitrary color
$j$, $f_j\in\G_j\hp1$ be the edge with $p=t(f_j)$. Then set
$ \theta(p):=t(\tilde\theta(f_j)).  $ That is, $\theta$ and
$\tilde \theta$ can be constructed from $\tau$.  We conclude that for
connected graphs $\G\in\Grph{D}$, if $\tau$ can be lifted to a
$\Theta\in\Autc(\G)$, then $\Theta$ is unique and (whenever it exists)
it will be denoted by $\hat \tau$.  This way we can see $\Autc(\G)$ as
a subgroup of $\Sym_r=\mathrm{Sym}(\G_{\vphantom{b}\mathrm{w}}\hp0)$.
In particular, the following bound holds:
\begin{equation}
|\Autc(\G)| \leq  (|\G\hp 0 \wh|) ! =\big( |\G \hp0| /2 \big)!\,\,\,.
\end{equation}

\begin{example}
  By contrast with the `uncolored' utility graph $K(3,3)$, for which
  $|\Aut(K_{ }(3,3))|=2(3!)^2$, one has for its color version
  $K_{\mathrm c}(3,3)$ a quite modest
  $\Autc(K_{\mathrm c}(3,3)) \cong \Z_3$.  The two non trivial
  elements of $\Autc(K_{\mathrm c}(3,3))$ are rotations by
  $\pm2\pi/ 3$.  The rotations by $\pm \pi/3, \pi$ are forbidden by
  edge-coloring preservation.  On the other hand, reflections about
  the depicted axes do preserve edge-coloring but not the
  bipartiteness of the edges:
\[ K_{\mathrm c}(3,3)= 
\raisebox{-.47\height}{ \includegraphics[width=2.6cm]{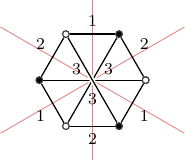} }
\]
The following lines complete the short list of automorphism groups of
connected graphs in $\leq 6$ vertices; there $d=1,2,3$ and $R_\theta$
means anti-clockwise rotation by $\theta$:
\begin{align*}
\Autc \big( 
\raisebox{-.4\height}{\includegraphics[width=.8cm]{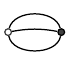}}
\big)&=\{*\}\,, & 
\Autc \Big( 
\raisebox{-.38\height}{\includegraphics[width=1.2cm]{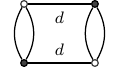}} 
\Big)&=  \langle R_\pi\rangle  \simeq\Z_2\,, \\
\Autc \Big(
\raisebox{-.4\height}{\includegraphics[width=1.0cm]{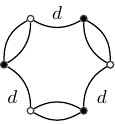}} 
\Big)&=\langle R_{2\pi/3}\rangle\simeq \Z_3 \,, & 
\Autc \Big( 
\raisebox{-.4\height}{\includegraphics[width=1.4cm]{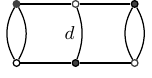}} 
\Big)&=\{*\}\,.
\end{align*}
 \end{example}
More extensive tables of automorphism groups of connected
colored graphs, as well as their Gur\u au's degree, can be found in \cite{SDE}.
If $\G\in \amalg \Grph{D} $
is the disjoint union of $m_i$ copies of pairwise distinct 
types of connected graphs $\{\Gamma_i\}_{i=1}^s\subset \Grph{D}$, $\G= 
(\Gamma_1\sqcup \ldots\sqcup\Gamma_1 )\sqcup 
\ldots \sqcup (\Gamma_{s}\sqcup\ldots \sqcup \Gamma_{s}) $, then
\begin{equation}\label{eq:corona}
 \Autc(\G)=(\Autc(\Gamma_1)^{ } \, \wr \, \Sym_{m_1}) 
\times (\Autc(\Gamma_2)^{ } \, \wr \, \Sym_{m_2}) \times 
\ldots \times (\Autc(\Gamma_s)^{ }\,\wr \, \Sym_{m_s})\,,  
\end{equation}
where $\wr$ is the wreath product of groups. 
Hence $|\Autc(\G)|=\prod_{i=1}^s(m_i)!\cdot |\Autc(\Gamma_i)|^{m_i }$.

\section{Completeness of the boundary sector for quartic interactions}
\label{sec:completeness}

\begin{definition}\label{def:boundarysector}
  The \textit{boundary sector} of a rank-$D$ colored tensor model
  $V(\phi,\bar\phi)$ is the image of the map
  $\partial: \fey_D(V) \to \amalg\Grph{D}$.
\end{definition}

In \cite{cips} it has been shown, constructively, that the geometric
realization of the boundary sector of the $\phi_3^4$-theory is enough
to reconstruct all orientable, closed (possibly disconnected)
surfaces.  Here we present, first, a stronger result in Section
\ref{sec:phi34} for $D=3$. A similar statement with a similar proof
for the rank $D>3$ case follows in Section \ref{sec:BS_D}. Both
results are needed for the Ward-Takahashi Identity.

\subsection{The boundary sector of the $\phi_3^4$-theory} 
\label{sec:phi34}

\begin{lemma} \label{thm:completeness3} 
  Every connected $3$-colored graph is the boundary of (at least) one Feynman
  diagram of the $\phi_3^4$-theory. In other words, 
  the boundary sector contains $\tcol$.
\end{lemma}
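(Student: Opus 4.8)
My strategy is constructive: given an arbitrary connected $3$-colored graph $\B\in\tcol$, I must produce at least one $\G\in\feyn$ (a connected open $4$-colored graph whose amputated interior has $3$-bubbles only of the types $\V_1,\V_2,\V_3$, and with at least one internal $0$-colored edge) such that $\partial\G=\B$. The key observation is that the three quartic vertices $\V_i$ of the $\phi_3^4$-theory are exactly the three ``melonic'' quartic bubbles, and each $\V_i$ has, in its amputated interior, a distinguished color-$i$ edge whose two endpoints play the role of a black/white pair to be connected by $0$-colored propagators. So I would build $\G$ vertex by vertex from $\B$: for each vertex of $\B$ I reserve a copy of (the amputation of) a $\V_i$-vertex, and each color-$k$ edge of $\B$ I want to realize as a $(0k)$-bicolored path through the Feynman graph. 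The crux is to route these bicolored paths so that they close up correctly and so that every edge of every $\V_i$ actually gets used.

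\textbf{The construction.} First I would handle the simplest building block: a single color-$k$ edge $\overline{aw}$ of $\B$ should be realized by a short open $4$-colored graph, namely two $\V_k$-vertices glued along appropriate edges and then given two external $0$-legs, whose boundary is a single $k$-colored edge between a white and a black external vertex. (The cone $C(\overline{aw})$ does \emph{not} work, as the excerpt warns, since its interior is not a $\phi^4$-vertex, but a suitable ``dipole-like'' open graph does.) Then, to assemble a general $\B$, I would take the connected sum $\#$ along color-$0$ edges: by Definition \ref{thm:surg_ribb} and the remark following it, $\#$ is a well-defined binary operation on $\feyn$, so gluing together the elementary pieces (one per edge of $\B$, of the matching color) along $0$-colored edges keeps us inside $\feyn$. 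The combinatorial heart is to check that the boundary operation $\partial$ is compatible with this gluing in the sense that the $(0k)$-bicolored paths of the assembled graph reproduce exactly the incidence structure of $\B$; this is a local check at each ``splice'' plus a bookkeeping argument that the bipartite-vertex structure (each external leg inherits black/white from $\B$) is respected. Connectedness of $\G$ follows because $\B$ is connected and $\#$ preserves connectedness (stated after Definition \ref{thm:surg_ribb}), and the requirement $(\mathrm{inn}(\G))_0\hp1\neq\emptyset$ is automatic since each elementary piece already contains an internal $0$-edge.

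\textbf{The main obstacle.} The hard part is \emph{choosing which $\V_i$ to attach at each vertex of $\B$, and how to match its internal colored edges to the three differently-colored edges of $\B$ incident at that vertex, so that the global routing of the bicolored paths is consistent.} A single $\V_i$-vertex of the $\phi^4$-theory has a rigid internal color structure: its amputated interior is a specific $4$-vertex $3$-colored graph, so not every color-assignment at a vertex of $\B$ can be realized by one $\V_i$ directly. I expect the fix is to subdivide each edge of $\B$ into a longer bicolored path (inserting extra pairs of $\V_i$-vertices acting as ``color-changing'' junctions, analogous to how in rank $2$ one chains vertices), so that at each original vertex of $\B$ only a compatible local pattern needs to be matched; counting shows there is enough freedom. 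One must also verify that the inserted auxiliary vertices do not create spurious boundary components — this is where Gur\u{a}u-degree additivity (Prop. \ref{thm:suma_grado}) and the melon $\mathcal P$ of the preceding example are useful, since they let one control/absorb extra structure without changing the boundary. Once the local matching lemma is in place, the global statement follows by induction on the number of edges of $\B$, each step being an application of $\tensor*[_e]{\#}{_{\!f}}$ along a color-$0$ edge.
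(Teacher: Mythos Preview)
Your proposal has a genuine gap at its core. The ``elementary building block'' you describe---an open $4$-colored graph whose boundary is a single color-$k$ edge between one white and one black external vertex---does not exist: the boundary $\partial\G$ of any open $(D+1)$-colored graph is by definition a \emph{regularly} $D$-colored graph, so every boundary vertex must have valence~$3$, one edge of each color. There is no Feynman graph whose boundary is a bare edge, and hence no edge-wise assembly to begin with. Relatedly, reserving a single $\V_i$ per vertex of $\B$ is too few degrees of freedom: a $\V_i$ has four graph-vertices, and after leaving one external leg you must route three distinct $(0k)$-bicolored paths (one per color) out of it in a controlled way; a single quartic bubble cannot accommodate this. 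Your own ``main obstacle'' paragraph is exactly the place where the construction collapses, and the proposed fix (subdividing edges, inserting color-changers) is not carried out.

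There is also a conceptual mismatch in how you deploy $\#$. In the paper, the connected sum along color-$0$ edges is the tool for producing \emph{disconnected} boundaries (Theorem~\ref{thm:completeness_disconn}, via the separator $\mathcal S$ and eq.~\eqref{eq:boundary_csum}), not for building up a connected one: gluing two pieces via $\#$ splices \emph{all three} $(0k)$-paths through that propagator simultaneously, so you cannot adjust one boundary color at a time. The paper's proof of the connected case is instead vertex-based: to every white (resp.\ black) vertex of $\B$ it attaches a fixed ``raceme'' $\tilde d$ (resp.\ $\tilde x$) built from \emph{two} quartic vertices, carrying five auxiliary open legs plus one genuine external leg; the raceme is engineered so that, for each color $i$, a specific pair of auxiliary legs on a white/black pair of racemes realizes, once Wick-contracted, exactly the $(0i)$-bicolored path from one external leg to the other without interfering with the other colors. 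The verification that $\partial\tilde\B=\B$ is then a direct local check, with no induction and no use of $\#$. If you want to salvage your approach you would need to replace your edge-pieces by such vertex-racemes and specify the contractions explicitly; at that point you are essentially reproducing the paper's argument.
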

\begin{proof}
  Let $\mathcal R$ be a connected $3$-colored graph.  If
  $\R=\emptyset$ is the empty graph, trivially, one can pick any
  closed (or vacuum) graph $\tilde \R$ of the model.  Assume then,
  that $\R$ is not the empty graph.  We construct
  $\tilde {\mathcal R}$ so that
  $\partial \tilde {\mathcal R} =\mathcal R$. To each white
  (resp. black) vertex $d\in \R\hp 0_{\mathrm w \vphantom b}$ (resp.
  $ x\in\R\hp 0_{\mathrm b}$) we associate the following contractions:
\[
\tilde d(c_1,q_1,c_3,c_2,q_2)= \raisebox{-.569\height}{\includegraphics[height=2.3cm]{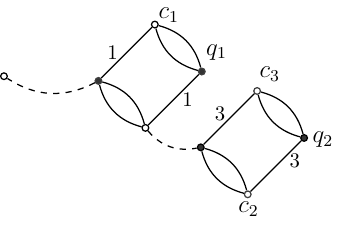}},\quad
\tilde x(b_2,p_2,p_3,b_1,p_1)=\raisebox{-.569\height}{\includegraphics[height=2.3cm]{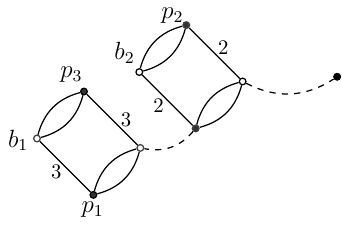}}
\]
The edges of any of the three colors are associated the following
Wick-contractions. For $e_1 \in \mathcal R_1\hp1$ 
\[
e_1\mapsto  
\contraction{\tilde d(}{i_1}{,p_1,b_3,a_2,p_2) 
\tilde  x(c_2,q_2,r,c_1,}{d_1}
\contraction[2ex]{\tilde d(a_1,}{i_1}{,b_3,a_2,p_2) 
\tilde  x(c_2,q_2,r,}{d_1}
\tilde d(c_1,q_1,c_3,c_2,q_2) 
\tilde  x(b_2,p_2,p_3,b_1,p_1) = 
\raisebox{-.569\height}{\includegraphics[height=2.3cm]{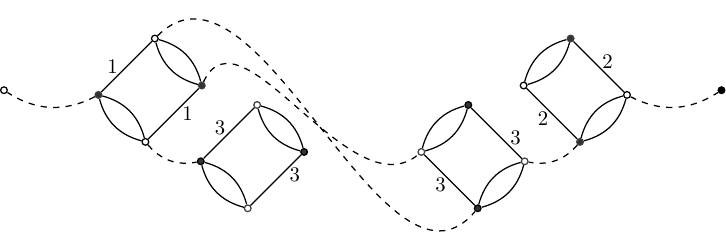}}
\]
Similarly to given $f_2 \in \mathcal R_2\hp1$ and
$g_3\in \mathcal R\hp1_3$, one associates, respectively, the following
graphs:
\[
f_2\mapsto  
\contraction{\tilde d(a_1,p_1,b_3,}{a}{,p_2) 
\tilde  x(c_2,}{q_2}
\contraction[2ex]{\tilde d(a_1,p_1,b_3,a_2,}{p}{a\tilde  x(}{d}
\tilde d(c_1,q_1,c_3,c_2,q_2) 
\tilde  x(b_2,p_2,p_3,b_1,p_1)= \raisebox{-.569\height}
{\includegraphics[height=2.2cm]{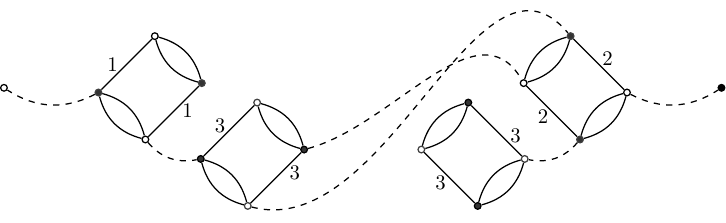}}
\]
\[g_3\mapsto 
\contraction{\tilde d(a_1,p_1,}{c}{,a_2,p_2)\tilde  x(c_2,q_2,}{p}
\tilde d(c_1,q_1,c_3,c_2,q_2)
\tilde  x(b_2,p_2,p_3,b_1,p_1) = 
\raisebox{-.629\height}{\includegraphics[height=1.8cm]{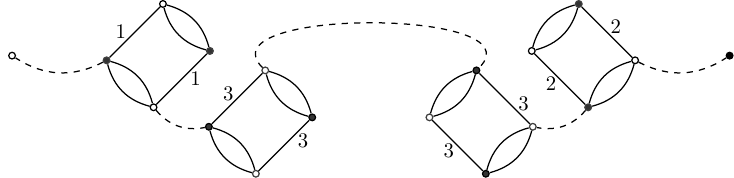}}
\]
Since each vertex $v\in \R$ is regularly $3$-colored, the five Wick
contractions added to $\tilde v$ saturate all but one irregularly
colored vertices in $\tilde v$ and make them regularly colored.  The
only one that remains is a leaf and will be an open leg.  Of course,
connectedness of two vertices $d,x\in\mathcal R\hp 0$ by an edge of
color $i$ (with $i=1,2,3$) is transferred to the connectedness of the
(unmarked) external vertices of $\tilde d$ and $\tilde x$ in
$\tilde {\mathcal R}$ by a $(0i)$-colored path in that graph. Thus, by
construction, $\partial \tilde{\mathcal R}=\mathcal R$.
\end{proof}
\begin{remark} 
  In the proof of Lemma \ref{thm:completeness3} the vertex $\V_3$ has
  been used. We suspect, there is an optimal construction, which
  it only uses $\V_1$ and $\V_2$.  
  The optimization of this proof would use the
  dipole contraction \cite[Lemma 4]{4renorm} (in that setting for
  rank-$4$ TGFTs) but we defer this proof.
\end{remark}

\subsection{The boundary sector of the $\phi_{D,\mathsf{m}}^4$-theory} \label{sec:BS_D}

\label{completeness_general}
In two dimensions there is a single (complex) quartic model; in three
dimensions, there are three interaction vertices. Both in two and
three dimensions quartic vertices are all melonic. The situation
changes from $4$ dimensions on.  For instance, in $D=4$, the
interaction vertex $\mathcal N$ given by eq. \eqref{eq:necklace} and
$\mathfrak S_4$ permutations thereof are not melonic, for their
Gur\u{a}u's degree is $\omega(\mathcal N)=1$ (see computation
\cite[Sec 2]{cips}).  For arbitrary rank, $D\geq 2$, we use the
following shortcut: the $\phi^4_{D,\mathsf{m}}$-theory denotes the
model with the following $D$ melonic vertices
$\{\Tr_{\V_k}(\phi,\bar\phi)\}_{k=1,\ldots,D}$, being
\[
\Tr_{\V_k}(\phi,\bar\phi)=\raisebox{-.45\height}{\includegraphics[width=1.15cm]{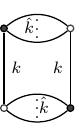}} \,\,\,.
\]
Here some of the edges with colors $\hat k=\{1,2, \ldots,D\}\setminus
\{k\}$ are shortened by dots. We also abbreviate the Feynman diagrams of
the $\phi^4_{D,\mathsf{m}}$-theory as
$\fey_D(\phi^4_{\mathsf{m}})$. For $D=3$ the subindex $\mathsf{m}$
denoting melonicity is redundant. There, the
$\phi^4_{3,\mathsf{m}}$-theory is the $\phi^4_{3}$-theory and
$\fey_3(\phi^4_{\mathsf{m}})=\fey_3(\phi^4)$, according to previous remarks.

\begin{theorem} \label{thm:completeness_disconn}
For arbitrary rank $D$, 
the boundary sector $\partial \feymel$ of the $\phi^4_{D,\mathsf{m}}$-theory
is all of $\amalg\Grph{D}$.
\end{theorem}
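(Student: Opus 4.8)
The plan is to prove the statement in two stages: first that every \emph{connected} $D$-colored graph is the boundary of some $\feymel$-diagram — the rank-$D$ analogue of Lemma~\ref{thm:completeness3} — and then to bootstrap from this to arbitrary (disconnected) $\B\in\amalg\Grph D$ by means of the connected-sum calculus $\#$. For the connected case, given a connected $\mathcal R\in\Grph D$ (the empty graph being realized by any vacuum melon of the model), I would imitate the vertex/edge substitution of the $D=3$ proof, now with the $D$ melonic quartic vertices $\Tr_{\V_1},\dots,\Tr_{\V_D}$ at our disposal. To each white (resp.\ black) vertex of $\mathcal R$ one assigns a gadget assembled, by internal Wick contractions, from a short chain of the $\V_k$'s; because $\V_k$ differs from the trivial invariant only in its $k$-th slot, one can arrange the gadget so that exactly one color-$k$ half-edge remains open for each $k\in\{1,\dots,D\}$, together with exactly one leftover leaf destined to be an external leg. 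Each color-$k$ edge $\overline{dx}$ of $\mathcal R$ is then realized by a color-$0$ Wick contraction joining the color-$k$ slot of the gadget of $d$ to that of $x$, which produces precisely a $(0k)$-bicolored path between the two corresponding external legs. One checks that all but one half-edge of every gadget is saturated, that the resulting $(D+1)$-colored graph $\tilde{\mathcal R}$ is regularly colored with $(\mathrm{inn}(\tilde{\mathcal R}))_0\hp1\neq\emptyset$ and with $\mathrm{inn}(\tilde{\mathcal R})^{\hat 0}$ a disjoint union of copies of the $\V_k$'s, so that $\tilde{\mathcal R}\in\feymel$; and, by construction, $\partial\tilde{\mathcal R}=\mathcal R$.

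For the disconnected case, write $\B=\B_1\sqcup\cdots\sqcup\B_n$ with each $\B_i$ connected, pick $\tilde\B_i\in\feymel$ with $\partial\tilde\B_i=\B_i$ from the connected case, and glue the $\tilde\B_i$ into a single \emph{connected} Feynman graph without fusing the boundary components. The tool is $\#$ restricted to internal color-$0$ edges, which keeps one inside $\feymel$ (Definition~\ref{thm:surg_ribb}), together with the vacuum melon $\mathcal P$. The point is that if one forms $\tilde\B_1\tensor*[_e]{\#}{_{\!g_1}}\mathcal P\tensor*[_{g_2}]{\#}{_{\!f}}\tilde\B_2$, choosing $g_1,g_2$ to be color-$0$ edges of $\mathcal P$ lying on distinct faces, then every $(0k)$-bicolored path of $\tilde\B_i$ through $e$ (resp.\ $f$) merely picks up a detour through a \emph{closed} $(0k)$-loop of $\mathcal P$ and returns to the same black endpoint, so its pair of endpoints — hence the set $\partial(\cdot)\hp1$ of boundary edges — is unchanged; and since $\mathcal P$ carries no external legs, no new boundary edges can be created. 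Iterating this over $i=1,\dots,n$ (each step inserting a fresh $\mathcal P$) yields a connected $\tilde\B\in\feymel$ with $\partial\tilde\B=\B_1\sqcup\cdots\sqcup\B_n=\B$; the degree stays finite because $\omega$ is additive under $\#$ (Proposition~\ref{thm:suma_grado}) and $\mathrm{inn}(\mathcal P)^{\hat0}$ is a disjoint union of two melonic vertices. This is exactly the content packaged as Lemma~\ref{thm:separatrix} and Eq.~\eqref{eq:boundary_csum}.

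I expect the genuinely delicate point to be the rank-$D$ gadget construction in the connected case: with \emph{only} the $D$ melonic quartic vertices available, one must exhibit, uniformly in $D$, a per-vertex building block whose unsaturated half-edges consist of precisely one slot of each color plus a single external leg, and then verify that wiring these blocks according to the incidence structure of $\mathcal R$ closes up into a regularly $(D+1)$-colored graph realizing $\mathcal R$ as its boundary. Carrying out this bookkeeping without the picture-by-picture arguments available for $D=3$ is the real work. The disconnected step, by contrast, is a routine consequence of the $\#$-calculus, the face-counting formula~\eqref{eq:facecounting}, and the elementary observation above about the stability of $(0k)$-paths under color-$0$ surgery against a vacuum melon.
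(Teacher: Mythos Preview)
Your proposal is correct and follows essentially the same two-stage strategy as the paper: the connected case via a per-vertex ``gadget'' (what the paper calls a \emph{raceme}) built from a chain of the melonic $\V_k$'s, then the disconnected case by gluing along color-$0$ edges through copies of the vacuum melon $\mathcal P$ so as to separate boundary components. The only substantive gap you flag yourself---the explicit rank-$D$ raceme construction---is precisely what the paper carries out in Lemma~\ref{thm:BS_D}, and your disconnected-gluing argument is exactly the content of Lemma~\ref{thm:separatrix} and Eq.~\eqref{eq:boundary_csum}.
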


We need first two lemmas. The first one is most of the work and
concerns the connected case. The second lemma tells how glue
$\phi^4_{D,\mathsf m}$-Feynman graphs into a connected
$\phi^4_{D,\mathsf m}$-Feynman that has a custom (disconnected)
boundary.
\begin{lemma} \label{thm:BS_D}
The boundary sector of the $\phi^4_{D,\mathsf{m}}$-theory
contains $\Grph{D}$.
\end{lemma}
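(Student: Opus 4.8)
\textbf{Proof plan for Lemma \ref{thm:BS_D}.}
The plan is to mimic the construction used in Lemma \ref{thm:completeness3}, building for each connected $D$-colored graph $\mathcal R$ a connected Feynman diagram $\tilde{\mathcal R}$ of the $\phi^4_{D,\mathsf m}$-theory with $\partial\tilde{\mathcal R}=\mathcal R$. The empty-graph case is handled by exhibiting a vacuum graph of the model (e.g. a melon built by $\#$-gluing copies of $\mathcal M$), so assume $\mathcal R\neq\emptyset$. First I would fix, for every vertex of $\mathcal R$, a ``local gadget'': a partially Wick-contracted cluster assembled solely from the $D$ melonic vertices $\Tr_{\V_k}$, having exactly $D$ uncontracted (leaf) half-edges, one of each color $1,\dots,D$, together with a single external color-$0$ leg that will become a boundary vertex. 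Concretely, for a white vertex $d$ of $\mathcal R$ one uses one copy of each $\V_k$ ($k=1,\dots,D$), chained together by $0$-colored Wick contractions so that the cluster stays connected and the only surviving uncolored slots are one per color plus one $0$-leg; symmetrically for black vertices $x$. The point is that because $\V_k$ is melonic and differs from the trivial $2$-vertex graph $\mathcal M$ only in one pair of parallel color-$\hat k$ strands, such a gadget exists in every rank $D\ge 2$ and its amputation lies in $\Omega=\{\V_1,\dots,\V_D\}$.

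Next I would handle the edges: for each color $i\in\{1,\dots,D\}$ and each color-$i$ edge $\overline{dx}$ of $\mathcal R$, I add a single color-$0$ Wick contraction between the color-$i$ leaf of the gadget $\tilde d$ and the color-$i$ leaf of the gadget $\tilde x$. Since $\mathcal R$ is regularly $D$-colored, every leaf of every gadget is consumed by exactly one such contraction, leaving only the $0$-legs uncontracted; thus $\tilde{\mathcal R}$ is a genuine open $(D+1)$-colored graph with $(\mathrm{inn}(\tilde{\mathcal R}))_0\hp1\neq\emptyset$ and $\mathrm{inn}(\tilde{\mathcal R})^{\hat 0}$ a disjoint union of $\V_k$'s, i.e. $\tilde{\mathcal R}\in\fey_D(\phi^4_{\mathsf m})$. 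Connectedness of $\tilde{\mathcal R}$ follows from connectedness of $\mathcal R$ plus connectedness of each gadget. Finally I would verify the boundary: by the definition of $\partial$, two external ($0$-colored) legs are joined by a color-$i$ boundary edge iff they are joined by a $(0i)$-bicolored path in $\tilde{\mathcal R}$; by construction such a path exists precisely between the $0$-legs of $\tilde d$ and $\tilde x$ when $\overline{dx}$ is a color-$i$ edge of $\mathcal R$ (the path runs: $0$-leg of $\tilde d$, through the gadget along color $i$, across the added $0$-contraction, back through $\tilde x$ along color $i$, to its $0$-leg), and no other $(0i)$-paths are created. Hence $\partial\tilde{\mathcal R}=\mathcal R$.

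The main obstacle I anticipate is the explicit construction of the vertex gadgets and the bookkeeping of strands inside them in general rank $D$: in the $D=3$ proof the gadgets were drawn by hand (five Wick contractions per vertex saturating all but one of the irregularly colored vertices), and one must check in arbitrary rank that (i) the gadget can indeed be closed up using only the melonic vertices $\V_1,\dots,\V_D$, (ii) it has exactly the right number and colors of surviving leaves, and (iii) inserting it does not accidentally create spurious $(0i)$-bicolored paths, which would corrupt the boundary. A clean way to discharge (i)--(iii) is to build the gadget inductively on $D$, starting from the $D=3$ gadget of Lemma \ref{thm:completeness3} and attaching one further melonic vertex per additional color via a single $0$-contraction, or alternatively to invoke the dipole-contraction technique of \cite[Lemma 4]{4renorm} to normalize the local picture. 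Everything else—connectedness, membership in $\fey_D(\phi^4_{\mathsf m})$, and the final identification $\partial\tilde{\mathcal R}=\mathcal R$—is then a routine check in the spirit of the rank-$3$ argument already given.
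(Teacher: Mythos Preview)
Your high-level strategy---replace each vertex of $\mathcal R$ by a local ``gadget'' built from melonic $\V_k$'s and then wire the gadgets together along the edges of $\mathcal R$---is exactly the paper's approach. The paper calls the gadgets \emph{racemes} and carries out precisely this construction.

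However, your concrete proposal for the gadget does not work as stated. You suggest one copy of each $\V_k$ ($k=1,\dots,D$) with $D$ surviving leaves plus one external $0$-leg, wired by a \emph{single} $0$-contraction per edge of $\mathcal R$. A vertex-count shows the obstruction: $D$ quartic vertices contribute $4D$ graph-vertices; leaving $D+1$ open means closing $3D-1$ vertices in pairs, which is impossible for even $D$. More conceptually, a single $0$-contraction per edge cannot saturate all the irregularly colored vertices inside the gadgets---the graph you obtain is not a well-formed open $(D{+}1)$-colored graph.

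The paper's actual raceme is asymmetric in the colors: it uses only $\V_1,\dots,\V_{D-1}$ (so $D-1$ interaction vertices, not $D$) chained by $0$-edges, leaving $2D-1$ open slots beyond the external leg. The wiring then uses \emph{two} $0$-contractions for each edge of color $i\in\{1,\dots,D-1\}$ and \emph{one} for color $D$; the paper explicitly remarks that only one of each pair lies on the relevant $(0i)$-bicolored path, the ``secondary'' contraction being there solely to close up the graph. This asymmetry is precisely what fixes the parity and makes the verification of the boundary go through. Your inductive alternative (start from the $D=3$ gadget and attach one further melonic vertex per added color) would in fact land you on $D-1$ vertices and $2D$ open slots, i.e.\ essentially the paper's construction---but that is not the gadget you described first, and the ``one contraction per edge'' wiring must be replaced by the paper's two-for-most-colors rule. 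Finally, your claim that ``no other $(0i)$-paths are created'' needs an argument; the paper discharges this by the explicit geometry of the raceme, checking each color separately.
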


The idea is to associate, to each vertex $v$ of $\B$, a partially
Wick-contracted ``raceme'', $\tilde v$, of interaction-vertices of the
$\phi^4_{D,\mathsf{m}}$-theory. Each raceme has a marked
(graph-theoretical) vertex.  Among all the associated racemes, one
contracts with a $0$-color all but the marked vertex, in such a way
that one has a $0i$-bicolored path in $\tilde \B$ between two such
preferred vertices at racemes $\tilde x$ and $\tilde d$, whenever
there is an $i$-colored edge in $\B$ between $x$ and $d$.

\begin{proof}
Let $\B\in\Grph{D}$. We construct a graph $\tilde\B \in
\fey_D(\phi^4_{\mathsf{m}})$ with $\partial\tilde \B=\B$.  Concretely,
we assemble $\tilde \B$ from $\B$ as follows.  Only after \textit{Step
  2} we will have a well-defined Feynman graph.  \\
\textit{Step 1:}
Replace any black vertex $x\in\B\hp 0\bl$ and any white vertex
$d\in\B\hp 0 \wh$ by $\tilde x$ and $\tilde d$, respectively:
\begin{equation}\label{xtilde}
x\mapsto \tilde x=\,\,\raisebox{-.45\height}{\includegraphics[width=7.0cm]{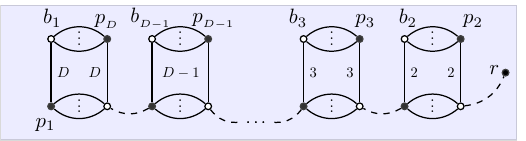}}
\end{equation}

\begin{equation}\label{dtilde}
d\mapsto \tilde d= \,\,\raisebox{-.45\height}{\includegraphics[width=7.0cm]{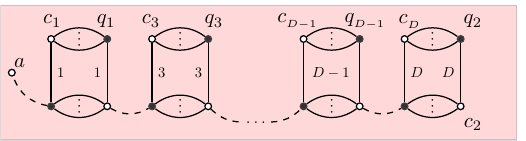}}
\end{equation}
At this stage, $\tilde \B$ consists of the following connected
components $\{ \tilde d\}_{d\in\B\hp 0\wh} \cup \{\tilde x
\}_{x\in\B\hp 0\bl} $, which, altogether, have the following set of
vertices that are \textit{not} contracted with the $0$-color:
\begin{equation}
\{b_1^x,p_1^x,\ldots,b_{D-1}^x,p_{D-1}^x,
p_D^x\}_{x\in\B\hp 0\bl}  \cup   
\{c_1^d,q_1^d,\ldots,c_{D-1}^d,q_{D-1}^d,c_D^d\}_{d\in\B\hp 0\wh}\,.
\label{abiertos}
\end{equation}
\noindent
\textit{Step 2:} We shall contract all open vertices \eqref{abiertos} 
as follows:
Whenever $x = t(e_i)$ and $d=s(e_i)$, for $e_i$ 
an edge of color $i\neq D$, $e_i\in\B\hp 1_i$, 
one Wick-contracts $b_i^x$ with $q_i^d$ and $p_i^x$ with $c_i^d$:
\begin{equation}
\contraction{\tilde x(\ldots,}{b_i^x}{,p_i^x,\ldots,b_{D-1}^x,p_{D-1}^x,
p_D^x,r^x) \,\,
\tilde d
(\ldots,c_i^d,}{q_i^d}
\contraction[1.7ex]{\tilde x(\ldots,b_i^x,}{p_i^x}{,\ldots,b_{D-1}^x,p_{D-1}^x,
p_D^x,r^x) \,\,
\tilde d
(\ldots,}{c_i^d}
\tilde x(\ldots,b_i^x,p_i^x,\ldots,b_{D-1}^x,p_{D-1}^x,
p_D^x,r^x) \,\,
\tilde d
(\ldots,c_i^d,q_i^d,c_{D-1}^d,q_{D-1}^d,c_D^d,a^d
) \label{contraction_k}
\end{equation}
Whenever  $x = t(e_D)$ and $d=s(e_D)$ for 
 $e_D\in\B\hp 1_D$, contract $p_D^x$ with $c_D^d$:
 \[ 
\contraction[1.1ex]
{\tilde x(\ldots,b_i^x,p_i^x,\ldots,b_{D-1}^x,p_{D-1}^x,}{p_D^x}{,r^x) \,\,
\tilde d
(\ldots,c_i^d,q_i^d,c_{D-1}^d,q_{D-1}^d,}{c_D^d}
\tilde x(\ldots,b_i^x,p_i^x,\ldots,b_{D-1}^x,p_{D-1}^x,
p_D^x,r^x) \,\,
\tilde d
(\ldots,c_i^d,q_i^d,c_{D-1}^d,q_{D-1}^d,c_D^d,a^d
)
\]
The regularity and the bipartiteness of $\B$ imply the
well-definedness of $\tilde\B$ as open ($D+1$)-colored graph. We now
see that $\partial \tilde\B=\B$. Indeed, for each black vertex $x$
(resp. white vertex $d$) in $\B$, there exactly is a black
(resp. white) external leg, namely $r^x$ (resp. $a^d$) which is mapped
by $\partial$ to a black vertex $\partial r^x$ (resp. white vertex
$\partial a^d$).  Therefore, $\B$ and $\partial\tilde \B$ have the
same bipartite vertex set.  To conclude, we remark that for every
$k$-colored edge $e_k$ in $\B$ between $x$ and $d$, there is indeed a
$(0k)$-bicolored path in $\tilde\B$ between $r^x$ and $a^d$, and this
ensures that there is a $k$-colored edge between $\partial a^d$ and
$\partial r^x$, by the mere definition of the boundary graph:
\begin{itemize}
\item $k=1$: From right to left in the following graph, notice that
  since the vertex $\V_1$ does not appear in $\tilde x$, there is (in
  the bottom part) there is a ($01$)-bicolored path between $r$ and $p_1$. That
  path can be concatenated with $\overline{p_1c_1 a}$, which is
  also ($01$)-bicolored.  
 (Notice that from the two Wick-contractions, only one lies on such a path. 
 The other is secondary.)  
 \[
  \includegraphics[width=11.0cm]{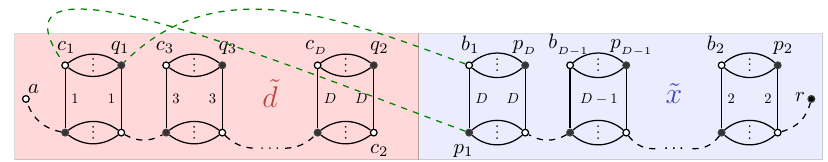}\]

 \item $k=2$: By a similar token, there is a ($02$)-colored
 path between $a$ and $c_2$. 
 \[
  \includegraphics[width=11cm]{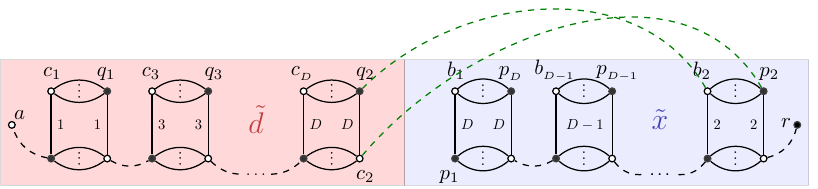}\]
 \item $k=3,\ldots,D-1$: The Wick-contraction \eqref{contraction_k}
   connects $c_k$ and $p_k$ with a color-$0$ edge. It is evident that
   from graphs \eqref{xtilde} and \eqref{dtilde}, that there is a
   $0k$-bicolored path through it that connects $r$ and $a$.
 \item $k=D$. There are $\V_D$-vertices 
 neither to the left of $c_D$ nor to the right
 of $p_D$, so there is a ($0D$)-bicolored path $\overline {ac_D}$, which 
  can be concatenated with $\overline{c_Dp_D}$ and subsequently with $\overline{p_Dr}$.
  \[
  \includegraphics[width=11cm]{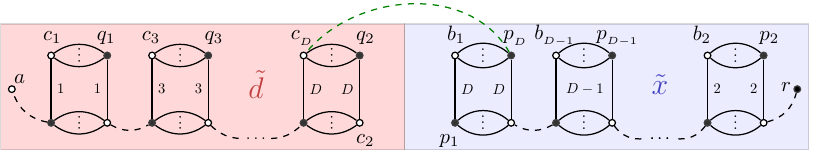} 
   \qedhere
  \]
\end{itemize}
\end{proof}

\begin{example} \label{ex:necklaceTilde} In rank-$4$, the construction
  in Theorem \ref{thm:BS_D} (Step 1) associates to each vertex
  $\{f,d,x,z\}$ of the necklace-graph
\begin{equation}\label{eq:necklace}
\mathcal N=
\raisebox{-.46\height}{\includegraphics[width=2.1cm]{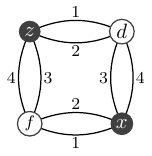}} 
\end{equation}
``racemes'' $\{\tilde f, \tilde d,\tilde x,\tilde z\}$.  
According to Step 2,
they are contracted by $0$-colored edges to form the following 
$\phi^4_{\mathsf m}$-Feynman diagram $\tilde{\mathcal N}$,
which obviously satisfies  $\partial \tilde {\mathcal N}=\mathcal N$.
\[
\tilde{\mathcal N}=\raisebox{-.5\height}{
  \includegraphics[width=11cm]{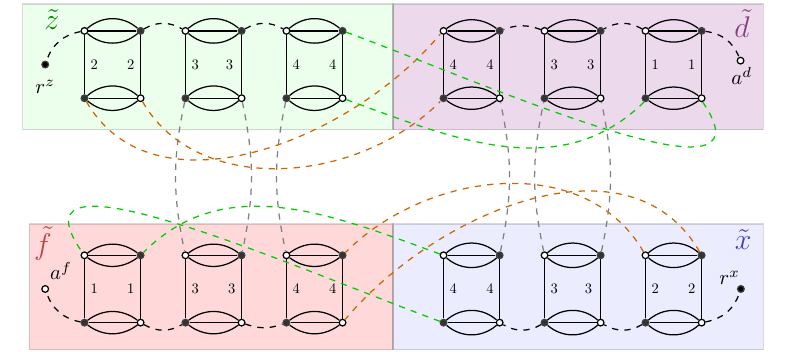} }
%   \qedhere
  \]
\end{example}

\begin{lemma} \label{thm:separatrix}
 The $\phi^4_{\mathsf{m}}$-graph $\mathcal S$ given by
\begin{equation}
 \label{separatrix}
 \mathcal S(g,v;h,w):=\raisebox{-.45\height}{
 \includegraphics[width=6.0cm]{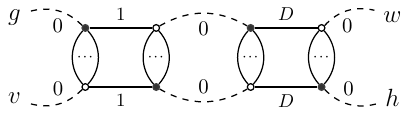}
 }
\end{equation}
separates
 boundary components. 
More precisely: Given two open graphs $\G_1,\G_2 \in \fey_D(\phi^4_{\mathsf m})$,
with $2p_1 $ and $2p_2 $ external legs, respectively, 
\[\G_1(c^{(1)},c^{(2)},\ldots,c^{(p_1)};
r^{(1)},r^{(2)},\ldots,r^{(p_1)}) \and \G_2
(d^{(1)},d^{(2)},\ldots,d^{(p_2)};
s^{(1)},s^{(2)},\ldots,s^{(p_2)}),\]
for any $1\leq i,j\leq p_1$ and $1\leq k,l\leq p_2$, being $c\hp i$
(resp. $r\hp j$) any outer white (resp. black) vertex of $\G_1$ and
$d\hp i$ (resp. $s\hp j$) any outer white (resp. black) vertex of
$\G_2$, we claim that
\[ \mathcal C:=
\bcontraction{\G_1(\ldots,}{c^{(i)}}{,\ldots ; \ldots, r^{(j)},\ldots) 
\mathcal S(a,b;}{p}
\bcontraction[1.8ex]{
\G_1(\ldots,c^{(i)},\ldots ; \ldots,}{r^{(j)}}{,\ldots) 
\mathcal S(}{a}
\contraction{\G_1(\ldots,c^{(i)},\ldots ; \ldots, r^{(j)},\ldots) 
\mathcal S(a,}{b}{
;p,q) 
\G_2(\ldots,d^{(k)}\ldots;\ldots,}
{s^{(l)}}
\contraction[2.0ex]{\G_1(\ldots,c^{(i)},\ldots ; \ldots, r^{(j)},\ldots) 
\mathcal S(a,b;p,}{q\vphantom{d}}{) 
\G_2(\ldots,}{d^{(k)}}
\G_1(\ldots,c^{(i)},\ldots ; \ldots, r^{(j)},\ldots) 
\mathcal S(g,v;h,w) 
\G_2(\ldots,d^{(k)},\ldots;\ldots,s^{(l)},\ldots) 
\]
is a Feynman graph in $\fey_D(\phi^4_{\mathsf{m}})$, whose boundary is given by
\begin{equation}
\partial (
\contraction{\G_1(c^{(1)},\ldots,}{c}{{}^{(i)},\ldots ;r^{(1)} \ldots,}{r}
\G_1(c^{(1)},\ldots,c^{(i)},\ldots ; r^{(1)}\ldots, r^{(j)},\ldots) )
 \sqcup \partial(
 \contraction{\G_2(d\hp{1}\ldots,}{\hphantom{d}d}{{}^{(k)},\ldots;s\hp{1}\ldots,}{s}
\G_2(d\hp{1},\ldots,d^{(k)},\ldots;s\hp{1},\ldots,s^{(l)},\ldots) 
) \,.
\label{dos_fronteras}
 \end{equation}
\end{lemma}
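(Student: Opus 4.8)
The plan is to establish the two assertions in turn: that $\mathcal C$ lies in $\feymel$, and that $\partial\mathcal C$ is the disjoint union \eqref{dos_fronteras}. The first is bookkeeping. Each of the four Wick contractions glues a white external (color-$0$) leg to a black one, so it produces a new internal color-$0$ edge; hence $\mathcal C$ stays vertex-bipartite and quasi-regularly $(D+1)$-colored, it is connected because the connected pieces $\G_1$, $\mathcal S$, $\G_2$ are linked by these four edges, and $(\mathrm{inn}(\mathcal C))_0\hp 1\neq\emptyset$ since it contains them. Moreover $\mathrm{inn}(\mathcal C)^{\hat 0}$ is the disjoint union of the bubbles of $\G_1$, of $\G_2$ and of $\mathcal S$, and each of these is one of the melonic vertices $\V_1,\dots,\V_D$ since $\G_1,\G_2\in\feymel$ and $\mathcal S$ is by construction a $\phi^4_{\mathsf m}$-graph; by Definition \ref{def:general_model} this gives $\mathcal C\in\feymel$. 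Since $\mathcal S$ has exactly four external legs, all consumed by the contractions, the external vertices of $\mathcal C$ are precisely those of $\G_1$ other than $c^{(i)},r^{(j)}$ together with those of $\G_2$ other than $d^{(k)},s^{(l)}$ --- already the vertex set of \eqref{dos_fronteras}.

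The edges of $\partial\mathcal C$ are governed, color by color, by the $(0k)$-bicolored subgraphs, since $\partial\mathcal C$ records exactly which pairs of external legs are joined by a $(0k)$-bicolored path. The single structural input about $\mathcal S$ I would isolate is the following \emph{separatrix property}, to be checked by inspection of \eqref{separatrix}: for every color $k\in\{1,\dots,D\}$, the $(0k)$-bicolored subgraph of $\mathcal S$ consists of one path joining the two legs of $\mathcal S$ attached to $\G_1$ (namely $p$ and $a$), one path joining the two legs attached to $\G_2$ (namely $b$ and $q$), and possibly some $(0k)$-cycles; in particular no $(0k)$-bicolored path of $\mathcal S$ runs from a $\G_1$-side leg to a $\G_2$-side leg. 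That each of these two paths joins a white leg to a black one is automatic from the parity of $(0k)$-paths in a bipartite colored graph; the content of the property is that for \emph{every} $k$ the two paths respect the $\G_1/\G_2$ split. This is most transparent from a decomposition of $\mathcal S$ into blocks built from the two-vertex melon $\mathcal M$ used in forming the melon $\mathcal P$ above, each block contributing a full dipole that links its two vertices in all colors at once; Proposition \ref{thm:suma_grado} then simultaneously re-confirms that $\mathcal S$ is melonic.

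Granting the separatrix property, I would trace the $(0k)$-bicolored subgraph of $\mathcal C$. A maximal $(0k)$-path that meets $\mathcal S$ must enter it at one of the four vertices of $\mathcal S$ carrying its (now contracted) legs, using the color-$0$ and color-$k$ edges there, and by the separatrix property it then stays in the $(0k)$-path of $\mathcal S$ through that vertex and exits at the vertex carrying the partner leg on the \emph{same} side. Hence no $(0k)$-path of $\mathcal C$ joins an external leg of $\G_1$ to one of $\G_2$, so $\partial\mathcal C$ splits as a disjoint union of a $\G_1$-part and a $\G_2$-part. Write $\G_1'$ for the graph obtained from $\G_1$ by Wick-contracting $c^{(i)}$ with $r^{(j)}$, carrying the new color-$0$ edge $\overline{c^{(i)}r^{(j)}}$; the first summand of \eqref{dos_fronteras} is $\partial\G_1'$. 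A $(0k)$-path of $\mathcal C$ between two external legs of $\G_1$ either avoids $\mathcal S$ --- then it is already a $(0k)$-path of $\G_1'$ not using $\overline{c^{(i)}r^{(j)}}$ --- or it makes exactly one detour through $\mathcal S$, which by the above necessarily runs from the $\G_1$-vertex carrying $c^{(i)}$ to the one carrying $r^{(j)}$ and uses exactly their color-$0$ edges; by the separatrix property this detour exists for every $k$ and, for $(0k)$-connectivity, acts exactly as the single edge $\overline{c^{(i)}r^{(j)}}$. So $(0k)$-path connectivity among the external legs of $\G_1$ inside $\mathcal C$ coincides with that inside $\G_1'$, i.e. the $\G_1$-part of $\partial\mathcal C$ is $\partial\G_1'$; exchanging $\G_1\leftrightarrow\G_2$ gives the second summand, and together they yield \eqref{dos_fronteras}.

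The hard part is the verification implicit in the last paragraph: that the concatenation of the color-$0$ edge from the $\G_1$-vertex at $c^{(i)}$ into $\mathcal S$, the $(0k)$-path inside $\mathcal S$, and the color-$0$ edge out of $\mathcal S$ to the $\G_1$-vertex at $r^{(j)}$ is genuinely a $(0k)$-path, and that collapsing this detour to the single edge $\overline{c^{(i)}r^{(j)}}$ is a bijection on $(0k)$-paths --- not merely a connectivity statement --- so that the two color alternations at the splice points must be matched carefully. Because $\mathcal S$ is one fixed finite graph this is a bounded check; carried out on the $\mathcal M$-block decomposition it is immediate, and that decomposition simultaneously makes the well-definedness and melonicity of $\mathcal S$ (hence of $\mathcal C$) transparent.
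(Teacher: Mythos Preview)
Your argument is sound. The paper itself does not prove this lemma in situ; it simply cites \cite[Lemma~6]{cips} as a restatement, so there is no competing proof here to compare against. Your route --- isolate the \emph{separatrix property} of $\mathcal S$ (that for each color $k$ the two $(0k)$-paths through $\mathcal S$ join same-side legs), then trace $(0k)$-paths in $\mathcal C$ and observe that the detour through $\mathcal S$ is, for boundary purposes, interchangeable with a single color-$0$ edge --- is exactly the right mechanism, and your closing paragraph correctly flags the one genuinely delicate point (that the splice at the $\mathcal S$--$\G_1$ interface preserves the $(0k)$-alternation, so that the detour-collapse is a bijection on $(0k)$-paths and not merely a connectivity statement).

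One cosmetic remark: you name the legs of $\mathcal S$ as $p,a,b,q$, but those are only placeholder arguments in the paper's Wick-contraction macros; the actual labels in \eqref{separatrix} are $g,v$ (white) and $h,w$ (black), with $g,h$ contracted to $\G_1$ and $v,w$ to $\G_2$. Your side-identification is consistent with this, so nothing substantive changes.
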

Thus, if $\mathcal{C} $ is given by
\[
  \begin{tikzpicture}%
    [
    baseline=-1.2ex,shorten >=.1pt,%node distance=18mm, 
    semithick,auto,
    every state/.style={fill=white,draw=texto,inner sep=.3mm,text=texto,minimum size=0},
    accepting/.style={fill=white,text=black},
    initial/.style={white,text=texto}
]
    \draw [dashed] (0,0) to[bend right] (1,-.6);
    \draw [dashed] (0,0) to[bend left]  (1,.6);
    \draw [dashed] (0,0) to[bend right] (-1,.6);
    \draw [dashed] (0,0) to[bend left] (-1,-.6);
    \draw [dashed] (-2,0) to[bend left] (-1,.6);
    \draw [dashed] (-2,0) to[bend right] (-1,-.6);
    \draw [dashed] (2,0) to[bend right] (1,.6);
    \draw [dashed] (2,0) to[bend left] (1,-.6);
    \draw [dashed] (2,0) -- (3,.6);
  % \draw [dashed] (2,0) -- (3,.1);
   \draw [dashed] (2,0) -- (3,-.6);
       \draw [dashed] (-2,0) -- (-3,.6);
   %\draw [dashed] (-2,0) -- (-3,.1);
   \draw [dashed] (-2,0) -- (-3,-.6);
   \node at (-3.,0.1) {\scriptsize $\vdots$};
   \node at (3.,0.1) {\scriptsize $\vdots$};
   \node[state,accepting  %drop shadow
   ] at (0,0) {$\,\, {\mathcal S} \,\,$};
    \node[state,accepting 
    ] at (2,0) {$\,\, {\mathcal \G}_2\,\,$};
    \node[state,accepting 
    ] at (-2,0) {$\,\, {\mathcal \G}_1\,\,$};
  \end{tikzpicture}
\]
the dots listing uncontracted external legs, Lemma \ref{thm:separatrix} says that
\[
\partial \mathcal{C} =\partial\left(\raisebox{-.15\height}{
 \begin{tikzpicture}
    [
    baseline=-1.2ex,shorten >=.1pt,  
    semithick,auto,
    every state/.style={fill=white,draw=texto,inner sep=.3mm,text=texto,minimum size=0},
    accepting/.style={fill=white,text=black},
    initial/.style={white,text=texto}
]
 
   \draw [dashed] (-1.2,0) circle (.5);
   \draw [dashed] (-2,0) -- (-3,.6);
    \node at (-3.,0.1) {\scriptsize $\vdots$};
   \draw [dashed] (-2,0) -- (-3,-.6);
   \node[state,accepting  
   ] at (-2,0) {$\,\, {\mathcal \G}_1\,\,$};
  \end{tikzpicture}
 }\right) \sqcup \partial
\left(\raisebox{-.15\height}{
 \begin{tikzpicture}%
    [
    baseline=-1.2ex,shorten >=.1pt,
    semithick,auto,
    every state/.style={fill=white,draw=texto,inner sep=.3mm,text=texto,minimum size=0},
    accepting/.style={fill=white,text=black},
    initial/.style={white,text=texto}
]
    \draw [dashed] (2,0) -- (3,.6);
   \node at (3.,0.1) {\scriptsize $\vdots$};
   \draw [dashed] (2,0) -- (3,-.6);
   \draw [dashed] (1.2,0) circle (.5);
  \node[state,accepting  
  ] at (2,0) {$\,\, {\mathcal \G}_2\,\,$};
  \end{tikzpicture}
 }\right)
\]

\begin{proof}
  This is a restatement of  \cite[Lemma 6]{cips}.
  \end{proof}
One can restate a more general result by considering
$\mathcal P=\contraction{\mathcal{S} (}{g}{\,,}{} \contraction{\mathcal{S}
  (g,v;}{h}{\,}{w} \mathcal{S} (g,v;h,w)$
and taking ($D+1$)-colored
graphs $\K$ and $\G$ that might even be closed. By taking edges
$e \in \K_0\hp 1$ and $f\in \G_0\hp 1$ and letting
$k=\overline{gw}$ and $l=\overline{h w}$ one has 
\begin{equation}\label{eq:boundary_csum}
\partial (
\K \tensor*[_{e}]{\#}{_{k}} \mathcal P
\tensor*[_{l}]{\#}{_{\!f}} \G
) = \partial \K \sqcup \partial \G\,. 
\end{equation}
\begin{proof}[of theorem \ref{thm:completeness_disconn}]
  Let $\B$ be an arbitrary graph in $\amalg\Grph{D}$.  We decompose
  $\B$ in its connected components
  $\{\R^\alpha\}_{\alpha=1}^B \subset \Grph{D}$,
  $\B=\sqcup_{\alpha=1}^B \R^\alpha$. For each connected component
  $\alpha$, we consider the graphs $\tilde \R^\alpha$ given by Lemma
  \ref{thm:completeness3} if $D=3$ or by Lemma \ref{thm:BS_D} if
  $D\geq 4$. Fix two arbitrary vertices
  $d^\alpha \in(\R^\alpha)\wh\hp 0$ and
  $x^\alpha\in(\R^\alpha)\hp0\bl$ and consider the vertices
  $c_1^\alpha,q_1^\alpha$ and $b_1^\alpha,p_1^\alpha$ that lie on the
  racemes $\tilde d^\alpha$ and $\tilde x^\alpha$ of $\tilde\R^\alpha$
  respectively.  One considers also the $0$-colored edges
  $e_\alpha=\overline{c_1^\alpha p_1^\alpha}$ and
  $f_\alpha=\overline{q_1^\alpha b_1^\alpha}$ that connect the racemes
  $\tilde d^\alpha$ with $\tilde x^\alpha$.  Consider, $B-1$ copies of
  $\mathcal P$,
\[\mathcal P^i= \contraction{\mathcal{S} (}{g_i}{\,,}{}
\bcontraction{\mathcal{S} (g_i,v_i;}{h_i}{\,}{w_i} \mathcal{S}
(g_i,v_i;h_i,w_i)\,  \qquad  (i=1,\ldots, B-1)\,,\] and denote by $k_i=\overline{g_i\,v_i}$ and
$l_i=\overline{h_i\,w_i}$ the $0$-edges arising from the Wick-contracting. Then 
\[
\mathcal T = (\tilde \R^1)  \tensor*[_{e_1}]{\#}{_{\!k_1}} (\mathcal P^1) 
\tensor*[_{l_1}]{\#}{_{\!f_2}}
(\tilde \R^2)  \tensor*[_{e_2}]{\#}{_{\!k_2}} (\mathcal P^2)
\tensor*[_{l_2}]{\#}{_{\!f_3}}
(\tilde \R^3)  \tensor*[_{e_3}]{\#}{_{\!k_3}} 
\cdots (\mathcal P^{B-1}) 
\tensor*[_{l_{B-1}}]{\#}{_{\!\,f_B}}(\tilde \R^B) 
\]
implies, after repetitively using eq. \eqref{eq:boundary_csum},
\[
\partial \mathcal T= \partial \tilde\R^1 \sqcup\ldots \sqcup \partial \tilde \R^B
=\R^1\sqcup \ldots \sqcup \R^B=\B. 
\qedhere
\]
\end{proof}

\subsection{Geometric interpretation} 
\label{sec:geom_interpretation}
Graphs in $\amalg\Grph{D+1}$ serve to construct triangulations
$\Delta(\G)$ of $D$-(pseudo)manifolds, i.e. a (pseudo)complex as
stated in \cite{survey_cryst}:
\begin{itemize}
 \item  for each vertex $v\in \G\hp 0$, add a
   $D$-simplex $\sigma_v$ to $\Delta(\G)$
 \item  one labels the vertices of $\sigma_v$ by
   the colors $\{0,1,\ldots,D\}$
 \item  for each edge $e_c\in \G\hp1_c$ of
   arbitrary color $c$, one identifies the two $(D-1)$-simplices
   $\sigma_{s(e_c)}$ and $\sigma_{t(e_c)}$ that do not contain the
   color $c$.
 \end{itemize} 
 \begin{corollary}
  The boundary sector of the $\phi^4_{D,\mathsf{m}}$-model generates all orientable, closed
  piecewise linear manifolds.  Thus, for $D=4$, it generates all
  orientable, closed $3$-manifolds.
\end{corollary}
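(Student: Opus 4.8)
The plan is to combine Theorem~\ref{thm:completeness_disconn} with the classical dictionary between colored graphs and triangulated manifolds recalled at the start of Section~\ref{sec:geom_interpretation}. By Theorem~\ref{thm:completeness_disconn} the boundary sector of the $\phi^4_{D,\mathsf m}$-theory is the full set $\amalg\Grph D$; hence it is enough to show that every orientable closed PL $(D-1)$-manifold $M$ (allowed to be disconnected) is PL-homeomorphic to $|\Delta(\G)|$ for some $\G\in\amalg\Grph D$. Granting this, one picks such a $\G$, invokes Theorem~\ref{thm:completeness_disconn} to find $\tilde\G\in\feymel$ with $\partial\tilde\G=\G$, and concludes that $M\cong|\Delta(\partial\tilde\G)|$ lies in the geometric realization of the boundary sector.

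First I would reduce to the connected case: if $M=\bigsqcup_i M_i$, then $\Delta$ carries disjoint unions of graphs to disjoint unions of pseudocomplexes, and $\amalg\Grph D$ is by definition closed under disjoint union, so it suffices to represent each $M_i$ by a connected graph in $\Grph D$. So assume $M$ is connected, closed, orientable and PL of dimension $D-1$. By Pezzana's theorem (see the survey \cite{survey_cryst}), $M$ admits a \emph{crystallization}, i.e.\ there is a connected $D$-colored graph $\G$ with $|\Delta(\G)|$ PL-homeomorphic to $M$; one does not even need the contracted/minimal part of Pezzana's statement, only the existence of \emph{some} $D$-colored graph representing $M$. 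It then remains to see that this $\G$ can be taken vertex-bipartite, so that $\G\in\Grph D$ in the sense of this paper. This is precisely the standard equivalence that $\Delta(\G)$ is orientable if and only if $\G$ is bipartite (again \cite{survey_cryst}): since $M$ is orientable, the crystallization $\G$ is automatically bipartite, hence $\G\in\Grph D\subset\amalg\Grph D$. (Conversely every $\G\in\amalg\Grph D$ is bipartite by definition, so whenever $\Delta(\G)$ is a manifold it is orientable, in agreement with the statement.)

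Assembling the pieces: for an arbitrary orientable closed PL $(D-1)$-manifold $M=\bigsqcup_i M_i$, choose bipartite $D$-colored graphs $\G_i$ with $|\Delta(\G_i)|\cong M_i$, set $\G=\bigsqcup_i\G_i\in\amalg\Grph D$, and apply Theorem~\ref{thm:completeness_disconn} to obtain $\tilde\G\in\feymel$ with $\partial\tilde\G=\G$; then $|\Delta(\partial\tilde\G)|\cong M$. The last sentence of the corollary is the special case $D=4$, for which $D-1=3$ and the PL, topological and smooth categories of closed $3$-manifolds coincide, so ``orientable, closed $3$-manifold'' may be read in any of these senses.

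I expect the only genuine point requiring care to be the invocation of crystallization theory in the PL category: one must check that Pezzana's theorem and the bipartiteness$\leftrightarrow$orientability equivalence are stated for the very notion of colored triangulation $\Delta(\G)$ used in Section~\ref{sec:geom_interpretation}. Once the definitions are matched, the corollary is essentially a citation together with the disjoint-union bookkeeping, since all the substantive work — surjectivity of $\partial$ onto $\amalg\Grph D$ — has already been carried out in Theorem~\ref{thm:completeness_disconn}.
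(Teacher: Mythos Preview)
Your proof is correct and follows essentially the same route as the paper: invoke Pezzana's theorem to represent any closed orientable PL $(D-1)$-manifold by a $D$-colored graph, then apply Theorem~\ref{thm:completeness_disconn} to realize that graph as a boundary, with the $D=4$ clause following from the equivalence of the topological and PL categories in dimension $3$ (the paper cites Moise's theorem for this). Your version is in fact more explicit than the paper's, since you spell out the disjoint-union reduction and the bipartiteness$\leftrightarrow$orientability step that the paper's proof leaves implicit in the phrase ``crystallizations are, in particular, $D$-colored graphs.''
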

\begin{proof}
  By Pezzana's theorem \cite{pezzana,survey_cryst}, all compact,
  connected PL-$(D-1)$-manifolds possess a suitable crystallization.
  Crystallizations are, in particular, $D$-colored graphs, all of
  which are generated by certain boundary $\partial \G \in \partial
  \fey_D(\phi^4_{\mathsf{m}})$, by Theorem
  \ref{thm:completeness_disconn}. The second statement follows from
  Moise's theorem \cite{moise} on equivalence of topological and PL
  $3$-manifolds.
\end{proof}

\begin{example} \label{ex:4bordism} This result implies that there
  exist a $4$-dimensional $\Psi$-manifold that is represented by the
  $\phi^4_{\mathsf m}$-theory, whose boundary is any closed,
  orientable (honest) $3$-manifold.  In particular, for instance, the
  $3$-manifold with the following, say, three connected components: a
  lens space, $L_{3,1}$; the $3$-manifold with cyclic infinite
  fundamental group, $\mathbb{S}^2\times \mathbb S^1$; and a more
  common prime factor, $\mathbb S^3$.  First one needs to crystallize
  them. The next three are crystallizations of said manifolds, in
  which we represent the color $4$ by a waved line and suppress
  redundant labels:
\begin{equation} \label{eq:crystallizations}
 \Gamma= \hspace{-0.50cm}\raisebox
{-.475\height}{  
\includegraphics[width=3.2cm]{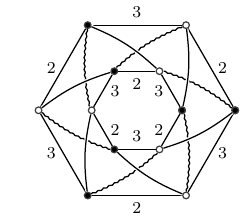}
}
\qquad \qquad
 \mathcal{C}=\hspace{-.5006cm}\raisebox
{-.45\height}{
\includegraphics[width=3.0cm]{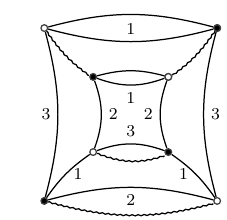}
}
\hspace{0.270cm}\qquad
 \mathcal{M}=\hspace{-0.70cm}\raisebox
{-.4015\height}{
\includegraphics[width=2.5cm]{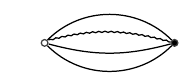}
}
\end{equation}
We compute the fundamental group of these  
crystallizations in Appendix \ref{app:fundamental}. 
Theorem \ref{thm:completeness_disconn} states that 
 $\G=\tilde \Gamma \# \mathcal P \# \tilde{ \mathcal M} \# \mathcal P\# \tilde{\mathcal C}$ 
 has as boundary the disjoint union of these graphs. Therefore 
 the geometric realization $|\Delta(\G)|$ of $\G$ has 
 as boundary $L_{3,1}\sqcup (\mathbb S^2\times \mathbb S^1) \sqcup \mathbb S^3$.
 \end{example}
 For further applications it might be important to modify Gur\u au's
 degree of a graph while simultaneously sparing its boundary. This is
 also due to the relevance of the difference
 $\tilde\omega(\G)-\omega(\partial \G)$, where $\tilde\omega$ is the
 degree for open graphs defined as the sum of the genera of its
 \textit{pinched jackets} \cite{4renorm}.  On closed graphs
 $\tilde \omega$ is the same as $\omega$ (closed jackets cannot be
 ``pinched'').  The remark is that one can modify any graph
 $\G\in\fey_D(\phi^4_{\mathsf m})$ into a graph $\G'$ of the same
 quartic model, so that $\partial\G'=\partial\G$.  The only ingredient
 one needs is a vacuum graph
 $\mathcal L \in \fey_D(\phi^4_{\mathsf m})$ with
 $\omega(\mtc{L})>0$. (e.g. in $D=3$, the necklace graph with the
 color 4 equal to 0, and being thus in $\feyn$,
 cf. ex. \ref{ex:necklaceTilde}). Then
\[
\G'=\G\# \mathcal P \# \mathcal L=\raisebox{-.150\height}{
  \begin{tikzpicture}%
    [
    baseline=-1.2ex,shorten >=.1pt,%node distance=18mm, 
    semithick,auto,
    every state/.style={fill=white,draw=texto,inner sep=.3mm,text=texto,minimum size=0},
    accepting/.style={fill=white,text=black},
    initial/.style={white,text=texto}
]
    \draw [dashed] (0,0) to[bend right] (1,-.6);
    \draw [dashed] (0,0) to[bend left]  (1,.6);
    \draw [dashed] (0,0) to[bend right] (-1,.6);
    \draw [dashed] (0,0) to[bend left] (-1,-.6);
    \draw [dashed] (-2,0) to[bend left] (-1,.6);
    \draw [dashed] (-2,0) to[bend right] (-1,-.6);
    \draw [dashed] (2,0) to[bend right] (1,.6);
    \draw [dashed] (2,0) to[bend left] (1,-.6);
%     \draw [dashed] (2,0) -- (3,.6);
  % \draw [dashed] (2,0) -- (3,.1);
%    \draw [dashed] (2,0) -- (3,-.6);
       \draw [dashed] (-2,0) -- (-3,.6);
   %\draw [dashed] (-2,0) -- (-3,.1);
   \draw [dashed] (-2,0) -- (-3,-.6);
   \node at (-3.,0.1) {\scriptsize $\vdots$};
%    \node at (3.,0.1) {\scriptsize $\vdots$};
   \node[state,accepting  %drop shadow
   ] at (0,0) {$\,\, {\mathcal S} \,\,$};
    \node[state,accepting 
    ] at (2,0) {$\,\, {\mathcal L} \,\,$};
    \node[state,accepting 
    ] at (-2,0) {$\,\, {\mathcal \G} \,\,$};
  \end{tikzpicture}} 
  \]
  has degree $\tilde\omega(\G')=
  \tilde\omega(\G) +\omega(\mathcal P)+ \tilde\omega(\mathcal L)= 
   \tilde\omega(\G) +  \tilde\omega(\mathcal L)> \tilde \omega (\G)$, by 
Theorem \ref{thm:suma_grado}, and 
  $\partial(\G')=\partial \G \sqcup \partial \mathcal L= \partial\G$
  by Theorem \ref{thm:separatrix} and because $\mathcal L$
  is a vacuum graph. Notice that 
  the degree cannot be increased by an 
  arbitrary amount, but only by multiples of $2 /(D-1)!$ .

\section{The expansion of the free energy in boundary graphs} \label{sec:ciclos}

Before tackling the main problem, it will be useful to recall the
expansion of the free energy for real matrix models.  The reader in a
hurry might accept eq.  \eqref{expansion_W} and go to
eq. \eqref{fuentes_graficas} for notation.

\subsection{The free energy expansion for a general real matrix
  model} \label{general_matrix} As background, consider the following
model, whose objects are compact operators $M:\H\to\H$ (``matrices''),
with $\H$ a separable Hilbert space. The interactions are described by
a polynomial potential, $P(M)$. The partition function reads
\begin{equation}
\frac{Z[J]}{Z[0]}=\frac{\int \Df{M}\,\, \ee^{\Tr{(JM)}-\Tr{(EM^2)}-
\Tr\, P(M)}}{\int \Df{M}\,\, \ee^{-\Tr{(EM^2)}- \Tr\,P(M)}}\,, \label{GW_as_matrix} 
\end{equation}
where $E$ is a Hermitian operator on $\H$. The free energy,
$W_{\mathrm{matrix}}[J]\varpropto \log (Z[J]/Z[0])$, generates the
connected Green's functions. To expand in terms of the combinatorics
of the sources' indices, we shall use a multi-index notation, with
$\mathbb P_{}^m$ having \textit{length} $m=|\mathbb P_{}^m|$.  This
just means that $\mathbb P^m$ is an $m$-tuple
$\mathbb P_{}^m=(p_1p_2\ldots p_m)\in I^m$ for given index set
$I$. $I^m$ is often the integer lattice, and $m$ will not be a fixed
integer, but we will deal with multi-indices of arbitrary length.  To
enumerate multi-indices of the same length we use a subindex, so
$\mathbb{P}^m_1,\mathbb{P}^m_2,\ldots,\mathbb{P}^m_{n_m}$ are all
length-$m$ cycles.  Sums over multiple multi-indices are understood as
follows:

\begin{table}\centering
\includegraphics[width=.8916\textwidth]{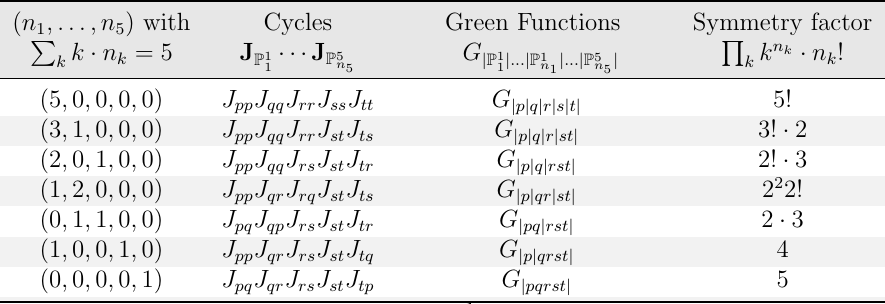}
\caption{  The boundary-graph--expansion's fifth order  \label{tab:fifth}}
\end{table} 
\[\sum\limits_{\mathbb P_{}^k,\ldots, \mathbb Q_{}^m}=\sum_{p_1}\cdots\sum_{p_k}\cdots\sum_{q_1} \cdots \sum_{q_m}
\with \mathbb{P}^k={(p_1\ldots p_k)} \and \mathbb{Q}^m=(q_1\ldots q_m)\,.\]
The $J$-cycles of size $\ell$, namely $J_{p_1p_2}J_{p_2p_3}\cdots J_{p_{\ell-1}p_\ell} J_{p_{\ell}p_1}$, 
are here for sake of briefness denoted by 
$\mathbf{J}_{\mathbb{P}^\ell} := \prod_{i=1}^\ell J_{p_{i}p_{i+1}}\,\with\, \mathbb 
P^\ell=(p_1\ldots p_\ell) \mbox{ and } p_{\ell+1}:=p_{1}\,.$
With that notation, the free energy can be expanded \cite[Sec. 2.3]{gw12} in
length-$\ell$ cycles, with $\ell$ variable, as follows:
 \begin{equation} \label{madre}
\sum\limits_{\ell=1}^\infty
 \sum\limits_{n_\ell\geq 1}^\infty  %\sum_{n_1=0}^\infty \cdots  
\sum\limits^\infty_{\substack{{n_1= 0} \\ \ldots \\
            n_{\ell-1}=0\,\,\,\,\,}} \!\!\!\!
        \left[ \,\prod\limits_{j=1}^\ell\frac{1}{n_j! j^{n_j}} \right]
\sum\limits_{\substack{\mathbb{P}_1^1,\ldots,\mathbb{P}^1_{n_1}\\ \!\phantom{\!\!}_{{\phantom{a}}^{\hdots}}  
\\ \mathbb{P}^\ell_1,\ldots,\mathbb{P}^\ell_{n_\ell}  }}  \left\{  
G^{(\mathcal N_{\mtr{matrix}})}_{|\mathbb{P}^1_1|\ldots |\mathbb{P}^1_{n_1}|\ldots |\mathbb{P}^\ell_1|\ldots |\mathbb{P}^\ell_{n_\ell}|} 
\prod_{k=1}^\ell (\mathbf{J}_{\mathbb{P}^k_{1}}\cdots \mathbf{J}_{\mathbb{P}^k_{n_k}}) 
\right\}.
\end{equation}
One word more on notation: Fixed the $\ell$ by the first sum, for
$1\leq k \leq \ell$, the non-negative integer $n_k$ stands for the
number of boundary components with $k$ sources (whence $n_\ell\neq 0$
in the second sum is precisely a way to paraphrase the decomposition
in the longest cycle).  The number of boundary components
$B_{\mtr{matrix}}$, and the number of sources,
$\mathcal N_{\mtr{matrix}}$ (i.e. the order of the Green's function)
are
$B_{\mtr{matrix}}=\sum_{j=1}^\ell n_j \andd \mathcal N_{\mtr{matrix}}=
\sum_{j=1}^\ell j\cdot n_j.$
Instead of expanding by longest-cycles, we can also rephrase
\eqref{madre} as an explicit Taylor expansion to $\mathcal{O}(J^6)$,
\begin{align*}
 W[J]=&\sum_p G_{|p|}J_{pp}+\frac{1}{2}\sum_{p,q} 
 \left( G_{|pq|}J_{pq}J_{qp}+G_{|p|q|}J_{pp}J_{qq}\right) \\
&+\sum_{p,q,r}\Big(\frac13 G_{|pqr|}J_{pq}J_{qr}J_{rp}+ \frac12 G_{|pq|r|}J_{pq}J_{qp}J_{rr}+
 \frac{1}{3!}G_{|p|q|r|}J_{pp}J_{qq}J_{rr}\Big) \\&+\sum_{p,q,r,s} \Big(\frac{1}{4}G_{|pqrs|}J_{pq}J_{qr}J_{rs}J_{sp}+ 
\frac{1}{3}G_{|pqr|s|}J_{pq}J_{qr}J_{rp}J_{ss} 
\\ 
 \hphantom{ W[J]=}& +\frac1{8 } G_{|pq|rs|}J_{pq}J_{qp}J_{rs}J_{sr} 
 + \frac1{4} G_{|p|q|rs|}J_{pp}J_{qq}J_{rs}J_{sr} +\frac{1}{4!}G_{|p|q|r|s|}J_{pp}J_{qq}J_{rr}J_{ss}\Big) 
 \\
 \hphantom{ W[J]=}& 
 +\sum_{p,q,r,s,t} \bigg(	
 \frac{1}{5} G_{|pqrst|}J_{pq}J_{qr}J_{rs}J_{st}J_{tp}  
 +\frac{1}{4}G_{|p|qrst|}J_{pp}J_{qr}J_{rs}J_{st}J_{tq}  
  \\
 \hphantom{ W[J]=}& 
 +\frac{1}{2\cdot 3} G_{|pq|rst|} J_{pq}J_{qp}J_{rs}J_{st}J_{tr}
 +\frac{1}{2^2 2!} G_{|p|qr|st|} J_{pp}J_{qr}J_{rq}J_{st}J_{ts} \\
 &  
 +\frac{1}{2! 3} G_{|p|q|rst|} J_{pp}J_{qq}J_{rs}J_{st}J_{tr}  
+\frac{1}{3!2}G_{|p|q|r|st|}J_{pp}J_{qq}J_{rr}J_{st}J_{ts}\\
 \hphantom{ W[J]=}&
+\frac{1}{5!} G_{|p|q|r|s|t|}J_{pp}J_{qq}J_{rr}J_{ss}J_{tt}\bigg)+\mathcal{O}(J^6). 
\end{align*}
Table \ref{tab:fifth} shows how to read off from
\eqref{madre}, say, the fifth power in $J$. 
The Green's function for a fixed cycle can be furthermore 
expanded in genus-$g$ sectors:
\begin{equation} \label{eq:exp_genus_matrix}  
G^{(\mathcal N_{\mtr{matrix}})}_{|\mathbb{P}^1_1|\ldots |\mathbb{P}^1_{n_1}|\ldots |\mathbb{P}^\ell_1|\ldots |\mathbb{P}^\ell_{n_\ell}|} = \sum_{g\geq 0}  G^{(\mathcal N_{\mtr{matrix}}, \,g)}_{|\mathbb{P}^1_1|\ldots |\mathbb{P}^1_{n_1}|\ldots |\mathbb{P}^\ell_1|\ldots |\mathbb{P}^\ell_{n_\ell}|} \,. 
\end{equation}
For the 5-tuple $(n_1,\dots,n_5)=(3,1,0,0,0)$, here chosen only 
to exemplify the genus expansion's meaning, 
$G\hp 5_{|p|q|r|st|}=\sum_{g\geq 0}G_{|p|q|r|st|}^{(5,\,g)}$
reads
\[
G_{|p|q|r|st|}J_{pp}J_{qq}J_{rr}J_{st}J_{ts} =
\raisebox{-.39\height}{\includegraphics[height=3.4cm]{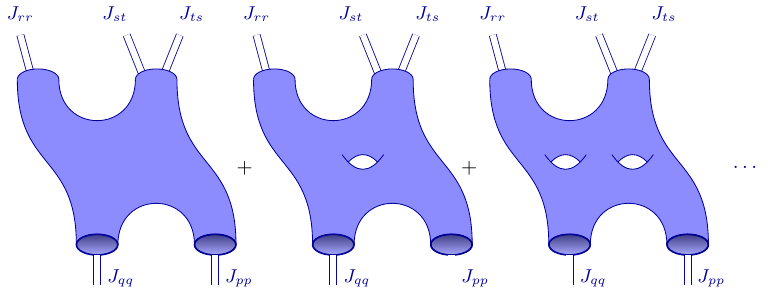}}
\]

\subsection{The general expansion for rank-$3$ CTMs} \label{expansion_rk3}

The combinatorics of the matrix-sources just shown in Section
\ref{general_matrix} gets modified for the rank-$3$ colored tensors
because of the coloring; moreover, because the theory is now complex
the non vertex-bipartite graphs are forbidden.  The first implication
of coloring is that the sources do not exhibit repeated indices in the
same source, e.g. none of the following terms is allowed in the
expansion of $W[J,\bar J]=\log(Z[J,\bar J])$:
\[
J_{\ldots a\ldots a\ldots },\,\bar J_{\ldots a\ldots a\ldots}, \bar J_{aaa}, \,J_{aaa},\,
\bar J_{aab} {J}_{bcc},\, J_{aab} \bar {J}_{bcc},\ldots \qquad\qquad \mbox{ (terms forbidden by coloring).}
\]
Whilst for the lowest order correlation functions this seems to be
quite restrictive, the expansion shows intricacy 
as one goes to higher order ones.  \par
We now consider 
a graph 
$\G\in\feyn$ and set the first convention. 
We fix the indices of the $J$-sources (the external
lines connected to the black vertices) and let $\G$ 
yield the indices of the $\bar J$-sources. 
For any $i$, both index types $\mathbf{a}^i,\mathbf p^i \in\Z^3$, are known as \textit{momenta.}
\begin{equation}
\G =
\raisebox{-.35\height}{\includegraphics[width=2.8cm]{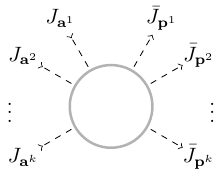}} 
\label{comb_graph}
\end{equation}
We let the notation for the $2k$-point function $G^{(2k)}_{\ldots}$
that describes the ``process'' $\G$ reflect this combinatorics via
another graph $\B$ to be constructed shortly.  The resulting
$G^{(2k)}_{\B}$ ought to encompass all graphs in $\feyn$ that lead to
the same combination of indices in the $\bar J$-sources.  From the
$\bar J$-sources, for each $\alpha=1,\ldots,k$,
$\mathbf p^{\alpha}=\mathbf p^{\alpha}(\mathbf{a}^1,\ldots, \mathbf
a^k)$
is a triple index that depends on $\mathbf{a}^1,\ldots, \mathbf a^k$.
The $j$-th color of $\mathbf p^{\alpha}$ will be denoted by
$p_j^{\alpha}$ and to fix the enumeration of $\mathbf p^{\alpha}$, we
will ask $p_1^\al:=a^\al_1$, for each $\al=1,\ldots, k$.  Moreover,
regularity and coloring of the graph implies that
$\{p_j^{\alpha}\}_{\alpha=1}^k$ and $\{a_j^1,a_j^2,\ldots,a^k_j\}$
coincide \textit{as sets, also} for the colors $j=2,3$.
\par

A crucial step in order to find the generalization of the expansion \eqref{madre}, is to
notice that\footnote{The author is indebted to Raimar Wulkenhaar for this valuable remark.}  
that very equation is a sum over boundaries of $\fey^\re_2(\phi^4)$.  
In order to adapt \eqref{madre} to 
$\feyn$, we take each monomial $G^{(2k)}_{\B} (\mathbf
a^1,\ldots, \mathbf a^k)J_{\mathbf a^1}\cdots J_{\mathbf a^k} \bar
J_{\mathbf p^1}\cdots \bar J_{\mathbf p^k},$ which in all generality
looks like in eq. \eqref{comb_graph} and notice that the structure of the sources is,
of course, encoded by the boundary graph $\B=\partial\G$.
Parenthetically, this is not an uncommon practice in (scalar) QFT,
where the boundary graph is just a graph without edges,
i.e. a finite set whose cardinality gives the
number of points of the correlation function.  The graph $\B$
and said monomial are uniquely, mutually determined as follows:
\begin{itemize}
\item  a source $J_{\mathbf a^{ s }}$ determines a
  white vertex in $\B$; a source $\bar J_{\mathbf{p}^{ j }}$, a black
  vertex in $\B$; 
\item two vertices are joined by a $c$-colored edge in $\B$ if and
  only if there exists a $(0c)$-bicolored path \textit{in} $\G$
  between the (vertices associated to the) external lines
  $J_{\mathbf a^{ s }}$ and $\bar J_{\mathbf{p}^{j}}$. Then set
\end{itemize}
\begin{equation} \label{fuentes_graficas}
\big(\mathbb J{(\B)}\big)(\mathbf{a}^1,\ldots,\mathbf a^k):=J_{\mathbf a^1}\cdots J_{\mathbf a^k}
\bar J_{\mathbf p^1}\cdots \bar J_{\mathbf p^k}= J_{a^{ 1 }_1a^{1 }_2a^{1}_3} \ldots J_{a^{k}_1a^{k}_2a^{k}_3} 
\bar J_{a^1_1p^1_2p^1_3} \ldots \bar J_{a^k_1p^k_2p^k_3}\,.
\end{equation}
Here the momenta $\mathbf p^\alpha$ are determined as in the graph
\eqref{comb_graph} and the convention below it, and $\mathbb J({\B})$
is a function of the momenta $\{\mathbf a\}=(\mathbf a^1,\ldots,\mathbf a^k)\in (\Z^3)^k$.
Thus, the expansion can be recast as
\begin{align}
 W[J,\bar J] & = \sum\limits_{k=1}^\infty \nonumber
 \sum\limits_{\substack{\B \in \partial(\feyn)  \\ 
k=\frac12|\B^{(0)}|}} \sum\limits_{\{\mathbf a\}}\frac{1}{|\Autc(\B)|} 
G^{(2k)}_{\B}(\{\mathbf a\})\,\cdot \,\mathbb J(\B) (\{\mathbf a\})\, .
\end{align}  
It will be convenient to define a pairing $\star$ of
functions $F:(\Z^3)^k\to \C$ with boundary graphs $\B\in \partial(\feyn) $:
\[
F\star \mathbb J(\B) := \sum\limits_{\mathbf a^1, \ldots, \mathbf a^k} 
F(\mathbf a^1, \ldots, \mathbf a^k) \mathbb J(\B(\mathbf a^1, \ldots, \mathbf a^k))\, .
\]
With this notation, $W$ takes the neater form:
\begin{align} \label{expansion_W}
 W[J,\bar J] & = \sum\limits_{k=1}^\infty
  \sum\limits_{\substack{\B \in \amalg\tcol  \\ 
2k=|\B^{(0)}|}}  \frac{1}{|\Autc(\B)|} 
G^{(2k)}_{\B} \star \mathbb J(\B) \, . 
\end{align}  
The fact that \textit{all $3$-colored graphs} appear listed
in eq. \eqref{expansion_W} is consequence of 
Theorem \ref{thm:completeness_disconn}.

\begin{remark}  
  A conspicuous difference with matrix models' expansion \eqref{madre}
  ---where the boundary of each graph is topologically ``uniform'',
  all being triangulations of $\sqcup^B \mathbb S^1$--- is that in
  rank-$3$ tensor field theories, the analogous connected components
  of the boundary have a non-trivial topology, since these are
  $3$-colored graphs and therefore \cite{cips} define closed
  orientable surfaces,
  $\Delta_\B \cong \bigsqcup^B_{\beta=1} \Sigma^{g_\beta}$ with
  $g_\beta \in \Z_{\geq 0}$ and $\Sigma^g=\#^g \T^2$ (being
  $\Sigma^0:=\mathbb S^2 $ for the empty connected sum $g=0$). As
  shown here, an analogous result holds for higher dimensions. Details
  on the expansion of $W$ in disconnected boundary graphs are
  presented in Appendix \ref{app:B}. \end{remark}

To illustrate the expansion, we derive the first terms 
in powers of the sources:
\begin{align}
\label{expansion_congraficas}
% correct factors
W_{D=3}[J,\bar J]&=G\hp 2 _{\includegraphics[height=1.9ex]{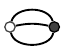}} \star \mathbb{J} (
\raisebox{-.3\height}{\includegraphics[width=.55cm]{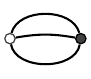}} ) + 
\frac1{2!}
G\hp 4 _{|\raisebox{-.4\height}{\includegraphics[height=1.9ex]{gfx/Item2_Melon.pdf}}|
\raisebox{-.4\height}{\includegraphics[height=1.9ex]{gfx/Item2_Melon.pdf}}|} \star \mathbb{J} (
\raisebox{-.3\height}{\includegraphics[width=.55cm]{gfx/icono_melonM.pdf}}
\sqcup 
\raisebox{-.3\height}{\includegraphics[width=.55cm]{gfx/icono_melonM.pdf}} ) + 
\frac1{2}G\hp4_{\raisebox{-.4\height}{\includegraphics[width=.5cm]{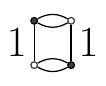}}} \star \mathbb{J}
\big( \,
\raisebox{-.34\height}{\includegraphics[width=0.8cm]{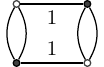}} \,\big) \\  
\nonumber
& \,\, \,\,
+\frac1{2}G\hp4_{\raisebox{-.4\height}{\includegraphics[width=.5cm]{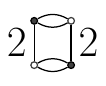}}}  
\star \mathbb{J}
\big(\, 
\raisebox{-.34\height}{\includegraphics[width=0.8cm]{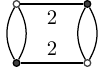}}\, \big)
+\frac1{2}G\hp4_{\raisebox{-.4\height}{\includegraphics[width=.5cm]{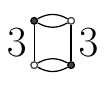}}} 
\star \mathbb{J}
\big( \,
\raisebox{-.34\height}{\includegraphics[width=0.8cm]{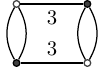}}\, \big)
+ \sum_{c=1}^3\frac1{3}G\hp 6_{\raisebox{-.4\height}{\includegraphics[width=.5cm]{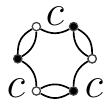}}} 
\star \mathbb{J}
\Big(
\raisebox{-.34\height}{\includegraphics[width=.8cm]{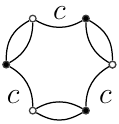}}\Big)
\\  \nonumber
&\,\,\,\,+
\frac1{3} G\hp6 _{\raisebox{-.4\height}{\includegraphics[width=1.75ex]{gfx/logo_k33_blanco.pdf}}} \star \mathbb{J}
\Big(
\raisebox{-.3\height}{\includegraphics[width=.66cm]{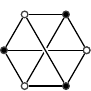}}\Big) 
+ \sum_{c=1}^3 G\hp 6_{
\raisebox{-.34\height}{\includegraphics[width=0.72cm]{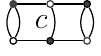}}} \star \mathbb{J}
\Big(
\raisebox{-.34\height}{\includegraphics[width=1cm]{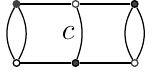}} \Big)
+ \frac1{ 3!}
G\hp 6 _{|\raisebox{-.4\height}{\includegraphics[height=1.9ex]{gfx/Item2_Melon.pdf}}|
\raisebox{-.4\height}{\includegraphics[height=1.9ex]{gfx/Item2_Melon.pdf}}|  
\raisebox{-.4\height}{\includegraphics[height=1.9ex]{gfx/Item2_Melon.pdf}}|}  \star 
 \mathbb{J} (
\raisebox{-.3\height}{\includegraphics[width=.55cm]{gfx/icono_melonM.pdf}}
^{\sqcup 3})
\\   
\nonumber
& \,\, \,\,
+\sum_{c=1}^3 
\frac12 G\hp 6 _{|\raisebox{-.4\height}{\includegraphics[height=1.9ex]{gfx/Item2_Melon.pdf}}|
\raisebox{-.4\height}{\includegraphics[width=.5cm]{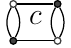}}|} 
\star  \mathbb{J}
\big(
\raisebox{-.3\height}{\includegraphics[width=.55cm]{gfx/icono_melonM.pdf}}
\sqcup 
\raisebox{-.34\height}{\includegraphics[width=.8cm]{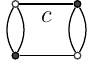}}\,
\big) + \mathcal O(8) \,.
\end{align}

In this expansion, the monomial $\mathbb J(\B)$ in the sources $J$ and
$\bar J$ is defined by formula \eqref{fuentes_graficas}.  Thus, for
instance the term in $W[J,\bar J]$ for the trace indexed by the
colored complete graph $K_{\mathrm c}(3,3)$ is
\begin{align*}
\frac1{3} G\hp6 _{\raisebox{-.4\height}{\includegraphics[width=1.75ex]{gfx/logo_k33_blanco.pdf}}} \star \mathbb{J}
\Big(
\raisebox{-.3\height}{\includegraphics[width=.66cm]{gfx/k33BandWg}}\Big) 
 = \frac13 \!
\sum\limits_{\mathbf{a,b,c}\in \Z^3}\!\!\!
G\hp6_{\raisebox{-.4\height}{\includegraphics[width=1.75ex]{gfx/logo_k33_blanco.pdf}}}
(\mathbf{a,b,c})
 J_{a_1a_2a_3}
 \bar J_{a_1b_2c_3}
 J_{b_1b_2b_3}
  \bar J_{a_3b_1c_2}
  J_{c_1c_2c_3}
 \bar J_{a_2b_3c_1} \,.
\end{align*}
This monomial $\mathbb{J}(\B)$ should not be confused with the trace
$\Tr_\B(J,\bar J)$, which would imply sums over all indices inside the
graph.  Actually $\Tr_{\B}(\phi,\bar\phi)= \boldsymbol{1}\star \mathbb
J (\B)$ holds, being $\boldsymbol 1$ the constant function $\Z^{|\B
  \hp 0|/2}\to \C$, $ \mathbf{a} \mapsto 1$.  Notice that formula
\eqref{expansion_congraficas} pairs the momenta
indices\footnote{Recall that we give the white indices and let the
  graph determine the black ones (see above eq.  \eqref{comb_graph}).}
of $\mathbb J(\B)$ with those of the corresponding Green's function
$G_\B^{(2k)}$. This seemingly redundant notation will
pay off not before the WTI below. The next short section explains why
those factors have been chosen, and how to recover each Green's
functions $G\hp{2k}_\B$ in the expansion of $W[J,\bJ]$.  \\

The number of correlation functions in rank $D=3,4$ theories are
counted.  In \cite{counting_invariants}, Ben Geloun and Ramgoolam found
the generating function $Z_{D,\mtr {conn.}}(x)=\sum_p a\hp D_{p,\mtr {conn.}} x^p$ of the number $a\hp D_{p,\mtr {conn.}} $ of \textit{connected}
graphs $\Grph{D}$ of a fixed number of vertices $2p$. It has the
following behavior\footnote{The OEIS series numbers \cite{OEIS} for 
$Z_{3,\mathrm{conn.}}$ and $Z_{4,\mathrm{conn.}}$ are A057005 and A057006, respectively.}:
\begin{equation}\label{eq:invariants3}
Z_{3,\mtr{conn.}}(x)=\sum_p a\hp 3_{p,\mtr {conn.}}x^p =x +3 x^2 + 7x^3 + 26 x^4 +
97x^5+ 624x^6 +4163x^7+ 34470 x^8 +  \ldots
\end{equation}
The first three terms of this series are 
evident in eq. \eqref{expansion_congraficas}. 
For $D=4$, they also computed 
\begin{align}\label{eq:invariants4}Z_{4,\mtr{conn.}}(x) &=
x + 7x^2+ 41x^3+ 604x^4+ 13753x^5   \\&
\qquad\qquad\quad\!\! + 504243x^6 + 24824785x^7+ 1598346352 x^8+\ldots  \nonumber
\end{align}
From those expressions one can readily compute the 
number $a_{p}(D)$ of disconnected $D$-colored graphs in $2p$
vertices. That integer is the number of correlation 
 $2p$-point functions.

\subsection{Graph calculus} \label{sec:graphcalculus}

Let $\R,\mathcal Q \in \tcol$ and $\mathbf{a}^1,\ldots,\mathbf{a}^ r,
\mathbf{c}^1,\ldots,\mathbf{c}^ q\in \Z^3$. In view of the discussion
above, we associate to those graphs, respectively, the monomials
$\mathbb{J}(\R)(\mathbf{a}^1,\ldots,\mathbf{a}^ r)$ and
$\mathbb{J}({\mtc{Q}})(\mathbf{c}^1,\ldots,\mathbf{c}^ q)$.  Here, the
white vertices of the graph $\R$ have incoming momenta
$\mathbf{a}^1,\ldots,\mathbf{a}^ r$ and similarly for $\mtc Q$.  So we
can derive one with respect to the other:
\[
\frac{\partial   \R(\mathbf{c}^1,\ldots,\mathbf{c}^ p)  }
{\partial\mathcal Q(\mathbf{a}^1,\ldots,\mathbf{a}^ q)} :=\left.
\frac
{\partial \mathbb{J} (\R)(\mathbf{a}^1,\ldots,\mathbf{a}^ p)}
{\partial \mathbb{J} ({\mtc{Q}})(\mathbf{c}^1,\ldots,\mathbf{c}^ q)}
\right|_{J=0=\bar J}. 
\]
This can be straightforwardly   
computed. First notice that trivially, if $p\neq q$,  
 automatically $\partial \R/\partial \mtc{Q} \equiv 0$.
Otherwise we have:
\begin{lemma} \label{lem:graph_independence}
Let $\mathbf{a}^1,\ldots,\mathbf{a}^ r\in \Z^3$ be colorwise, 
pairwise different, i.e. such that for each $\al,\beta=1,\ldots,r$,
and for each color $c=1,2,3$, $a_c^ \al \neq a_c^ \beta$ holds
whenever $\alpha\neq \beta$. Then for 
connected graphs $\R,\mathcal Q  \in \tcol$,
\begin{align} \label{graph_calc}
\frac{\partial   \R(\mathbf{c}^1,\ldots,\mathbf{c}^ r)  }
{\partial\mathcal Q(\mathbf{a}^1,\ldots,\mathbf{a}^ r)}=
\begin{cases} 
 \sum\limits_{\hat\sigma\in\Autc(\R)}
\delta^{ \mathbf{ c}^{\sigma (1)} ,\ldots,\mathbf{c}^ {\sigma(r)}}
      _{\mathbf{a}^1,\mathbf{a}^2, \ldots,\mathbf{a}^ r}
  & \qquad \mbox{if } \,\,\, \R \cong \mathcal Q\, , 
 \\
  \,\,\,\quad 0 & \qquad \mbox{if } \,\, \R \ncong \mathcal Q\,.  
 \end{cases}
\end{align} 
\end{lemma}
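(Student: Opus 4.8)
\emph{Proof idea.} The plan is to unwind the definition and compute the derivative by the Leibniz rule. Write out the two monomials explicitly,
\[
\mathbb J(\R)(\mathbf c^1,\ldots,\mathbf c^r)=\prod_{\gamma=1}^rJ_{\mathbf c^\gamma}\cdot\prod_{\mu=1}^r\bar J_{\mathbf q^\mu(\mathbf c)}\,,\qquad
\mathbb J(\mathcal Q)(\mathbf a^1,\ldots,\mathbf a^r)=\prod_{\alpha=1}^rJ_{\mathbf a^\alpha}\cdot\prod_{\beta=1}^r\bar J_{\mathbf p^\beta(\mathbf a)}\,,
\]
where, following the convention fixed in and below \eqref{comb_graph}, the black momentum at the $\mu$-th black vertex of $\R$ is determined by the colored edges of $\R$: its color-$c$ component $q^\mu_c$ equals the color-$c$ component of the momentum carried by whichever white vertex is joined to that black vertex by a color-$c$ edge of $\R$ (and likewise $p^\beta_c(\mathbf a)$ for $\mathcal Q$). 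One then applies to $\mathbb J(\R)(\mathbf c)$ the $2r$-th order operator $\prod_{\alpha}(\partial/\partial J_{\mathbf a^\alpha})\cdot\prod_{\beta}(\partial/\partial\bar J_{\mathbf p^\beta})$ obtained from $\mathbb J(\mathcal Q)(\mathbf a)$, and sets $J=\bar J=0$. Since $J$ and $\bar J$ are algebraically independent, the $J$- and $\bar J$-derivatives act on disjoint sets of factors; and since the hypothesis forces all the $\mathbf a^\alpha$ --- hence also all the $\mathbf c^\gamma$, and (being colorwise permutations of the white ones) all the $\mathbf q^\mu(\mathbf c)$ and all the $\mathbf p^\beta(\mathbf a)$ --- to be colorwise pairwise distinct, differentiating monomials in pairwise distinct variables and evaluating at the origin yields a sum over pairs of matchings:
\[
\frac{\partial\R(\mathbf c^1,\ldots,\mathbf c^r)}{\partial\mathcal Q(\mathbf a^1,\ldots,\mathbf a^r)}=\sum_{\sigma,\rho\in\Sym_r}\Big(\prod_{\alpha=1}^r\delta^{\mathbf c^{\sigma(\alpha)}}_{\mathbf a^\alpha}\Big)\Big(\prod_{\beta=1}^r\delta^{\mathbf q^{\rho(\beta)}(\mathbf c)}_{\mathbf p^\beta(\mathbf a)}\Big)\,,
\]
where $\delta^{\mathbf x}_{\mathbf y}=\prod_{c=1}^3\delta_{x_c,y_c}$ as usual.

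The main step is to decide which pairs $(\sigma,\rho)$ survive. Imposing the first product forces $\mathbf c^{\sigma(\alpha)}=\mathbf a^\alpha$ for all $\alpha$, i.e.\ $\mathbf c^\gamma=\mathbf a^{\sigma^{-1}(\gamma)}$; substituting this into the second product rewrites each black Kronecker delta, color by color, in the form $\delta_{a^{\pi_c(\beta)}_c,\,a^{\sigma^{-1}\kappa_c\rho(\beta)}_c}$, where $\pi_c$ (resp.\ $\kappa_c$) is the bijection from black to white vertices given by the color-$c$ perfect matching of $\mathcal Q$ (resp.\ of $\R$). By colorwise distinctness of the $\mathbf a$'s, such a delta is identically $1$ when $\pi_c(\beta)=\sigma^{-1}\kappa_c\rho(\beta)$ and identically $0$ on the generic locus otherwise; hence the term indexed by $(\sigma,\rho)$ is non-trivial exactly when $\sigma\pi_c=\kappa_c\rho$ for every color $c=1,2,3$. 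Together with the facts that $\sigma$ permutes white and $\rho$ permutes black vertices, this intertwining condition is precisely the assertion that the pair $(\sigma,\rho)$ underlies an isomorphism of $3$-colored graphs $\mathcal Q\to\R$ in the sense of the definition of Section \ref{sec:automorphisms} (bipartiteness is built in, edge-coloring and adjacency are exactly the intertwining of the $\pi_c$ with the $\kappa_c$). I expect this translation to be the crux of the proof: one has to see that the momentum-routing data $\mathbf q^\bullet,\mathbf p^\bullet$ encode nothing more and nothing less than the colored adjacency structure of the two graphs, so that an identity between routing functions is equivalent to a colored-graph isomorphism. The colorwise-distinctness hypothesis does double duty here --- it forces the Leibniz matchings to be genuine permutations, and it is what lets one conclude that a pair not coming from an isomorphism produces an identically vanishing term rather than a merely constrained one.

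Finally I would conclude by cases. If $\R\ncong\mathcal Q$, there is no isomorphism $\mathcal Q\to\R$, every term of the sum vanishes, and $\partial\R/\partial\mathcal Q\equiv0$. If $\R\cong\mathcal Q$, fix an identification of the two graphs; then the surviving pairs $(\sigma,\rho)$ are exactly the automorphisms of $\R$, and, as recalled in Section \ref{sec:automorphisms}, each automorphism is determined by its white-vertex part $\hat\sigma$, so the sum is naturally indexed by $\hat\sigma\in\Autc(\R)$ with $\rho$ determined by $\hat\sigma$. For such a surviving $\hat\sigma$ the relation $\sigma\pi_c=\kappa_c\rho$ holds, so once the white deltas $\delta^{\mathbf c^{\sigma(\alpha)}}_{\mathbf a^\alpha}$ are imposed every black delta equals $1$; the term collapses to $\prod_{\alpha}\delta^{\mathbf c^{\sigma(\alpha)}}_{\mathbf a^\alpha}=\delta^{\mathbf c^{\sigma(1)},\ldots,\mathbf c^{\sigma(r)}}_{\mathbf a^1,\mathbf a^2,\ldots,\mathbf a^r}$. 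Summing over $\hat\sigma\in\Autc(\R)$ gives exactly the first branch of \eqref{graph_calc}, which completes the proof.
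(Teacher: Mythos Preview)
Your argument is correct and rests on the same core observation as the paper's proof: after expanding the derivative by Leibniz into a double sum over permutations $(\sigma,\rho)$ of white and black vertices, the only surviving terms are those for which $(\sigma,\rho)$ constitutes a colored-graph isomorphism $\mathcal Q\to\R$. The difference is organizational. The paper exploits the labelling convention $p^\alpha_1=a^\alpha_1$, $q^\alpha_1=c^\alpha_1$ (eq.~\eqref{coloruno}) --- i.e.\ it tacitly sets $\pi_1=\kappa_1=\mathrm{id}$ --- so that the color-$1$ deltas immediately force $\sigma=\tau$, and then argues the non-isomorphic case by locating a single obstructing edge of some color $j\neq 1$. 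You instead treat all three colors symmetrically by naming the color-$c$ perfect matchings $\pi_c,\kappa_c$ and reading off the clean intertwining condition $\sigma\pi_c=\kappa_c\rho$ for every $c$, which is visibly the definition of a colored isomorphism. Your packaging is slightly more structural and does not depend on any preferred color or labelling convention; the paper's version is more hands-on but reaches the same conclusion by the same mechanism. Either way, the colorwise-distinctness hypothesis is used exactly where you say: to turn the equality $a^{\pi_c(\beta)}_c=a^{\sigma^{-1}\kappa_c\rho(\beta)}_c$ of momentum components into the equality $\pi_c(\beta)=\sigma^{-1}\kappa_c\rho(\beta)$ of vertex labels.
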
	
Here $\hat\sigma \in \Autc(\R)$ means the  automorphism 
$\hat\sigma:\R\to \mtc Q$ whose restriction to white vertices
satisfies 
$\hat\sigma|_{\R\hp0_{\vphantom{b}\mathrm{w}}}=
\sigma\in\mathrm{Sym}({\R\hp0_{\vphantom{b}\mathrm{w}}})=\Sym_r$.
Also ``$\cong$'' denotes isomorphism in the sense of colored graphs, and
the $\delta$-function is shorthand for the following product of $3r$
Kronecker-deltas:
% \[
$
\delta^{a_1,a_2,a_3,\ldots,b_1,b_2,b_3}_{c_1,c_2,c_3,\ldots,d_1,d_2,d_3}=
\delta^{a_1}_{c_1}\delta^{a_2}_{c_2}\delta^{a_3}_{c_3}
\cdots \delta^{b_1}_{d_1}\delta^{b_2}_{d_2}
\delta^{b_3}_{d_3}
% \, .\] 
.$
\begin{proof}
If we compute directly using 
\begin{equation}
\label{JbJderivatives}
\frac{\partial J^\sharp_{\mathbf u}}{\partial J^\natural_{\mathbf w}}= 
\delta{^\natural_\sharp}\delta^{u_1}_{w_1}\delta^{u_2}_{w_2}\delta^{u_3}_{w_3} 
\where J^\sharp,J^\natural\in \{J,\bar J\},\end{equation}
then one splits this in the $J$-derivatives and the $\bar J$-terms
\begin{align}  \nonumber
\frac{\partial   \R(\mathbf{c}^1,\ldots,\mathbf{c}^ r)  }
{\partial\mathcal Q(\mathbf{a}^1,\ldots,\mathbf{a}^ r)}= \left.
\frac{\partial^r(J_{\mathbf c^ 1} \ldots J_{\mathbf c^ r})}
{\partial J_{\mathbf p^ 1} \ldots \partial J_{\mathbf p^ r}}\right|_{J=0} \left.
\frac{\partial^r(\bar J_{\mathbf q ^ 1} \ldots \bar J_{\mathbf q^ r})}
{\partial \bar J_{\mathbf p^ 1} \ldots \partial \bar J_{\mathbf p^ r}} \right|_{\bar J=0}
, 
\end{align} 
being the labels of the sources 
fully determined by
\begin{equation} \label{coloruno}
 p^\al_1=a_1^\al \and q^\al_1=c^\al_1  \quad \mbox{for all }  \al=1,\ldots,r. 
\end{equation} 
One can again use eq. \eqref{JbJderivatives} and compute each of these terms:
 \begin{align}  \nonumber
 \left.
\frac{\partial^r(J_{\mathbf c^ 1} \ldots J_{\mathbf c^ r})}
{\partial J_{\mathbf a^ 1} \ldots \partial J_{\mathbf a^ r}}\right|_{J=0}=
\sum\limits_{\sigma\in\Sym_{p}}   \delta^{\mathbf{c}^{\sigma(1)}}_{\mathbf{a}^1} 
\delta^{\mathbf{c}^{\sigma(2)}}_{\mathbf{a}^2} \cdots\,
\delta^{\mathbf{c}^{\sigma(r)}}_{\mathbf{a}^ r} \, , 
 \end{align}
 and
  \begin{align}  \nonumber
 \left.
\frac{\partial^r(\bar J_{\mathbf q^ 1} \ldots \bar J_{\mathbf q^ r})}
{\partial \bar J_{\mathbf p^ 1} \ldots \partial \bar J_{\mathbf p^ r}}\right|_{\bar J=0}=
\sum\limits_{\sigma\in\Sym_{p}}   \delta^{\mathbf{q}^{\sigma(1)}}_{\mathbf{p}^1} 
\delta^{\mathbf{q}^{\sigma(2)}}_{\mathbf{p}^2} \cdots\,
\delta^{\mathbf{q}^{\sigma(r)}}_{\mathbf{p}^ r} \, . 
\end{align}
Then 
\begin{align}  \nonumber
\frac{\partial   \R(\mathbf{c}^1,\ldots,\mathbf{c}^ r)  }
{\partial\mathcal Q(\mathbf{a}^1,\ldots,\mathbf{a}^ r)} & =  
\sum\limits_{\sigma\in\Sym_{p}}  \sum\limits_{\tau\in\Sym_{p}} 
\delta^{\mathbf{c}^{\sigma(1)}}_{\mathbf{a}^1} 
\delta^{\mathbf{q}^{\tau(1)}}_{\mathbf{p}^1} 
\delta^{\mathbf{c}^{\sigma(2)}}_{\mathbf{a}^2} 
\delta^{\mathbf{q}^{\tau(2)}}_{\mathbf{p}^2} \cdots\,
\delta^{\mathbf{c}^{\sigma(r)}}_{\mathbf{a}^ r}    
\delta^{\mathbf{q}^{\tau(r)}}_{\mathbf{p}^ r}
\\
& = \sum\limits_{\substack{\sigma,\tau \in \Sym_r \\ \sigma=\tau}}
\delta^{\mathbf{c}^{\sigma(1)}}_{\mathbf{a}^1} 
\delta^{\mathbf{q}^{\tau(1)}}_{\mathbf{p}^1} 
\delta^{\mathbf{c}^{\sigma(2)}}_{\mathbf{a}^2} 
\delta^{\mathbf{q}^{\tau(2)}}_{\mathbf{p}^2} \cdots\,
\delta^{\mathbf{c}^{\sigma(r)}}_{\mathbf{a}^ r}    
\delta^{\mathbf{q}^{\tau(r)}}_{\mathbf{p}^ r}
\end{align} 
where the restriction to sum only over the diagonal $\tau=\sigma$ is
derived from the color-$1$ deltas by using the index-definition
\eqref{coloruno}:
% \[\delta_{a_1^\al}^{c_1^{\si(\al)}}\delta_{p_1^\al}^{q_1^{\tau(\al)}}=
% \delta_{a_1^\al}^{c_1^{\si(\al)}}\delta_{a_1^\al}^{c_1^{\tau(\al)}}=\delta^{\tau(\al)}_{\si(\al)}, 
% \quad \mbox{ for arbitrary }    \al=1,\ldots,r . \]
$\delta_{a_1^\al}^{c_1^{\si(\al)}}\delta_{p_1^\al}^{q_1^{\tau(\al)}}=
 \delta_{a_1^\al}^{c_1^{\si(\al)}}\delta_{a_1^\al}^{c_1^{\tau(\al)}}=\delta^{\tau(\al)}_{\si(\al)}$
 for arbitrary $\al=1,\ldots,r$.
The second equality follows from the condition  $c_j^ \al \neq c_j^ \gamma$ 
if $\al\neq\gamma$,  for each color $j=1,2,3$.  
Now suppose that $\R\ncong \mtc Q$ and consider,
for an arbitrary  $\sigma\in \Sym_r$, the  following term in the sum: 
\begin{equation}
  \delta^{\mathbf{c}^{\sigma(1)}}_{\mathbf{a}^1} \label{termino}
  \delta^{\mathbf{q}^{\sigma(1)}}_{\mathbf{p}^1} 
  \delta^{\mathbf{c}^{\sigma(2)}}_{\mathbf{a}^2} 
  \delta^{\mathbf{q}^{\sigma(2)}}_{\mathbf{p}^2} \cdots\,
  \delta^{\mathbf{c}^{\sigma(r)}}_{\mathbf{a}^ r}    
  \delta^{\mathbf{q}^{\sigma(r)}}_{\mathbf{p}^ r} \end{equation}
By assumption $ \mtc Q \neq \hat \sigma(\R)$. 
That is, there is a
white vertex (marked by) $\mathbf a^\alpha$, and a color $j\neq 1$,
with the following property: 
\begin{itemize}
\item[-] if $\mathbf p^\nu\in \mtc Q\hp0\bl$ denotes the black
  vertex where the $j$-colored edge $e_j$ beginning at
  $\mathbf a^\al$ ends (i.e. $t(e_j)=\mathbf p^\nu$); and, moreover,
 if $\mathbf q^\gamma\in \R_{\vphantom{b}\mathrm{b}}\hp0$ denotes the 
vertex where the $j$-colored edge at 
 $\mathbf c^{\si(\al)}$ ends; 
  \textit{then}  $\hat \sigma\inv(\mathbf q^\gamma)\neq \mathbf p^\nu$. 
\end{itemize}
This means that the following deltas 
are contained in the term \eqref{termino}:
\begin{equation} \label{term_dos}
\delta^{a_j^\alpha}_{p^\nu_j} \delta^{\mathbf a^\al}_{\mathbf c^{\sigma(\al)}}
\delta^{c^{\sigma(\al)}_j}_{q^\gamma_j}  \delta^{\hat\sigma\inv(\mathbf q^\gamma)}_{\mathbf q^\gamma} 
\end{equation}
On the other hand, consider the $j$-colored edge $g_j$ with
$t(g_j)=\hat\sigma\inv(\mathbf q^{\gamma})$ and the vertex $\mathbf
a^\mu$ with $s(g_j)=\mathbf a^\mu$.  Because of $\hat
\sigma\inv(\mathbf q^\gamma)\neq \mathbf p^\nu$ and as consequence of
the regularity of the coloring of the graph one has $\mu\neq
\alpha$. Thus, the term \eqref{termino} contains, on top of
\eqref{term_dos},  $\delta^{\mathbf a^\mu}_{\hat \sigma\inv(\mathbf
  q^\gamma) }$.  By using the assumption, $a_j^ \al \neq a_j^ \mu$ one
gets
% \[
$
\delta^{a_j^\alpha}_{p^\nu_j} \delta^{\mathbf a^\al}_{\mathbf c^{\sigma(\al)}}
\delta^{c^{\sigma(\al)}_j}_{q^\gamma_j}  \delta^{\hat\sigma\inv(\mathbf q^\gamma)}_{\mathbf q^\gamma}  \delta^{\mathbf a^\mu}_{\hat \sigma\inv(\mathbf q^\gamma) } =0.
$
% \]
Since this holds for arbitrary $\sigma$, then $\R\ncong \mtc Q$
implies that $\partial \R/\partial \mtc Q\equiv 0$.  Hence
 \begin{align}  \nonumber
\frac{\partial   \R(\mathbf{c}^1,\ldots,\mathbf{c}^ r)  }
{\partial\mathcal Q(\mathbf{a}^1,\ldots,\mathbf{a}^ r)}=
\sum\limits_{\sigma\in\Sym_{r}}\delta (\hat \sigma(\R), \mathcal{Q}) 
\delta^{\mathbf{ c}^{\sigma(1)},\ldots, \mathbf{c}^{\sigma(r)}}
      _{\mathbf{a}^1,\mathbf{a}^2, \ldots,\mathbf{a}^ r}\, , 
% \end{align} 
% where
% \[
   \mbox{ with }
 \delta (\hat \sigma (\R), \mathcal{Q}) 
 :=\begin{cases} 
  1
  & \qquad \mbox{if } \,\,\, \hat\sigma(\R)  =\mathcal Q\, , 
 \\
   0 & \qquad \mbox{if } \,\,\, \hat\sigma(\R) \neq \mathcal Q\,.  
 \end{cases}
\end{align} 
 The sole non-vanishing terms are precisely the automorphisms of $\R$
 and the result follows.
\end{proof}

To better comprehend this formula, notice that the derivative
$\partial/\partial \mathcal Q$ still has momentum dependence, for
$\mathcal Q$ has external lines as vertices.  For instance,
\[
\frac{\partial}{\partial\,{\raisebox{-.33\height}{\includegraphics[width=.7cm]{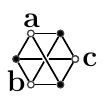}}}}
=\frac{\partial^6}{
\partial J_{a_1a_2a_3}
\partial \bar J_{a_1b_2c_3}
\partial J_{b_1b_2b_3}
\partial \bar J_{a_3b_1c_2}
\partial J_{c_1c_2c_3}
\partial \bar J_{a_2b_3c_1}
}\,.
\]
For $\{\mathbf{a,b,c}\}$ and $\{\mathbf{e,f,g}\}$ subsets of $\Z^3$ satisfying 
the hypothesis of Lemma \ref{lem:graph_independence},   
\[
\frac{\partial}{\partial\,{\raisebox{-.33\height}{\includegraphics[width=.7cm]{gfx/externo2_k33.pdf}}}}
\bigg(\raisebox{-.4\height}{\includegraphics[width=1.2cm]{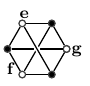}}  \!\!\!\bigg)=
\delta_{\mathbf{a}}^{\mathbf{e}} \delta_{\mathbf{b}}^{\mathbf{f}} \delta_{\mathbf{c}}^{\mathbf{g}}
+\delta_{\mathbf{a}}^{\mathbf{g}} \delta_{\mathbf{b}}^{\mathbf{e}} \delta_{\mathbf{c}}^{\mathbf{f}}
+\delta_{\mathbf{a}}^{\mathbf{f}} \delta_{\mathbf{b}}^{\mathbf{g}} \delta_{\mathbf{c}}^{\mathbf{e}}\,,
\]
holds for $\{a_d\neq b_d \neq c_d \neq a_d\}_{d=1,2,3}$
and has the same information as $\Autc(\kthreeg)\simeq \Z_3$.
 \par 
 Formula \eqref{graph_calc} can be directly generalized to
 non-connected graphs.  Any graph $\R\in \amalg\tcol$ that has $s$
 different connected components, each of multiplicity $m_i$,
 $i=1,\ldots,s$, can be split according to
\begin{equation} \label{tipo_genearal}
\R= \R_1^1 \sqcup \R^1_2\sqcup \ldots\sqcup \R^1_{m_1}\sqcup 
\cdots\sqcup \R^s_{1}\sqcup \R^s_{2}\sqcup \ldots \sqcup \R^s_{m_s}\,,
\end{equation}
where the subindices only label copies of the same graph $\R^k_{\,\!^{_*}}$.  
Using a similar expression for $\mtc{Q}\in \amalg\tcol$,
one finds, 
\begin{align}\label{graph_calc_disc}
\frac{\partial   \R(\mathbf{c}\hp1,\ldots,\mathbf{c}\hp r)  }
{\partial\mathcal Q(\mathbf{a}\hp1,\ldots,\mathbf{a}\hp r)}=
\left.\frac{\partial^{|\mathcal{Q}\hp0|}}{\partial  \mathbb{J}(\mathcal Q)} 
\mathbb{J}(\mathcal R)\right|\normalsize_{\substack {J=0 \\
\bar J=0}}=
 \begin{cases}
\sum\limits_{\sigma\in\Autc(\R)} 
\delta^{\sigma(\mathbf{ c}\hp1),\ldots,\sigma(\mathbf{c}\hp r)}
      _{\mathbf{a}\hp1,\mathbf{a}\hp2, \ldots,\mathbf{a}\hp r} & 
      \,\, \mbox{if } \,\, \R \neq \mathcal Q \,,
 \\
 \!\! \qquad 0 & \,\, \mbox{if } \,\, \R \neq \mathcal Q \,, 
 \end{cases}
\end{align}
which we again denote by $ \partial\R / \partial\mathcal{Q}$.
For $\R$ of the type \eqref{tipo_genearal}, 
this derivative contains certain number $\sigma(\Rb)$ Kronecker deltas, being 
$\sigma(\Rb) := m_1!\ldots m_s! 
|\Autc(\R^1_{^{\,\!_\bullet}})|^{m_1}\cdots 
|\Autc(\R^s_{^{\,\!_\bullet}})|^{m_s}$. 
This explains the factors accompanying the Green's functions in 
the expansion \eqref{ansatz2}.
\begin{lemma}\label{thm:recover}
  For $\B \in \amalg\tcol$, let $\mathcal N$ be the number of vertices of
  $\B$. Then the $\mtc N$-point function corresponding to $\B$ is
  non-trivial and can be recovered from the free energy $W$ as
  follows:
\[
G^{(\mathcal N)}_{\B}=\left.\frac{\partial }{\partial  \B}
W[J,\bar J]\right|_{J=0=\bar J}
:=\left.\frac{\partial^{|\B\hp0|}}{\partial {\mathbb J(\B)}} 
W[J,\bar J]\right|_{J=0=\bar J}\,.
\]
\end{lemma}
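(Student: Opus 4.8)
The plan is to differentiate the boundary-graph expansion \eqref{expansion_W} of the free energy $W[J,\bar J]$ term by term and to use the graph calculus of Lemma \ref{lem:graph_independence}---in its disconnected form \eqref{graph_calc_disc}---to isolate the single summand indexed by $\B$.

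Concretely, I would first fix external momenta $\mathbf a^1,\ldots,\mathbf a^r\in\Z^3$ with $r=|\B\hp0|/2=\mathcal N/2$ that are colorwise pairwise distinct, so the hypotheses of Lemma \ref{lem:graph_independence} hold; the identity will then be proven as an equality of functions on this generic locus, which suffices once one keeps the standing convention---forced by the expansion itself, since $\mathbb J(\B)(\{\mathbf a\})$ is $\Autc(\B)$-invariant as a function of the momenta---that $G\hp{\mathcal N}_\B$ is symmetric under the action of $\Autc(\B)$ on its arguments. Applying $\partial^{|\B\hp0|}/\partial\mathbb J(\B)$ to \eqref{expansion_W} and using that the Green's functions carry no $J,\bar J$ dependence, the derivative acts only on the monomials $\mathbb J(\B')$, so that
\begin{align*}
\left.\frac{\partial}{\partial\B}W\right|_{J=\bar J=0}\!(\{\mathbf a\})
&=\sum_{k=1}^\infty\sum_{\substack{\B'\in\amalg\tcol\\ 2k=|\B'\hp0|}}\frac{1}{|\Autc(\B')|}\sum_{\{\mathbf c\}}G\hp{2k}_{\B'}(\{\mathbf c\})\left.\frac{\partial\,\mathbb J(\B')(\{\mathbf c\})}{\partial\,\mathbb J(\B)(\{\mathbf a\})}\right|_{J=\bar J=0}.
\end{align*}
By \eqref{graph_calc_disc} the inner derivative vanishes whenever $\B'\ncong\B$ (in particular whenever $2k\neq\mathcal N$), while for $\B'\cong\B$ it equals $\sum_{\sigma\in\Autc(\B)}\delta^{\sigma(\mathbf c^1),\ldots,\sigma(\mathbf c^r)}_{\mathbf a^1,\ldots,\mathbf a^r}$. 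Carrying out the sum over $\{\mathbf c\}$ collapses each Kronecker product, plugging $\mathbf c$ back into $G\hp{\mathcal N}_\B$ up to the relabelling $\sigma$; by the $\Autc(\B)$-symmetry just mentioned this produces $|\Autc(\B)|$ equal copies of $G\hp{\mathcal N}_\B(\{\mathbf a\})$, and the prefactor $1/|\Autc(\B)|$ cancels them, leaving precisely $G\hp{\mathcal N}_\B(\{\mathbf a\})$.

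For the non-triviality statement I would invoke Theorem \ref{thm:completeness_disconn}: every $\B\in\amalg\tcol$ occurs as $\partial\G$ for some honest Feynman graph $\G\in\feyn$, hence $\B$ genuinely appears in the sum \eqref{expansion_W}; since the amplitude of such a $\G$ at its lowest perturbative order is a nonzero monomial in the coupling, the coefficient $G\hp{\mathcal N}_\B$ cannot be identically zero.

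Almost everything here is bookkeeping; the one point that needs care is the matching of symmetry factors---one must verify that the $|\Autc(\B)|$ terms generated by the graph calculus form exactly the $\Autc(\B)$-orbit on which $G\hp{\mathcal N}_\B$ is constant, so that the $1/|\Autc(\B)|$ normalization in \eqref{expansion_W} is cancelled without remainder---together with the (harmless) genericity hypothesis on the momenta inherited from Lemma \ref{lem:graph_independence}. I expect this matching to be the main, albeit minor, obstacle.
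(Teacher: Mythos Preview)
Your proposal is correct and follows essentially the same approach as the paper: differentiate the boundary-graph expansion \eqref{expansion_W} term by term, invoke the disconnected graph calculus \eqref{graph_calc_disc} so that only the $\B'\cong\B$ summand survives, and let the $|\Autc(\B)|$ Kronecker products cancel the prefactor $1/|\Autc(\B)|$. The paper's proof is terser---it simply singles out the $\B$-term and declares the remainder vanishes---whereas you spell out the $\Autc(\B)$-symmetry mechanism and the genericity hypothesis on the momenta; these are details the paper leaves implicit but which your version makes explicit, and your appeal to Theorem~\ref{thm:completeness_disconn} for non-triviality is in fact the correct reference for disconnected $\B$.
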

\begin{proof}
In the expansion \eqref{expansion_W}, we single out $\B$ and
derive with respect to $\B$:
\[
\left.\frac{\partial^{|\B\hp0|}}{\partial  {\mathbb J(\B) }} 
W[J,\bar J]\right|_{J=0=\bar J}=
\frac{\partial }{ \partial \B} \left( W[J,\bar J] - 
\frac1{\sigma(\B)}G^{(\mathcal N)}_{\B}\mathbb J(\B) \right) + 
\frac{1}{\sigma(\B)}\frac{\partial }{ \partial \B}
G^{(\mathcal N)}_{\B}\B \,.
\]
The first summand vanishes, since $\B$ does not appear in that sum of
terms. The second term yields, after equation \eqref{graph_calc_disc},
precisely $G^{(\mathcal N)}_{\B}$.  In the $\phi^4_3$-theory, this
Green's function is non-trivial, for there exists at least one graph,
whose boundary is $\B$, as stated by Lemma \ref{thm:completeness3}.
\end{proof}

For instance, the $6$-point function $
G\hp6 _{\raisebox{-.4\height}{\includegraphics[width=2ex]{gfx/logo_k33_blanco.pdf}}}$ 
reads in full notation
\[
G\hp6 _{\raisebox{-.4\height}{
\includegraphics[width=2ex]{gfx/logo_k33_blanco.pdf}}}
(\mathbf{a},\mathbf{b},\mathbf{c}) 
=\left.\frac{\partial^6 W[J,\bar J]}{
\partial J_{a_1a_2a_3}
\partial \bar J_{a_1b_2c_3}
\partial J_{b_1b_2b_3}
\partial \bar J_{a_3b_1c_2}
\partial J_{c_1c_2c_3}
\partial \bar J_{a_2b_3c_1}}
\right|_{J=0=\bar{J}} \,.
\]

\subsection{Arbitrary-rank graph calculus}
As is it obvious from the proofs, the results in previous section do
not rely on the number of colors.  In fact, we claim that for any
rank-$D$ model $V(\phi,\bar\phi)$, the following expansion in boundary-graphs
holds:
\begin{align} \label{expansion_Wgeneral}
 W[J,\bar J] & = \sum\limits_{k=1}^\infty
  \sum\limits_{\substack{\B \in \partial\fey_D(V(\phi,\bar\phi))  \\ 
k=\frac12\#(\B^{(0)})}}  \frac{1}{|\Autc(\B)|} 
G^{(2k)}_{\B} \star \mathbb J(\B) \, . 
\end{align} 
Here, we have set $N=1$, which can be reverted by a rescaling of the kinetic term and of 
each of the correlation functions $G\hp{2k}_\B \to N^{\gamma(\B)}G\hp{2k}_\B$,
and $\gamma(\B)$ should be determined. We postpone this task and depart from the
simplified version, eq. \eqref{expansion_Wgeneral}.
For the $\phi^4_\mathsf{m}$-theory, the sum is over all
$\partial\fey_D(\phi^4_\mathsf{m})=\amalg\Grph{D}$, as consequence of Theorem
\ref{thm:completeness_disconn}.  For that model the free energy $W_{D=4}[J,\bar J]$ 
to $\mathcal O(6)$ is then given by 
\begin{align*}
% \label{expansion_congraficas4}
% correct factors
 W_{D=4}[J,\bar J] &=
G\hp 2 _{\includegraphics[width=.6cm]{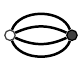}} \star \mathbb{J}\big(
\raisebox{-.34\height}{\includegraphics[width=0.6cm]{gfx/icono_melon4.pdf}}\big) + 
\frac1{2!}
G\hp 4 _{|\raisebox{-.4\height}{\includegraphics[width=.6cm]{gfx/icono_melon4.pdf}}|
\raisebox{-.4\height}{\includegraphics[width=.6cm]{gfx/icono_melon4.pdf}}|} \star \mathbb{J}\big(
\raisebox{-.34\height}{\includegraphics[width=0.6cm]{gfx/icono_melon4.pdf}}^{\sqcup \,2}\big) 
  \\  
  \nonumber
  & \,\,
+ \sum_{k=1}^4 \frac1{2}G\hp4_{\raisebox{-.4\height}{\includegraphics[width=.6cm]{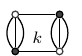}}} \star \mathbb{J}
\big(  
\raisebox{-.34\height}{\includegraphics[width=1.2cm]{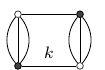}} \! \big)
+\sum_{k=2}^4 \frac1{2}G\hp4_{\raisebox{-.4\height}{\includegraphics[width=.9cm]{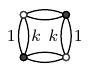}}} 
\star \mathbb{J}
\bigg(\!\!
\raisebox{-.4\height}{\includegraphics[width=1.3cm]{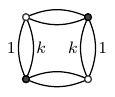}} \!\!\bigg) \,.
\end{align*}
We omit the next $49=41_{\mtr{conn.\, boundary }}+8_{\mtr{disconn. \,boundary}}$  \textit{connected} 
$\mathcal O(6)$-multipoint functions. 
The way to recover the correlation function $\G_\B$ from 
the free energy, in arbitrary rank is 
described by an obvious generalization of Lemma \ref{thm:recover}.

\section{The full Ward-Takahashi Identity}\label{sec:WI}
The Ward-Takahashi Identities for tensor models are inspired by 
those found for the Grosse-Wulkenhaar model given by the action
\begin{align} 
\int \dif{^4x}\;\left(   \frac{1}{2} (\partial_\mu \phi) \star
(\partial^\mu \phi) + \frac{\Omega^2}{2} 
(\tilde{x}_\mu \phi) \star (\tilde{x}^\mu \phi) + \frac{\mu^2}{2} \phi\star\phi 
 \label{GW}
+  \frac{\lambda}{4!} \phi \star \phi\star \phi \star \phi\right)(x) \,,
\end{align}
on Moyal $\re^4$. Here $x_\mu=2\Theta\inv_{\mu\nu}x^\nu$, being
$\Theta$ a $4\times4$ skew-symmetric matrix, that also parametrizes
the Moyal product
$ (f\star g)(x) = \int \frac{\dif{^4k}}{(2\pi)^4} \int \dif{^4} y
f(x{+}\tfrac{1}{2} \Theta {\cdot} k)\, g(x{+}y)\,
\mathrm{e}^{\mathrm{i} k \cdot y}$
on the Schwartz space, $f,g\in \mathcal S(\re^4,\C)$.  The
modification by a harmonic oscillator term makes the theory dual under
certain ``position-momentum''-duality, also known as
Langmann-Szabo-duality \cite{LS-duality}.  The authors of the model
have shown that their action \eqref{GW} with $\Omega=1$ can be grasped
as a matrix model in such a way that it fits in the setting of
\eqref{GW_as_matrix} by using the Moyal matrix base
\cite{gw:matrixbase04}.  We generalize the existent WTI in
\cite{DineWard} by following the non-perturbative strategy by
\cite[Sec. 2-3]{gw12}.

\subsection{Derivation of the Ward-Takahashi Identity}
We set $N=1$ from now on, which does not affect our analysis.
We consider an arbitrary colored tensor model in $D$ colors
$V(\phi,\bar\phi)=\sum_\B \lambda_\B \Tr_\B(\phi,\bar\phi)$, as
pointed out in Definition \ref{def:general_model}, with non-trivial
kinetic form
$S_0(\phi,\bar\phi)=\sum_{p\in\Z^D}\bar \phi_{p_1\ldots p_D}
E_{p_1\ldots p_D} \phi_{p_1\ldots p_D}$,
$E\neq \mtr{id}$, with $E$ self-adjoint. The measure
$\Df[\phi,\bar \phi]$ in the path integral of such a model is
invariant under the action of each factor of the group
$\mtr U(N)\times\ldots\times \mtr U(N)$.  We take an infinitesimal
transformation in its $a$-th factor
\[W_a\in \mtr U(N), \qquad W_a=1+\ii \alpha T_a + \mtc O(\al^2), 
\qquad T_a^\dagger=T\vphantom{^\dagger}_a,\] 
for any $a=1,\ldots, D$ and recast the invariance of 
% \[
the partition function 
% $
% \int \Df[\phi,\bar \phi]\ee^{-N^{D-1}(S_0(\phi,\bar\phi)+
% V(\phi,\bar\phi) ) + \Tr_2(\bar \phi,J) +\Tr_2(\bar J,\phi)
% }$
% \]
with respect to this group action as the following matrix equation:
\begin{equation} \label{invariancia}
\fder{\log{Z[J,\bar J]}}{ T_a}=0.
\end{equation}
In the sequel, we drop the $N^{D-1}$ prefactors, which can be restored by rescaling $E$ and
the coupling constant(s). Denote by 
$F$ the source term 
$\Tr_2(\bar \phi,J) +\Tr_2(\bar J,\phi)$. 
One finds by using 
\begin{align*}
\fder{F(J,\bar J)}{(T_a)_{m_an_a}}
 = 
 \sum\limits_{p_i\in\,\Z} &\left[
\bar J_{p_1\ldots p_{a-1}m_a \ldots p_D}
\phi_{p_1\ldots p_{a-1}n_a \ldots p_D} \right. 
% \\
 -
%  & 
\left.\bar\phi_{p_1\ldots p_{a-1}n_a\ldots p_D}
J_{p_1\ldots p_{a-1}m_ap_{a+1}\ldots p_D} 
\right]\, 
\end{align*}
and
\begin{align*}
\fder{S(\phi,\bar\phi)}{(T_a)_{m_an_a}}=
\fder{S_0(\phi,\bar\phi)}{(T_a)_{m_an_a}}&=
 \sum\limits_{p_i\in\,\Z}
 \left[
\bar \phi_{p_1\ldots p_{a-1}m_ap_{a+1}\ldots p_D}
E_{ {p_1\ldots p_{a-1}n_ap_{a+1}\ldots p_D}}
\phi_{p_1\ldots p_{a-1}n_ap_{a+1}\ldots p_D} \right. \\
&\qquad\, -\left.
\phi_{p_1\ldots p_{a-1}n_ap_{a+1}\ldots p_D}
E_{p_1\ldots p_{a-1}m_ap_{a+1}\ldots p_D}
\bar\phi_{p_1\ldots p_{a-1}m_ap_{a+1}\ldots p_D}\right]\,,
\end{align*}
that eq. \eqref{invariancia} implies 
 \begin{align*} & \!\!\!\!\!\!\!
 \int\Df[\phi,\bar \phi]\ee^{-S+F} 
   \sum\limits_{p_i\in\,\Z}
 \left[
\left(
E_{ {p_1\ldots  m _a \ldots p_D}}
-E_{ {p_1\ldots  n_a \ldots p_D}}
\right) \bar\phi_{p_1\ldots p_{a-1}m_ap_{a+1}\ldots p_D}
\phi_{p_1\ldots p_{a-1}n_ap_{a+1}\ldots p_D} \right]
\\
=& 
\int\Df[\phi,\bar \phi]
\ee^{-S+F}
\sum\limits_{p_i\in\,\Z}
\left(
\bar J_{ {p_1\ldots   m_a  \ldots p_D}}
\phi_{ {p_1\ldots   n _a \ldots p_D}}
-
\bar \phi_{ {p_1\ldots  m_a \ldots p_D}}
J_{ {p_1\ldots  n _a \ldots p_D}}
\right) \,.
 \end{align*}
Hence, after functional integration, one gets
  \begin{align*}
  &
 \sum\limits_{p_i\in\,\Z} \left(
E_{ {p_1\ldots  p_{a-1}m_a p_{a+1}\ldots p_D}}
-E_{ {p_1\ldots  p_{a-1}n_ap_{a+1} \ldots p_D}}
\right) 
\times \\
&
\quad \qquad \,\,\qquad\fder{ }{J}_{p_1\ldots p_{a-1}m_ap_{a+1}\ldots p_D}
\fder{}{\bar J}_{p_1\ldots p_{a-1}n_ap_{a+1}\ldots p_D} 
\exp\left(-S_{\mathrm{int}}(\delta/\delta \bar J,\delta/\delta J)\right)
\ee^{\sum_{\mathbf{q}} \bar J_{\mathbf q}
E _{\mathbf q}\inv J_{\mathbf q}}\\
=&
\sum\limits_{p_i\in\,\Z}
\left(
\bar J_{ {p_1\ldots  p_{a-1}m_a p_{a+1}\ldots p_D}}
\fder{ }{\bar J}_{ {p_1\ldots  p_{a-1}n_ap_{a+1} \ldots p_D}} 
\right.  \\  
 &\qquad \qquad   -
\left. 
J_{ {p_1\ldots  p_{a-1}n_ap_{a+1} \ldots p_D}}
\fder{ }{ J}_{ {p_1\ldots  p_{a-1}m_a p_{a+1}\ldots p_D}}
\right) 
 \exp{\left(-S_{\mathrm{int}}(\delta/\delta \bar J,\delta/\delta J)\right)
 }
\ee^{ \sum_{\mathbf{q}} \bar J_{\mathbf q}
E _{\mathbf q}\inv J_{\mathbf q}}\,.
 \end{align*} 
 We have identified $\delta/\delta J_{\mathbf{p}}$ 
 with $\bar \phi_{\mathbf p}$ and $\delta/\delta \bar J_{\mathbf{p}}$ 
 with $\phi_{\mathbf{p}}$. Thus, the preliminary WTI reads 
 \begin{align}  
& \!\!\!\!\! \sum\limits_{p_i\in\Z }\left(
E_{ {p_1\ldots  p_{a-1}m_a p_{a+1}\ldots p_D}}
-E_{ {p_1\ldots  p_{a-1}n_ap_{a+1} \ldots p_D}} 
 \right)% \\
%   & \qquad \times
  \fder{^2 Z[J,\bar J]}{J_{{p_1\ldots  p_{a-1}m_ap_{a+1} \ldots p_D}}   \nonumber
\bar J_{{p_1\ldots  p_{a-1}n_ap_{a+1} \ldots p_D}}} 
\\
= &\sum\limits_{p_i\in\Z }\left\{  
\bar J_{p_1\ldots  p_{a-1}m_ap_{a+1} \ldots p_D} 
\fder{}{\bar J}_{p_1\ldots  p_{a-1}n_a  \ldots p_D}
-J_{p_1\ldots  p_{a-1}n_a p_{a+1}  \ldots p_D}
\fder{ }{J}_{p_1\ldots  p_{a-1}m_a \ldots p_D}\right\}Z[J,\bar J]\,. \label{ward1}
\end{align} 
% The sums are over $D-1$ variables in $\Z$. 
As mentioned in the introduction, a Ward-Takahashi Identity
was obtained already in \cite{DineWard,us}. Namely, if we 
derive eq. \eqref{ward1} with respect to 
 \[\fder{ }{J}_{q_1\ldots  q_{a-1}n_aq_{a+1} \ldots q_D} 
 \fder{ }{\bar J}_{q_1\ldots  q_{a-1}m_aq_{a+1} \ldots q_D} \, ,   \]  
we obtain a relation between ``the $4$-point function''
 and the following difference of $2$-point functions (here
 also expressed in graphical notation, which we wish to surrogate 
 by $G_\B\hp{2k}$'s): 
\begin{equation}\label{eq:warddine}
\raisebox{-.455\height }{
\includegraphics[height=3.50cm]{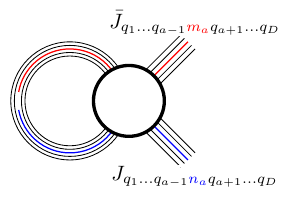} } 
\!\!\!\!\!\!\!\!\!\!\!\!\!\!\!\!\!
=
\raisebox{-.455\height }{
\includegraphics[height=3.5cm]{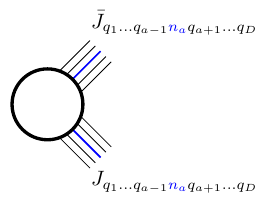} }
\!\!\!\!\!\!\!\!\!\!\!\!\!\!\!\!\!
- \quad
\raisebox{-.455\height }{
\includegraphics[height=3.5cm]{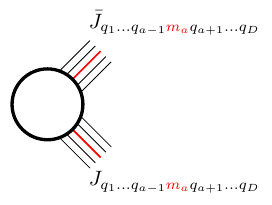} }
\end{equation}
The term in the LHS is defined as follows:
\[
% \raisebox{-.455\height }{
% \includegraphics[height=4.5cm]{} } 
% \!\!\!\!\!\!\!\!
% := 
\sum\limits_{p_i\in\,\Z} (E_{{p_1\ldots  p_{a-1}m_a p_{a+1}\ldots p_D}}  - 
E_{p_1\ldots  p_{a-1}n_a p_{a+1}\ldots p_D})\cdot\!\!\!\!\!\!\!
\raisebox{-.455\height }{
\includegraphics[height=3.5cm]{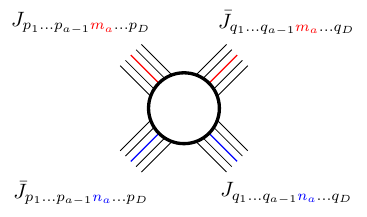} }
\] 
The issue is that for arbitrary degree $D$, there exist not only those
$4$-point functions. For instance, for $D=3$ there are four $4$-point
functions and for $D=4$ there are eight $4$-point functions. So we opt
for an analytical method that shows this missing structure. With that
aim, we need to solve eq. \eqref{ward1} for the double derivative of
$Z$. One can split $\delta^2Z/\delta J_{p_1\ldots m_a\ldots p_D}\delta
\bar J_{p_1\ldots n_a\ldots p_D}$ as a sum of a \textit{singular}
part, denoted by $Y\hp a_{m_a}$ and defined by being all the terms in
there proportional to $\delta_{m_an_a}$, and the \textit{``regular
  contribution''}. While the latter can be read off, no vestige from
$Y\hp a_{m_a}$ remains, for it is annihilated by $E_{{p_1\ldots m_a
    \ldots p_D}} - E_{{p_1\ldots n_a \ldots p_D}} $ in
eq. \eqref{ward1}. A direct approach with graphs does not consider
those contributions.  Of course, our method reduces to the result to
\eqref{eq:warddine} when $m_a\neq n_a$.  In order to find the singular
contributions, we need to introduce some terminology.

\begin{definition} \label{def:arista_weg} Let $\B\in \amalg \Grph{D}$ and
  $e\in \B\hp 1$.  The graph $\B\ominus e$ is defined as the graph
  that is formed after removal of \textit{all} the edges between the
  two vertices $e$ is attached at, and by subsequently colorwise gluing
  the remaining edges. More formally, we let
\begin{align*}
s\inv (s(e)) =: (s\inv (s(e))\cap t\inv(t(e))) \cupdot \,
A_{s(e)}, 
\quad 
t\inv(t(e))=: (s\inv (s(e))\cap t\inv(t(e)) ) \cupdot \, A_{t(e)}\,.
\end{align*}
We let $I(e)$ be the set of colors in of the edges $s\inv (s(e))\cap t\inv(t(e))$.
Then the coloring of $A_{s(e)}$ and of $A_{t(e)}$
agrees, both being equal to $\{1,\ldots,D\}\setminus I(e)$. 
We define $\B\ominus e$ by 
\begin{align*}
  (\B\ominus e)\hp0 &= \B\hp 0 \setminus \{s(e),t(e)\}\,, \\
  (\B  \ominus e ) \hp 1& = \B\hp 1 \setminus (s\inv (s(e))\cap t\inv(t(e)) ) / (A_{s(e)}  
  \sim_\mathrm{c} A_{t(e)})\,,
\end{align*}
where $f\sim_{ \mathrm{c}} g$ iff $f\in A_{s(e)}$ and $g\in A_{t(e)}$ have the
same color; see 
Figure \ref{escision}.  
By definition $\mathbb J(\emptyset)=1$, 
so $\mathbb J((\raisebox{-.3\height}
{\includegraphics[height=1.9ex]{gfx/Item2_Melon.pdf}})
\ominus e)=1$
for any edge $e$ of $\raisebox{-.3\height}
{\includegraphics[height=1.9ex]{gfx/Item2_Melon.pdf}}.$
\begin{figure}\centering %[H]
\includegraphics[width=11cm]{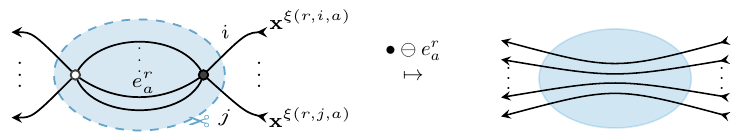}  
\caption{ On the definition of the graph $\B \ominus e_a^r$ in an arbitrary 
number of colors, being $a$ one of them. The graph 
on the left locally represents the $a$-colored edge $e_a^r$  and 
the vertices $s(e_a^r)$ and $t(e_a^r)$. The dipole that they form 
is removed and broken edges are colorwise glued (right graph)  
% Since we removed vertices and edges, the new incidence
% % relations are now specified as follows: the incidence 
% % at edges disconnected from $s(e)$ and $t(e)$ 
% remain as in $\B$, but, since the 
% the colors of the set of edges $s\inv(s(e))\setminus t\inv(t(e)) $  
% and $t\inv(t(e)) \setminus s\inv(s(e))$ coincide,
% one glues the half-edges colorwise joined.
\label{escision}}
\end{figure}

\end{definition}

Keeping in mind the Ward-Takahashi Identity for a fixed color $a$ and
fixing the entries $(m_an_a)$ of the generator $T_a$, we shall define an
operator
$\Delta_{m_a,r}^\B : \C^{((\Z^D)^k)} \to \C^{((\Z^D)^{k-1})}$. 
In order to
do so, we need first to introduce more notation concerning the edge
removal $\B\ominus e^r_a$ as in Definition \ref{def:arista_weg}.  
Let $\B\in \partial \fey_D(V)$ with $|\B\hp 0|\geq 4$.
To stress the essence of the discussion 
we assume that the boundary graph $\B$ is connected and leave the extension 
of this discussion of the full disconnected boundary sector to Appendix \ref{app:B}.
We label the (say) white vertices of a boundary graph, $\B$, and of
  $\B\ominus e^j_c$ by momenta $\mathbf{x}^i\in\Z^D$ and denote by $e^j_c$
  the edge of color $c$ at the vertex $\mathbf{x}^j\in\Z^D$:
\begin{align} \label{nombrarvertices}
\B\hp 0\wh=(\mathbf x^1,\ldots, \mathbf x^k),\,\,\,\, (\B\ominus e^r_a)\hp0\wh =
(\mathbf x^1,\ldots,\widehat{\mathbf x^r},\ldots \mathbf x^k) =
(\mathbf y^1,\ldots, \mathbf y^{k-1}), \, (\mathbf y^l\in\Z^D)\,.
\end{align}
% \end{definition} 

When $\B\ominus e^r_a$ is formed out of $\B$, one removes, in
particular, a single black vertex $t(e_a^r)$.  By regularity, certain
edge $e_i^{\xi(r,i,a)}$ of color $i$ is in $A_{t(e_a^r)}$, and this
edge determines again a unique white vertex $\mathbf x^{\xi(r,i,a)}$,
$1\leq \xi(r,i,a)\leq k$, by the relation
$s(e_i^{\xi(r,i,a)})=\mathbf x^{\xi(r,i,a)} \in\B\hp 0$.  If we pick
$ i\in \{1,\dots, D\}\setminus I(e_a^r)$, which is the color-set of
$A_{t(e_a^r)}$, one has $\xi(r,i,a)\neq r$ and, furthermore,
$\mathbf x^{\xi(r,i,a)}$ remains in $\B\ominus e_a^r$. This vertex is
renamed $\mathbf y^{\kappa(r,i,a)} \in (\B\ominus e_a^r)\hp{0}$
following \eqref{nombrarvertices}, whence
\begin{equation}
\kappa(r,i,a)= \begin{cases}
                \xi(r,i,a)  & 
                \mbox{ if } \xi(r,i,a) < r\,, \\
                \xi(r,i,a)-1   & 
                \mbox{ if } \xi(r,i,a) > r\,.
               \end{cases} \label{def:kappa}
\end{equation}

\begin{definition}\label{def:def_z}
  Keeping fixed the color $a$ and entries $(m_an_a)$ of
  a generator of the $a$-th summand $\mathfrak u(N)$ in the Lie
  algebra $\mathsf{Lie}(\mathrm{U}(N)\times \ldots \times \mathrm{U}(N))$, we consider
  the Green's function 
  $G_\B\hp {2k}:(\Z^D)^k\to \C$
  associated to a boundary-graph $\B\in \partial 
  (\fey_D(V))$.  For any integer $r$, $1\leq r\leq
  k$, we define the function $\Delta_{m_a,r}^\B G_\B\hp{2k}:
  (\Z^D)^{k-1}\to \C$ by
\[
(\Delta_{m_a,r}^\B G_\B\hp{2k})(\mathbf y^1,\ldots, \mathbf y^k )=
\begin{cases}
 \sum\limits_{\substack{q_h  }}
G_\B\hp{2k}(\mathbf y^1,\ldots,\mathbf y ^ {r-1}, {\mathbf z}^{r}(m_a,\mathbf q),\mathbf y^r,
\ldots,\mathbf y^ k ) 
& 
\mbox{ if } I(e_a^r)\neq \{a\} \,,
\\ \quad \,
G_\B\hp{2k}(\mathbf y^1,\ldots,\mathbf y ^ {r-1}, {\mathbf z}^{r}(m_a,\mathbf q),\mathbf y^r,
\ldots,\mathbf y^ k ) 
& \mbox{ if } I(e_a^r)=\{a\}\, ,
\end{cases}
\]
where $I(e_a^r)$ is the \textit{set of colors} of the edges
$s\inv(s(e_a^r))\cap t\inv (t(e_a^r))$ and $q_h$, for any
$h\in I(e_a^r)\setminus\{a\}$, is a dummy variable to be summed
over. The momentum ${\mathbf z}^ r\in\Z^D$ has entries defined by:
\[
z_i^ r=
\begin{cases}
m_a & \qquad \mbox{ if } \,\, i=a\,,\\
q_i & \qquad \mbox{ if }\,\, i\in I(e_a^r)\setminus \{a\}\,,\\
y_i^{\kappa(r,i,a)} & \qquad \mbox{ if }\,\, i \in \mbox{colors of } A_{t(e_a^r)} =
\{1,\ldots, D\}\setminus I(e_a^r) \,,
 \end{cases}
\]
where $\mathbf y^{\kappa(r,i,a)}$ ($1\leq \kappa(r,i,a)< k$) is the
white vertex $\B\ominus e^r_a$ defined in \eqref{def:kappa} (see also
Fig. \ref{fig:colapso_2}).
This definition depends on the labeling of the vertices. However, the
pairing $\langle\!\langle G\hp{2k}_{\B} , {\B} \rangle\!\rangle_{m_a}$
defined as follows does not:
\begin{equation}
 \langle\!\langle 
G\hp{2k}_{\B} , 
{\B} \rangle\!\rangle_{m_a} :=  
\sum\limits_{r=1}^k
\left(\Delta_{m_a,r}^\B G\hp{2k}_{\B}\right)\star \mathbb J
{(\B\ominus e_a^r)}\label{parejas} \,.
\end{equation}

\begin{figure}\centering\centering
\begin{subfigure}{0.58\textwidth}
\includegraphics[width=.7\linewidth]{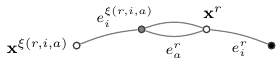}
\caption{ Locally, edge and vertex labeling in 
$\B$ prior to collapse into $\B\ominus e_a^r$} \label{fig:colapso_2a}
\end{subfigure}
\hspace*{\fill} 
\begin{subfigure}{0.34\textwidth}
\includegraphics[width=.99\linewidth]{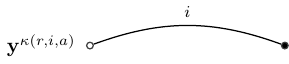} 
% \vspace{.0cm}
\caption{ Locally $\B\ominus e_a^r$} \label{fig:colapso_2b}
\end{subfigure}
\caption{ On the notation for the definition of
  $\Delta_{m_a,r}^\B$.  All gray edges in the leftmost figure
  disappear and merge into a single color-$i$ edge. The two vertices
  $t(e_a^r)$ and $\mathbf x^r$ disappear as
  well  \label{fig:colapso_2}}
\end{figure}
\end{definition}

\begin{example} \label{ex:Delta}
  We consider the \textit{empty graph} $\emptyset$ as colored.  According to Definition \ref{def:arista_weg} 
\begin{align*}
\big\langle  \hspace{-.60ex} \big\langle 
G\hp 2_{\raisebox{-.4\height}{\includegraphics[height=1.9ex]{gfx/Item2_Melon.pdf}}}, 
\raisebox{-.34\height}{\includegraphics[width=.55cm]{gfx/icono_melonM.pdf}}
\big\rangle \hspace{-.60ex}\big\rangle_{m_a}  = \big(
\Delta_{a,1} G\hp 2_{\raisebox{-.4\height}{\includegraphics[height=1.9ex]{gfx/Item2_Melon.pdf}}} \big)
\star \big(\, \raisebox{-.34\height}{\includegraphics[width=.55cm]{gfx/icono_melonM.pdf}}\ominus(e_1^a)\big)   = \!\!
\sum\limits_{\substack
{q_b,q_c \in\Z,  \\b\neq a\neq c \vspace{-.25cm}}} G\hp 2_{\raisebox{-.4\height}{\includegraphics[height=1.9ex]{gfx/Item2_Melon.pdf}}}
(\mbox{color-ordering of}\,(m_a,q_b,q_c)) \mathbb{J}(\emptyset)\,,
\end{align*}
so, for instance if $a=2$, one has:  
 \[
% $
\big\langle\hspace{-.60ex} \big\langle 
G\hp 2_{\raisebox{-.4\height}{\includegraphics[height=1.9ex]{gfx/Item2_Melon.pdf}}}, 
\raisebox{-.34\height}{\includegraphics[width=.55cm]{gfx/icono_melonM.pdf}}
\big\rangle \hspace{-.60ex}\big\rangle _{m_2}=
\sum\limits_{q_1,q_3 \in\Z} 
G\hp 2_{\raisebox{-.4\height}{\includegraphics[height=1.9ex]{gfx/Item2_Melon.pdf}}}
(q_1,m_2,q_3)
% $.
\,. \vspace{-.2cm}\]
To clear up the notation in $\Delta^{\K}_{m_a,r}$,  
consider the graph $\K$ in Figure \ref{fig:Delta}
and examples concerning the edge removal, here for, say,   
$e^1_1$ and $e^3_2$. One has then
$\xi(1,3,1)=2$, $\xi(3,1,2)=5$ and $\xi(3,3,2)=1$. Therefore
$\kappa(1,3,1)=1, \kappa(3,1,2) =4 $ 
and $\kappa (3,3,2)=1 $. Accordingly:
\begin{align*}
(\Delta^{\mathcal K}_{\substack{m_1, r=1}}G_{\mathcal K}\hp{10} )(\mathbf y^1,\ldots,\mathbf y^4) &= \sum\limits_{q_2}
G_{\mathcal K}\hp{10}(\mathbf z^1,\mathbf y^1,\ldots,\mathbf y^4) = 
 \sum\limits_{q_2}
G_{\mathcal K}\hp{10}(m_1,q_2,y_3^1 ,\mathbf y^1,\ldots,\mathbf y^4)\, ,
\\
(\Delta^{\mathcal K}_{\substack{m_2,r=3}}G_{\mathcal K}\hp{10} )(\mathbf y^1,\ldots,\mathbf y^4)&= %% \sum\limits_{\emptyset}
\,\,
G_{\mathcal K}\hp{10}(\mathbf y^1,\mathbf y^2,\mathbf z^3,\mathbf y^3,\mathbf y^4) =      
 \,\,G_{\mathcal K}\hp{10}(\mathbf y^1,\mathbf y^2,y_1^4,m_2,y_3^1 ,\mathbf y^3,\mathbf y^4)\, .
\end{align*}
The usefulness of this operation shall be clear in the 
proof of the next result.
\end{example}

\begin{figure}[H]\centering 
\raisebox{-.45\height}{\includegraphics[width=.80\textwidth]{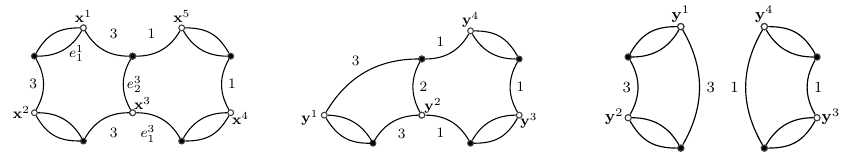}} 
\caption{Concerning example \ref{ex:Delta}, 
from left to right: $\K$, $\mathcal K\ominus e^1_1$
and $\mathcal K \ominus e^3_2$ 
\label{fig:Delta}}
\end{figure}

\begin{theorem}[Full Ward-Takahashi Identity] \label{thm:full_Ward}
  Consider an arbitrary rank-$D$ tensor model whose kinetic form $E$
  in $\Tr_2(\bar\phi,E\phi)$ (see Def. \ref{def:general_model}) obeys
\[
E_{p_1\ldots  p_{a-1}m_ap_{a+1}\ldots p_D}-
E_{p_1\ldots  p_{a-1}n_ap_{a+1}\ldots p_D}
=E(m_a,n_a) % \delta_{m_an_a}  \
\quad \mbox{ for each } a=1,\dots ,D
\]
i.e. this difference does not depend on the momenta
$p_1,\ldots, \widehat p_a,\ldots ,p_D$.  Then the partition function
$Z[J,\bJ]$ of that model satisfies
\begin{align} \label{ward2}
 &\!\!\!\!\!\sum\limits_{p_i\in\Z } \fder{^2 Z[J,\bar J]}{J_{{p_1\ldots  p_{a-1}m_ap_{a+1} \ldots p_D}} 
\delta\bar J_{{p_1\ldots  p_{a-1}n_ap_{a+1}\ldots p_D}}} 
 -  \left(\delta_{m_an_a} Y\hp a_{m_a}[J,\bar J]\right) \cdot Z[J,\bar J]
\\
=& \sum\limits_{p_i\in\Z }
\frac1{ 
E_{ {p_1\ldots  m_a \ldots p_D}}
-E_{ {p_1\ldots   n_a  \ldots p_D}}  } 
 \left( 
\bar J_{p_1\ldots  m_a \ldots p_D} 
\fder{}{\bar J}_{p_1\ldots   n_a \ldots p_D}  
-J_{p_1\ldots   n_a  \ldots p_D}\nonumber 
\fder{ }{J}_{p_1\ldots   m_a  \ldots p_D}\right)Z[J,\bar J]
\end{align}   
where 
\begin{align} \label{termY}
Y\hp a_{m_a}[J,\bar J] &:= \sum\limits_{k=1}^\infty\,\,\sum\limits_{\substack{\B \in \partial\fey_D(V )  \\ 
2k= \#(\B^{(0)})}}    %\sum\limits_{\{\mathbf{ b}\}} 
\frac{1}{|\Autc(\B)|} \langle\!\langle 
G\hp{2k}_{\B} , 
{\B} \rangle\!\rangle_{m_a} \\
& =
\sum\limits_{k=1}^\infty\,\,\sum\limits  _{\substack{\B \in \partial\fey_D(V )  \\ 
2k= \#(\B^{(0)})}}  \frac{1}{|\Autc(\B)|} 
\sum\limits_{r=1}^k
\left(\Delta_{m_a,r}^\B G\hp{2k}_{\B}\right)\star \mathbb{J}
{(\B\ominus e_a^r)}\,. \nonumber
\end{align} 
\end{theorem}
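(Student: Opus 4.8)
The point of departure is the preliminary identity \eqref{ward1}, already established above; the hypothesis on $E$ serves only to make $E_{p_1\ldots m_a\ldots p_D}-E_{p_1\ldots n_a\ldots p_D}=E(m_a,n_a)$ independent of the running momenta $p_1,\ldots,\widehat{p_a},\ldots,p_D$, so that for $m_a\neq n_a$ this nonzero constant can be pulled out of the sum $\sum_{p_i}$ in \eqref{ward1} and the equation divided by it. Hence \eqref{ward2} will follow once we produce the decomposition
\[
\sum_{p_i}\frac{\delta^2 Z[J,\bar J]}{\delta J_{p_1\ldots m_a\ldots p_D}\,\delta\bar J_{p_1\ldots n_a\ldots p_D}}=\delta_{m_an_a}\,Y\hp a_{m_a}[J,\bar J]\cdot Z[J,\bar J]+(\text{regular}),
\]
where $(\text{regular})$ is, by definition, the sum of all terms \emph{not} carrying a factor $\delta_{m_an_a}$ --- precisely those that survive multiplication by $E_{p_1\ldots m_a\ldots p_D}-E_{p_1\ldots n_a\ldots p_D}$ in \eqref{ward1}; this surviving part, divided by $E(m_a,n_a)$ for $m_a\neq n_a$, reproduces the right-hand side of \eqref{ward2}. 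So the substance of the theorem is the identification of the singular ($\propto\delta_{m_an_a}$) part with $\delta_{m_an_a}Y\hp a_{m_a}[J,\bar J]\,Z[J,\bar J]$, $Y\hp a_{m_a}$ being the expression \eqref{termY}.

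The plan is first to pass to the free energy $W=\log Z$ (well defined since $Z[0,0]=1$), so that with $u=(p_1,\ldots,m_a,\ldots,p_D)$ and $v=(p_1,\ldots,n_a,\ldots,p_D)$ one has $Z^{-1}\delta^2 Z/\delta J_u\delta\bar J_v=\delta^2 W/\delta J_u\delta\bar J_v+(\delta W/\delta J_u)(\delta W/\delta\bar J_v)$. The cross term cannot carry a $\delta_{m_an_a}$: the slot-value $m_a$ enters only through the first factor and $n_a$ only through the second, while the sole summation tying them together, $\sum_{p_i}$ with $i\neq a$, never reaches the $a$-th colour; so it lies entirely in $(\text{regular})$, and since $Z$ is independent of $m_a,n_a$ the singular part of $\delta^2 Z/\delta J_u\delta\bar J_v$ is $Z$ times that of $\delta^2 W/\delta J_u\delta\bar J_v$. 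Next I would compute $\delta^2 W/\delta J_u\delta\bar J_v$ from the boundary-graph expansion \eqref{expansion_Wgeneral}: in a summand $|\Autc(\B)|^{-1}G\hp{2k}_\B\star\mathbb J(\B)$, acting with $\delta/\delta J_u$ selects a white vertex $\mathbf x^r$ of $\B$ and pins its momentum to $u$, whereupon acting with $\delta/\delta\bar J_v$ forces one black vertex $\mathbf p^j$ of $\B$ to carry momentum $v$. The $a$-th component constraint $(\mathbf p^j)_a=n_a$ is automatic \emph{exactly} when the white vertex joined to $\mathbf p^j$ by its colour-$a$ edge is $\mathbf x^r$ itself, that is when $\mathbf p^j=t(e_a^r)$ for $e_a^r$ the colour-$a$ edge at $\mathbf x^r$: then $(\mathbf p^j)_a=(\mathbf x^r)_a=m_a$ and the factor $\delta_{m_an_a}$ appears, whereas every other choice of $\mathbf p^j$ leaves $n_a$ a free summation variable and belongs to $(\text{regular})$.

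It then remains to perform the leftover sum $\sum_{p_i}$ ($i\neq a$) in these diagonal terms. The components of $t(e_a^r)$ lying in the colour set $I(e_a^r)$ of the multi-edge between $s(e_a^r)=\mathbf x^r$ and $t(e_a^r)$ get identified with the corresponding components of $\mathbf x^r$, and the components in the other colours with components of the remaining white vertices $\mathbf x^{\xi(r,i,a)}$; after relabelling the surviving white vertices as in \eqref{nombrarvertices}--\eqref{def:kappa}, the pinned white momentum becomes exactly the vector $\mathbf z^r$ of Definition \ref{def:def_z} (with $z^r_a=m_a$, the colours of $I(e_a^r)\setminus\{a\}$ summed over, the remaining entries inherited from neighbouring vertices), the residual Green's-function factor is $\Delta_{m_a,r}^\B G\hp{2k}_\B$, and the residual product of sources is $\mathbb J(\B\ominus e_a^r)$ --- deleting $J_{\mathbf x^r}$ and $\bar J_{t(e_a^r)}$ and summing their shared-colour momenta being exactly the effect of removing the dipole of all edges between $s(e_a^r)$ and $t(e_a^r)$ and colourwise re-gluing the broken ends (Definition \ref{def:arista_weg}). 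Summing over $r=1,\dots,k$ and over all $\B$, the weight $|\Autc(\B)|^{-1}$ carried along unchanged, this diagonal contribution equals
\[
\delta_{m_an_a}\sum_{k\geq1}\;\sum_{\substack{\B\in\partial\fey_D(V)\\ 2k=\#(\B\hp0)}}\frac{1}{|\Autc(\B)|}\sum_{r=1}^k\bigl(\Delta_{m_a,r}^\B G\hp{2k}_\B\bigr)\star\mathbb J(\B\ominus e_a^r)=\delta_{m_an_a}\,Y\hp a_{m_a}[J,\bar J],
\]
the two branches of Definition \ref{def:def_z} ($I(e_a^r)=\{a\}$ versus $I(e_a^r)\neq\{a\}$) merely recording whether the removed dipole is the single colour-$a$ edge or a genuine multiple edge, i.e.\ whether the extra sum $\sum_{q_h}$ is present. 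Feeding this back into \eqref{ward1} and dividing by $E(m_a,n_a)$ gives \eqref{ward2}.

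The delicate part is the combinatorial bookkeeping just described: that under the two functional derivatives together with $\sum_{p_i}$ the boundary graph $\B$ collapses to $\B\ominus e_a^r$, that the residual source monomial reassembles $\mathbb J(\B\ominus e_a^r)$ \emph{with the correct incidences between the relabelled black vertices}, and that no spurious symmetry factor is produced, so that the bare $|\Autc(\B)|^{-1}$ is the right weight; the combinatorial data $\xi(r,i,a)$, $\kappa(r,i,a)$ and $\mathbf z^r$ of Definition \ref{def:def_z} are exactly the device keeping this accounting straight. I would treat the connected boundary graphs as above --- the base cases $\#(\B\hp0)=2$, where $\B\ominus e=\emptyset$ and $\mathbb J(\emptyset)=1$, being the content of Example \ref{ex:Delta} --- and obtain the general, disconnected boundary sector (over which \eqref{termY} sums) by the same argument applied componentwise using the disconnected graph calculus \eqref{graph_calc_disc}, the remaining details being those deferred to Appendix \ref{app:B}.
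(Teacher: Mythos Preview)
Your proof is correct and follows essentially the same route as the paper: passing from $Z$ to $W=\log Z$ via $Z^{-1}\delta^2 Z=\delta^2 W+(\delta W)(\delta W)$, discarding the product term as regular, then isolating the $\delta_{m_an_a}$-contribution in $\delta^2 W$ by expanding in boundary graphs and recognizing that the singular terms are exactly those where the $\bar J$-derivative hits the black vertex $t(e_a^r)$ joined by colour $a$ to the white vertex $\mathbf x^r$ selected by the $J$-derivative, after which the remaining $\sum_{p_i}$ collapses $\B$ to $\B\ominus e_a^r$ and produces $\Delta^\B_{m_a,r}G^{(2k)}_\B$. The paper implements the ``which black vertex gives $\delta_{m_an_a}$'' step via the labelling convention $w^r_a=x^r_a$ (so $l=r$ is forced), whereas you phrase it directly as $\mathbf p^j=t(e_a^r)$; these are the same observation.
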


\begin{proof}

In the next equation 
\begin{align} \label{dobleder}
\hspace{-.06cm}\!\!\!\!
 \frac1{Z[J,\bJ]} \fder{^2 Z[J,\bar J]}{J_{p_1\ldots m_a\ldots p_D }    
 \delta \bJ _{p_1\ldots n_a\ldots p_D }} = 
%  \left\{ 
  %\left(
 \fder{^2 W[J,\bJ] }{J_{p_1\ldots m_a\ldots p_D } 
 \delta \bJ_{p_1\ldots n_a\ldots p_D }} +
 \fder{ W[J,\bJ]}{J_{p_1\ldots m_a\ldots p_D }}
 \fder{ W[J,\bJ]}{\bJ_{p_1\ldots n_a\ldots p_D }}  %\right)
%  \right\}_{J=\bJ=0} 
%  \\
%  & =  Z[J,\bJ ]
%  \left.
%  \fder{^2 W[J,\bJ] }{J_{p_1\ldots n_a\ldots p_D }
%  \delta\bJ_{p_1\ldots m_a\ldots p_D }  }
% +
%  \right|_{J=\bJ=0}
\end{align}
we want to detect the terms that provide a singular contribution
$\delta_{m_an_a}$ to the double derivative in the LHS.  We group them
in $ \mathcal Y\hp a_{p_1,\ldots, {m}_a,\ldots ,p_D}[J,\bar J]$.
Notice that in the RHS, the product of derivatives cannot yield terms
proportional to $\delta_{m_an_a}$.  Hence, all terms in
$Y\hp a_{m_a} [J,\bJ]$ come from the sum over the momenta
$(p_1,\ldots, \widehat {p}_a,\ldots ,p_D)\in\Z^{D-1}$ of the term
$ \mathcal Y\hp a_{p_1,\ldots, {m}_a,\ldots ,p_D}$ in the double
derivative of $W[J,\bar J]$. By definition of
$ \mathcal Y\hp a_{p_1,\ldots, {m}_a,\ldots ,p_D}$, we can then write
\[\fder{^2 W[J,\bJ] }{J_{p_1\ldots m_a\ldots p_D } 
 \delta \bJ_{p_1\ldots n_a\ldots p_D }} = \delta_{m_a n_a}\mathcal
Y\hp a_{p_1,\ldots, {m}_a,\ldots ,p_D}[J,\bJ]\cdot Z[J,\bJ] + \mathcal
X\hp a_{p_1,\ldots, [{m}_an_a],\ldots ,p_D}[J,\bJ] \,, \]
where $\mathcal X\hp a _{p_1,\ldots, [{m}_an_a],\ldots ,p_D}[J,\bJ]$
contains only regular terms. We could compute them, but we are only
interested in the term proportional to $\delta_{m_an_a}$.  We let
the two derivatives act on the expansion of the free energy in
boundary graphs, eq. \eqref{expansion_Wgeneral}.
% \[
% W[J,\bJ]=\sum\limits_{k=1}^\infty
% \sum\limits_{\B \in \partial(\fey_D(V))}
% \frac{1}{|\Autc{(\B)}|}G\hp{2k}_{\B} \star \B \,.
% \]
Ignoring the symmetry factor, the derivatives acting on a single term
$G\hp{2k}_{\B} \star \mathbb J(\B)$ lead to
\begin{align} 
&\suml_{p_i\in \Z} \suml_{\mathbf{x}^1,\ldots,\mathbf{x}^k}
G\hp{2k}_{\B}(\mathbf{x}^1,\ldots,\mathbf{x}^k)
\fder{^2 \mathbb J(\B\{\mathbf{x}^1,\ldots,\mathbf{x}^k\})} 
{J_{p_1\ldots m_a\ldots p_D} \delta \bJ_{p_1\ldots n_a\ldots p_D}} \nonumber\\
=&\suml_{p_i\in \Z} \suml_{\mathbf{x}^1,\ldots,\mathbf{x}^k}G\hp{2k}_{\B}(\mathbf{x}^1,\ldots,\mathbf{x}^k)
\fder{}{J_{p_1\ldots m_a\ldots p_D}}\big( J_{\mathbf x^1}\cdots  J_{\mathbf x^k} \big) 
\fder{}{\bJ_{p_1\ldots n_a\ldots p_D}}\big( \bJ_{\mathbf w^1}\cdots  \bJ_{\mathbf w^k} \big)\,. \label{eqn:contr_prim}
\end{align}
Here, we have the freedom to choose the order of the labels of the
black vertices by the momenta $\mathbf w^i$ in such a way that
$w^r_a=x_a^r$. Hence $t(e^r_a)$ has the label $\mathbf w^r$, whereas
$s(e^r_a)$ is labeled\footnote{The condition $ s(e_c^r)=\mathbf x^r$
  is actually redundant, since by definition $e^r_c$ is the edge of
  color $c$ attached at $\mathbf x^r$, but we write this down for sake
  of clarity.} by $\mathbf x^r$.  We now focus on the product of the
two derivative terms:
\begin{align}  
\nonumber 
& \Big(\suml_{r=1}^k\big\{ \delta ^{x^r_1}_{p_1} % \delta ^{x^r_2}_{p_2}
\cdots 
\delta ^{x^r_{a-1}}_{p_{a-1} } 
\delta ^{x^r_a}_{m_a }
\delta ^{x^r_{a+1}}_{p_{a+1} } 
\cdots 
\delta ^{x^r_D}_{p_D} \big\}
 J_{\mathbf x^1}\cdots   \widehat{J_{\mathbf x^r}}\cdots J_{\mathbf x^k} \Big) 
 \\
& 
\Big(
 \suml_{l=1}^k\big\{ \delta ^{w^l_1}_{p_1} % \delta^{w^l_2}_{p_2}
 \cdots \delta ^{w^l_{a-1 }}_{p_{a-1}}\delta ^{w^l_a}_{n_a }
 \delta ^{w^l_{a+1 }}_{p_{a+1}}\cdots
\delta ^{w^l_D}_{p_D} \big\}
\bJ_{\mathbf w^1}\cdots   \widehat{\bJ_{\mathbf w^l}}\cdots \bJ_{\mathbf w^k}\Big) \,.
\label{contribucion_Y}
\end{align}
For the term $ \big( \delta ^{x^r_1}_{p_1} %\delta ^{x^r_2}_{p_2}
\cdots \delta ^{x^r_a}_{m_a }\cdots \delta ^{x^r_D}_{p_D} J_{\mathbf
  x^1}\cdots \widehat{J_{\mathbf x^r}}\cdots J_{\mathbf x^k}
\big)\big( \delta ^{w^l_1}_{p_1} %\delta^{w^l_2}_{p_2}
\cdots \delta ^{w^l_a}_{n_a }\cdots \delta ^{w^l_D}_{p_D} \ldots
\bJ_{\mathbf w^1}\cdots \widehat{\bJ_{\mathbf w^l}}\cdots \bJ_{\mathbf
  w^k} \big)$
to contribute to $\mathcal Y\hp a_{\dots}$ one requires $x^r_a=w_a^l$.
But by definition of $\mathbf w^l$, $w^l_a=x^l_a$.  Then from
\eqref{contribucion_Y}, the terms that contribute to
$\mathcal Y\hp a_{\dots}$ are
 \[\suml_{r=1}^k
 \Big(\big\{ \delta ^{x^r_1}_{p_1}  %\delta ^{x^r_2}_{p_2}
\cdots \delta ^{x^r_a}_{m_a }\cdots 
\delta ^{x^r_D}_{p_D} \big\}
 J_{\mathbf x^1}\cdots   \widehat{J_{\mathbf x^r}}\cdots J_{\mathbf x^k} \Big)
 \Big(\big\{ \delta ^{w^r_1}_{p_1}  %\delta^{w^l_2}_{p_2}
 \cdots \delta ^{w^r_a}_{n_a }\cdots 
\delta ^{w^r_D}_{p_D}\big\}
\bJ_{\mathbf w^1}\cdots   \widehat{\bJ_{\mathbf w^r}}\cdots \bJ_{\mathbf w^k}\Big)
 \]
We rewrite the $\delta$ in the  $r$-th term of this sum as
\[
\delta_{m_a}^{n_a}\cdot\prod%\limits
_{j\in A_{t(e_a^r)}}\delta_{p_j}^{x_j^r} \delta_{p_j}^{w_j^r} 
\prod%\limits
_{i\in I}\delta_{p_i}^{x_i^r} \delta_{p_i}^{w_i^r} 
=
\delta_{m_a}^{n_a}\cdot\prod%\limits
_{j\in A_{t(e_a^r)}}\delta_{p_j}^{x_j^r} \delta_{p_j}^{x_j^{\xi(r,j,a)}} 
\prod%\limits
_{i\in I}\delta_{p_i}^{x_i^r} \delta_{p_i}^{w_i^r} 
\]
(see Def. \ref{def:def_z}). Then
\begin{align*}
Y\hp a_{m_a}[J,\bar J]&= \sum_{\substack{p_1,\ldots,\widehat{p_a},\ldots,p_D \in \Z}}\mathcal{Y}\hp a_{p_1,\ldots,{m_a},\ldots,p_D} \nonumber
\\
&=\suml_{k=1}^\infty \,\, \suml_{\B\in\partial(\fey (V))}\frac{1}{|\Autc(\B)|}\suml_{\substack{p_1,\ldots,\widehat{p_a},\ldots,p_D}}
\,\,
\suml_{r=1\vphantom{\widehat {x^r}}}^k
\,\,
\suml_{\mathbf x^1,\ldots,\widehat
{\mathbf x^r},\ldots,\mathbf x^k} 
 \\
&\quad\phantom\times 
\prod%\limits
_{j\in A_{t(e_a^r)}}\delta_{p_j}^{x_j^r} \delta_{p_j}^{x_j^{\xi(r,j,a)}} 
\prod%\limits
_{i\in I}\delta_{p_i}^{x_i^r} \delta_{p_i}^{w_i^r} 
G\hp{2k}_{\B}(\mathbf{x}^1,\ldots,\widehat{\mathbf{x}^r},\ldots, \mathbf{x}^D) \\
&\quad \phantom{\times} \times\big(J_{\mathbf x^1}\cdots   \widehat{J_{\mathbf x^r}}\cdots J_{\mathbf x^k}\big)
\cdot \big(\bJ_{\mathbf w^1}\cdots   \widehat{\bJ_{\mathbf w^r}}\cdots \bJ_{\mathbf w^k}\big)
\end{align*}
and by renaming the indices and using Definitions \ref{def:arista_weg}
and \ref{def:def_z} one finally gets
\allowdisplaybreaks[3]
\begin{align*}
Y\hp a_{m_a}[J,\bar J]&=
\suml_{k=1}^\infty \,\, \suml_{\B\in\partial(\fey (V))}\frac{1}{|\Autc(\B)|}
\,\,
\suml_{r=1\vphantom{\widehat {x^r}}}^k
\,\,
\suml_{\mathbf y^1,\ldots,\mathbf y^{k-1}} \!\!\!
G\hp{2k}_{\B}(\mathbf y^1,\ldots,\mathbf z^r,\ldots, \mathbf y^{k-1})\\
& \qquad\qquad \qquad \qquad\qquad \qquad \qquad\qquad \qquad
\cdot \mathbb J\big(\B\ominus {e_r^a}\big)(\mathbf y^1,\ldots,\mathbf y^{k-1})\\
&=\suml_{k=1}^\infty \,\, \suml_{\B\in\partial(\fey (V))}\frac{1}{|\Autc(\B)|}
\suml_{r=1\vphantom{\widehat {x^r}}}^k
(\Delta_{r,m_a}G\hp{2k}_{\B})(\mathbf y^1,\ldots, \mathbf y^{k-1})
\\
& \qquad\qquad \qquad \qquad\qquad \qquad \qquad\qquad \qquad
\cdot
\mathbb J\big((\B\ominus {e_r^a})\big)\{\mathbf y^1,\ldots,\mathbf y^{k-1}\}\\
&=\suml_{k=1}^\infty \,\, \suml_{\B\in\partial(\fey (V))}\frac{1}{|\Autc(\B)|}
\suml_{r=1\vphantom{\widehat {x^r}}}^k
(\Delta_{r,m_a}G\hp{2k}_{\B}) \star (\B\ominus e^r_a) \\
&=\suml_{k=1}^\infty \,\, \suml_{\B\in\partial(\fey (V))}\frac{1}{|\Autc(\B)|}
\langle\!\langle G_\B\hp{2k},\B\rangle\!\rangle_{m_a} \,.  \qedhere
\end{align*} 
\end{proof}

\allowdisplaybreaks[0]

%\begin{corollary}
An explicit expansion of $Y\hp a_{m_a}[J,\bar J]$ is, of course, also 
more useful. We derive it for $D=3$ and
assume throughout that $\{a,c,d\}=\{1,2,3\}$ as sets; 
also for sake of notation, we drop $\B$ 
in $\Delta_{m_a,r}^\B$ when this operator acts on the correlation function 
that already shows dependence on $\B$. The expansion\footnote{In \cite{SDE} we found the 
$\mathcal O (6)$ terms and treated also rank $D=4,5$ theories.} reads 
\begin{align}
% correct factors
 Y\hp a_{m_a} [J,\bar J] 
&= \nonumber
% \sum\limits_{q_c,q_d}
\Delta_{m_a,1}G\hp 2_{\raisebox{-.4\height}{\includegraphics[height=1.9ex]{gfx/Item2_Melon.pdf}}
} \star \boldsymbol{1} %(m_a,q_c,q_d)  
\\
&\,\,
+ \frac12
\sum\limits_{r=1}^2\big( \Delta_{m_a,r}
% ^{{|\raisebox{-.4\height}{\includegraphics[height=1.9ex]{gfx/Item2_Melon.pdf}}|
% \raisebox{-.4\height}{\includegraphics[height=1.9ex]{gfx/Item2_Melon.pdf}}|}}
G\hp 4 _{|\raisebox{-.4\height}{\includegraphics[height=1.9ex]{gfx/Item2_Melon.pdf}}|
\raisebox{-.4\height}{\includegraphics[height=1.9ex]{gfx/Item2_Melon.pdf}}|}+
\Delta_{m_a,r}
% ^{{\raisebox{-.4\height}{\includegraphics[width=.6cm]{gfx/logo_vone.pdf}}}}
G\hp4_{\raisebox{-.4\height}{\includegraphics[width=.6cm]{gfx/logo_vone.pdf}}}
+ 
\Delta_{m_a,r}
% ^{{\raisebox{-.4\height}{\includegraphics[width=.6cm]{gfx/logo_vtwo.pdf}}}}
G\hp4_{\raisebox{-.4\height}{\includegraphics[width=.6cm]{gfx/logo_vtwo.pdf}}}
+ \nonumber
\Delta_{m_a,r}
% ^{{\raisebox{-.4\height}{\includegraphics[width=.6cm]{gfx/logo_vthree.pdf}}}}
G\hp4_{\raisebox{-.4\height}{\includegraphics[width=.6cm]{gfx/logo_vthree.pdf}}}\big)
\star \mathbb{J}  \big(
\raisebox{-.34\height}{\includegraphics[width=.55cm]{gfx/icono_melonM.pdf}}\big)  \nonumber  
\\  
&\,\, + \frac1{3}\sum\limits_{r=1}^3\sum\limits_{i=1}^3
\big(\Delta_{m_a,r}
%^{\raisebox{-.4\height}{\includegraphics[width=.6cm]{gfx/logo_resto_6.pdf}}}
G\hp 6_{\raisebox{.14\height}{\includegraphics[width=.4cm]{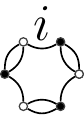}}}\big)
\star
\mathbb{J}  \big( 
\raisebox{-.34\height}{\includegraphics[width=.8cm]{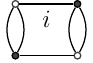}}\, \big) 
 + \frac13 \sum_{r=1}^3 
 (\Delta_{m_a,r}G\hp6_{\raisebox{-.4\height}{\includegraphics[width=.4cm]{gfx/logo_k33_blanco.pdf}}})
\star\mathbb{J}\big(
\raisebox{-.34\height}{\includegraphics[width=.8cm]{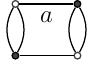}}\big) 
\nonumber
\\ 
\label{expansion_congraficasY}
& \,\,
+\sum\limits_{c\neq a}  
\big(\Delta_{m_a,1} G\hp 6_{
\raisebox{-.4\height}{\includegraphics[width=0.7cm]{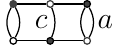}}}
\big)
\star\mathbb{J}\big(
\raisebox{-.34\height}{\includegraphics[width=.8cm]{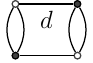}}\big) 
+\big(\Delta_{m_a,2}  G\hp 6_{
\raisebox{-.4\height}{\includegraphics[width=0.7cm]{gfx/logo_resto_6acostadoNew.pdf}}}
\big) 
\star \mathbb{J}
\big(
\raisebox{-.34\height}{\includegraphics[width=.8cm]{gfx/vertice_d.pdf}}\big) 
\\  
\nonumber
&\,\, 
+\big(
\Delta_{m_a,3}  G\hp 6_{
\raisebox{-.4\height}{\includegraphics[width=0.7cm]{gfx/logo_resto_6acostadoNew.pdf}}}
\big)
\star\mathbb{J}
\big(
\raisebox{-.34\height}{\includegraphics[width=.8cm]{gfx/vertice_a.pdf}} \big)
+ \big(
\Delta_{m_a,1}
G\hp 6_{
\raisebox{-.4\height}{\includegraphics[width=0.7cm]{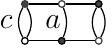}}}
\big)
\star\mathbb{J}
\big( 
\raisebox{-.34\height}{\includegraphics[width=.8cm]{gfx/vertice_c.pdf}}
\big)
+\big(
\Delta_{m_a,2}
G\hp 6_{
\raisebox{-.4\height}{\includegraphics[width=0.7cm]{gfx/logo_resto_6acostado_aHor.pdf}}}
\big)
\star\mathbb{J}
\big(\raisebox{-.34\height}{\includegraphics[width=.55cm]{gfx/icono_melonM.pdf}} \sqcup 
\raisebox{-.34\height}{\includegraphics[width=.55cm]{gfx/icono_melonM.pdf}}
\big) 
 \\
& \,\,
+\big(
\Delta_{m_a,3}
G\hp 6_{
\raisebox{-.4\height}{\includegraphics[width=0.7cm]{gfx/logo_resto_6acostado_aHor.pdf}}}
\big)
\star\mathbb{J}
\big( \nonumber 
\raisebox{-.34\height}{\includegraphics[width=.8cm]{gfx/vertice_d.pdf}}
\big)
+\frac1{ 3!}
\sum\limits_{r=1}^3\big(\Delta_{m_a,r}
G\hp 6 _{|\raisebox{-.4\height}{\includegraphics[height=1.9ex]{gfx/Item2_Melon.pdf}}|
\raisebox{-.4\height}{\includegraphics[height=1.9ex]{gfx/Item2_Melon.pdf}}|
\raisebox{-.4\height}{\includegraphics[height=1.9ex]{gfx/Item2_Melon.pdf}}|}\big)
\star\mathbb{J}
\big(
\raisebox{-.34\height}{\includegraphics[width=.55cm]{gfx/icono_melonM.pdf}} \sqcup 
\raisebox{-.34\height}{\includegraphics[width=.55cm]{gfx/icono_melonM.pdf}}\big) 
\\  
\nonumber
&\,\, 
+\nonumber 
\frac12 \big(\Delta_{m_a,1}G\hp 6 _{|\raisebox{-.4\height}{\includegraphics[height=1.9ex]{gfx/Item2_Melon.pdf}}|
\raisebox{-.4\height}{\includegraphics[width=.5cm]{gfx/logo_vertice_c.pdf}}|}\big)
\star \mathbb{J}
\big(
\raisebox{-.34\height}{\includegraphics[width=.8cm]{gfx/vertice_c.pdf}}
\big) 
+ \frac12\sum\limits_{r=2,3}\big(\Delta_{m_a,r}
G\hp 6 _{|\raisebox{-.4\height}{\includegraphics[height=1.9ex]{gfx/Item2_Melon.pdf}}|
\raisebox{-.4\height}{\includegraphics[width=.5cm]{gfx/logo_vertice_c.pdf}}|}\big)
\star \mathbb{J}
\big(
\raisebox{-.34\height}{\includegraphics[width=.55cm]{gfx/icono_melonM.pdf}} \sqcup 
\raisebox{-.34\height}{\includegraphics[width=.55cm]{gfx/icono_melonM.pdf}} \big)
+ \mathcal O(6)\,.
% \\
% & \,\,+ \mathcal O(6)\,. 
% \nonumber 
\end{align}
In this expression, for any two white vertices of a boundary graph
$\B$ that are not connected by an element $\tau\in\Sym_k$ that can be
lifted to $\hat\tau\in\Autc(\B)$, a convention regarding their
ordering should be set.  The ordering of the arguments
$(\mathbf x^1,\ldots, \mathbf x^k)$ of
$G_\B\hp {2k} (\mathbf x^1,\ldots, \mathbf x^k)$ and the labeling of
$\B\hp 0\wh$ by these obeys in eq. \eqref{expansion_congraficasY}
following convention:
\begin{quote}
  if a vertex $v_i\in\B\hp0\wh$ is labeled by the momentum
  $\mathbf x^i$ and appears to the left of the vertex
  $v_j\in\B\hp 0\wh$ labeled by $\mathbf x^j$, then $i<j\,$.
\end{quote}
Hence
$\raisebox{-.4\height}{\includegraphics[width=0.8cm]{gfx/logo_resto_6acostado_aHor.pdf}}$
and
$\raisebox{-.4\height}{\includegraphics[width=0.9cm]{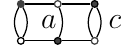}}$
are decidedly different and attention should be paid in the order of
the arguments.  For boundary graphs like
$\raisebox{-.24\height}{\includegraphics[width=.55cm]{gfx/logo_resto_6.pdf}}$
or
$\raisebox{-.20\height}{\includegraphics[width=.45cm]{gfx/logo_k33_blanco}}$
(whose drawing would lead to an ambiguous rule) we can dispense with
that convention, $\Autc(\B)$ ensures the well-definedness of
$G_\B\hp{2k}$.

\subsection{Two-Point function Schwinger-Dyson equations}

One gets the Schwinger-Dyson equation\footnote{ Our approach to SDEs
  differs from that by Gur\u au \cite{GurauSDE}.  He found
  constrictions in form of differential operators acting on
  $Z[J,\bar J]$ and showed that they satisfy a Lie algebra that
  generalizes Virasoro algebra \cite{GurauVirasoro} and is indexed by
  rooted colored trees.  We here work with $\log Z[J,\bar J]$ and our
  approach is not that general but we have those concrete operators
  derived from Theorem \ref{thm:full_Ward}.} for the two-point
function from
% \begin{align*}
$
G\hp 2_{\includegraphics[height=1.6ex]{gfx/Item2_Melon.pdf}}(\mathbf a)
= \left. 
Z_0\inv \delta^2(Z[J,\bar J])/\delta \bar 
J_{\mathbf a} \delta J_{\mathbf a}  \right|_{J=\bJ=0}
% \left.\frac{1}{Z[0,0]}\fder{^2 Z[J,\bar J]}{\bar 
% J_{\mathbf a} \delta J_{\mathbf a}}\right|_{J=\bJ=0}
$
% \end{align*}
and by using eq. \eqref{dobleder} with $m_a=n_a$
and $J=\bJ=0$. Here $Z_0=Z[0,0]$. 
%  \[
%  \fder{^2 Z[J,\bar J]}{\bJ_{\mathbf a} \delta J_{\mathbf a}}= 
%  \left\{ Z[J,\bJ]
%  \fder{^2 W[J,\bJ] }{J_{\mathbf a} 
%  \delta \bJ_{\mathbf {a}}} +\fder{ W[J,\bJ]}{\bJ_{\mathbf{a}}}
%  \fder{ W[J,\bJ]}{J_{\mathbf{a}}}
%  \right\}_{J=\bJ=0} = Z[0,0]
%  \left.\fder{^2 W[J,\bJ] }{J_{\mathbf a}\delta\bJ_{\mathbf{a}}  }
%  \right|_{J=\bJ=0}
%  \]
If one formally performs the functional integral, one gets for
$\mathbf a\neq 0\in \Z^D$
 \begin{align} 
G\hp 2_{\includegraphics[height=1.9ex]{gfx/Item2_Melon.pdf}}(\mathbf a)
& 
% =\frac{1}{Z_0}\left\{\fder{^2 }{ \nonumber\bar 
% J_{\mathbf a} \delta J_{\mathbf a}} 
% \exp\left(-S_{\mathrm{int}}(\delta/\delta \bar J,\delta/\delta J)\right)
% \ee^{\sum_{\mathbf{q}} \bar J_{\mathbf q}
% E _{\mathbf q}\inv J_{\mathbf q}}\right\}_{J=\bJ=0}\\
% & 
=  \frac{1}{Z_0}\left\{\fder{ }{ \nonumber
J_{\mathbf a} } \left[
\exp\left(-S_{\mathrm{int}}(\delta/\delta \bar J,\delta/\delta J)\right)
\frac{1}{E_{\mathbf a}} J_{\mathbf a}\ee^{\sum_{\mathbf{q}} 
\bar J_{\mathbf q}
E _{\mathbf q}\inv J_{\mathbf q}}\right]\right\}_{J=\bJ=0}\\ \label{bemol}
&=\frac{1}{Z_0E_\mathbf{a}}\left[ 
\exp\left(-S_{\mathrm{int}}(\delta/\delta \bar J,\delta/\delta J)\right)
\ee^{\sum_{\mathbf{q}} 
\bar J_{\mathbf q}
E _{\mathbf q}\inv J_{\mathbf q}}
\right]_{J=\bJ=0} \\
  & \hphantom{=} + \frac{1}{Z_0E_\mathbf{a}}\left(
\exp\left(-S_{\mathrm{int}}(\delta/\delta \bar J,\delta/\delta J)\right) 
J_{\mathbf a}
\fder{ }{J_{\mathbf a}}
\ee^{\sum_{\mathbf{q}} 
\bar J_{\mathbf q}
E _{\mathbf q}\inv J_{\mathbf q}}\nonumber
\right)_{J=\bJ=0} \\
& = \frac{1}{E_{\mathbf a}} + \frac{1}{Z_0} \frac{1}{E_{\mathbf a}} 
\left(
\bar\phi_{\mathbf a} \dervpar{}{\bar \phi_{\mathbf a} 
}\left( 
S_{\mathrm{int}}(\phi,\bar\phi) 
\right)
\right)_{\phi^\flat\to\, \delta/\delta{J^\sharp}} Z[J,\bar J ]  \,, \nonumber
\end{align}
being $\{x^\flat,y ^ \sharp \}=\{ \bar x, y \}$ or $\{x,\bar y\}$.
Here we make a crucial assumption, which is not needed in the rank-$2$
theory (matrix models \cite[Sec. 2]{gw12}). We suppose that
the interaction $S_{\mathrm{int}}$
satisfies the following condition: \textit{each} (graph-)vertex of 
\textit{each} single interaction vertex
lies, for
certain color $a=1,\ldots, D$,
on a subgraph of the following type (in the tensor models parlance, ``melonic
insertion''):
\begin{equation}
% \vspace{-.1cm}
\label{condicion_subgr}\phantom{=}
\raisebox{-.5\height}{
\includegraphics[width=2.79cm]{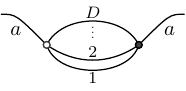}}
\end{equation}
Such is the case for the 
melonic $\phi^4_{\mathsf{m}}$-model in arbitrary rank.  
Then, whatever $S_{\mathrm{int}}(\phi,\bar\phi)$ is, 
in eq. \eqref{bemol} the term 
$\big(
\bar\phi_{\mathbf a} \dervpar{}{\bar \phi_{\mathbf a} 
}S_{\mathrm{int}}(\phi,\bar\phi) \big)_{{\phi^\flat\to \,  \delta/\delta J^\sharp }}
$ contains, for each order-$2r$ monomial, derivatives of $Z[J,\bJ]$ of the form
\[
 \fder{^{2r-2} }{\mathbb J(\B^\times)(\mathbf a^1,\ldots,\mathbf a^{r-1} )} \,\bigg(
 \sum_{ p_i\in \,\Z, i \neq a } 
 \fder{^2 Z[J,\bJ]}{J_{p_1\ldots p_{a-1}m_ap_{a+1} \ldots p_D}
 \delta\bJ_{p_1\ldots p_{a-1}n_ap_{a+1} \ldots p_D}}\bigg)
   \,,
 \]
% $
% \delta/\delta \mathbb J
% \big(\sum_{ b_i\in \,\Z, i \neq a} 
% \delta^2  Z[J,\bJ] /\delta J_{b_1b_2b_3} 
%\delta \bJ}_{a_1b_2b_3} \big)$ 
where $\mathbb J(\B^\times)$ is a generic broken graph, for the moment irrelevant. Here the WTI for
the color $a$ is handy.

\subsection{Schwinger-Dyson equations for the $\phi^4_3$-model}
\label{cgft}
We give an example with a concrete theory, which can be connected with
Tensor Group Field Theory (TGFT).  CTMs with non-trivial kinetic term
usually are originated by TGFT-actions
 \begin{align*}
 S[\phi,\bar \phi]=& \frac{1}{2} \int_{\T^3\times \T^3}\! \! 
 \dif \boldsymbol{g} \dif \boldsymbol{g}'\, \bar\phi(g_1,g_2,g_{3})K(\boldsymbol g,\boldsymbol g')
 \phi(g'_1,g'_2,g'_{3})
 \\ 
&+
\frac{\lambda}{4} \int_{ \T^{12} }\prod_\alpha  \dif 
\boldsymbol{g}^{(\al)} \,V(\boldsymbol g^{(0)},\ldots,\boldsymbol{g}^{(3)})
\phi(\boldsymbol g^{(0)}) \bar\phi(\boldsymbol g^{(1)}) 
\phi(\boldsymbol g^{(2)}) \bar\phi(\boldsymbol g^{(3)})\, ,
 \end{align*}
 by Fourier-transforming the fields $\phi$ and $\bar\phi$ there.  Here
 we will set $K$ to be the Laplacian on
 $\T^3=\mathbb{S}^1\times \mathbb{S}^1\times \mathbb{S}^1$, but
 following analysis an can be with little effort carried on by picking
 a generic diagonal operator $K$.  The action in terms of the
 Fourier-modes is of the form
% \begin{equation}
 $S[\phi,\bar \phi]=\Tr_2(\phi,E\bar \phi)+ \lambda (\Tr_{\mathcal V_1}(\phi,\bar\phi)
+\Tr_{\mathcal V_2}(\phi,\bar\phi)+\Tr_{\mathcal V_3}(\phi,\bar\phi))$, 
% \end{equation}
with $
\Tr_2{(\bar\phi,E\phi)}=\sum_{x_1,x_2,x_3} \bar\phi_{x_1x_2x_3} (x^2_1+x_2^2+x^2_3+m_0^2)\, \phi_{x_1x_2x_3}\,
$, being $E$ thus diagonal.
For this concrete theory 
\begin{align*}
\left.\bar\phi_{\mathbf x}\dervpar{ S_{\mathrm{int}}}{\bar\phi_{\mathbf x}}\right|_{{\phi^\flat\to \,  \delta/\delta J^\sharp }}
 =2\lambda\Bigg\{  \vphantom{ \sum\limits_{b_1}\fder{ }{\bJ_{b_1x_2x_3}}}  &  
\fder{ }{ J_{x_1x_2x_3}} \sum\limits_{b_1}\fder{ }{\bJ_{b_1x_2x_3}}
\sum\limits_{b_2,b_3}\fder{}{J_{b_1b_2b_3}}\fder{ }{\bJ_{x_1b_2b_3}} &
\\ & \!\!\!\!\!\!\!\! + \left.\fder{ }{ J_{x_1x_2x_3}} \sum\limits_{b_2}\fder{ }{\bJ_{x_1b_2x_3}}
\sum\limits_{b_1,b_3}\fder{}{J_{b_1b_2b_3}}\fder{ }{\bJ_{b_1x_2b_3}}  \right.  % \vphantom{\left. \fder{ }{ J_{x_1x_2x_3}\right|_{J=\bJ=0}}  
&  
\\
 & \!\!\!\!\!\!\!\!+ \fder{ }{ J_{x_1x_2x_3}} \sum\limits_{b_3}\fder{ }{\bJ_{x_1x_2b_3}}
\sum\limits_{b_1,b_3}\fder{}{J_{b_1b_2b_3}}\fder{ }{\bJ_{b_1b_2x_3}}  
 & \!\! \!\!\!\!\! \!\!\!\!\!\left. \vphantom{\fder{ }{ J_{x_1x_2x_3}} }   \!\Bigg\}Z[J,\bJ] \right|_{J=\bJ=0}\,.
\end{align*}
One uses for each line the WTI for the color in
question. For the first line, $a=1$, this reads
\begin{align*}
2\lambda\left\{ 
\fder{ }{ J_{x_1x_2x_3}} \sum\limits_{b_1} 
\fder{}{ \bJ_{b_1x_2x_3}}\right.&\left.\bigg[ 
\delta_{x_1b_1}Y_{x_1}^{(1)}[J,\bJ] \, \cdot    \right. \\
&\!\!\!\!\!\!\!\!\!\!\!\!\!\!\!\!\!\!\!\!\!\!\!\!\!\!\!\left.\left.+
\sum_{b_2,b_3}\frac{1}{|b_1|^2-|x_1|^2}\left(
\bJ_{{b_1b_2b_3}}\fder{}{\bJ_{x_1b_2b_3}}
-J_{x_1b_2b_3}\fder{ }{ J_{b_1b_2b_3}}
\right)
\right] Z[J,\bJ] \right\} \left.\vphantom{\fder{}{J}}\right|_{J=\bJ=0} \,.
\end{align*}

The derivatives on the $Y_{x_1}\hp1$ yield
\allowdisplaybreaks[1]
\begin{align*}
 \fder{^2 Y_{x_1}\hp1[J,\bJ] }{J_{x_1x_2x_3}\delta{\bJ_{x_1x_2x_3}}}\bigg|_{J=\bar J=0} &= 
\nonumber
\dervpar{ Y\hp1_{x_1}[J,\bJ]}{
 \hspace{.21ex}\raisebox{-2pt}{\includegraphics[height=2.195ex]{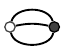}} (\mathbf x)} 
 \\ & = \big(
 \Delta_{x_1,1}
% ^{{|\raisebox{-.4\height}{\includegraphics[width=.42cm]{gfx_Ward/ward_icono_melon.pdf}}|
% \raisebox{-.4\height}{\includegraphics[width=.42cm]{gfx_Ward/ward_icono_melon.pdf}}|}}
G\hp 4 _{\raisebox{-.33\height}{\includegraphics[height=1.8ex]{gfx_Corrigendum//Item2_Melon}}|
\raisebox{-.33\height}{\includegraphics[height=1.8ex]{gfx_Corrigendum//Item2_Melon}}}+ 
\Delta_{x_1,1} 
% ^{{\raisebox{-.4\height}{\includegraphics[width=.6cm]{gfx_Ward/ward_logo_vone.pdf}}}}
G\hp4_{\raisebox{-.4\height}{\includegraphics[width=.48cm]{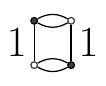}}}
+ 
\Delta_{x_1,1} \nonumber
% ^{{\raisebox{-.4\height}{\includegraphics[width=.6cm]{gfx_Ward/ward_logo_vtwo.pdf}}}}
G\hp4_{\raisebox{-.4\height}{\includegraphics[width=.48cm]{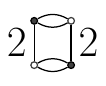}}}
+ \nonumber
\Delta_{x_1,1}
% ^{{\raisebox{-.4\height}{\includegraphics[width=.6cm]{gfx_Ward/ward_logo_vthree.pdf}}}} 
G\hp4_{\raisebox{-.4\height}{\includegraphics[width=.48cm]{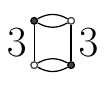}}} \big) (\mathbf x)\nonumber
%  \\& = \mbox{function-coefficient of } 
%  \mathbb J (\raisebox{-.23\height}{\includegraphics[height=1.9ex]{gfx_Corrigendum//Item2_Melon}}) \mbox{ in (48) evaluated at } \mathbf x
\\
& =  \sum\limits_{q_2,q_3} G\hp4_{\raisebox{-.34\height}{\includegraphics[width=2.15ex]{gfx_Corrigendum//Item2_Melon.pdf}}|
\raisebox{-.34\height}{\includegraphics[width=2.15ex]{gfx_Corrigendum//Item2_Melon.pdf}}}( x_1,q_2,q_3 ; \mathbf x) 
+G\hp 4_{\includegraphics[width=3ex]{gfx_Corrigendum///logo_vone.pdf}}(\mathbf x,\mathbf x)  \nonumber
\\  
& \quad + \sum_{q_3}
  G\hp 4_{\includegraphics[width=3ex]{gfx_Corrigendum///logo_vtwo.pdf}}(x_1,x_2,q_3;\mathbf x)   
+  \sum_{q_2}G\hp 4_{\includegraphics[width=3ex]{gfx_Corrigendum///logo_vthree.pdf}}(x_1,q_2,x_3;\mathbf x) \,, %\label{eq:Y1}
\end{align*} 
and one obtains in similar way the terms concerning 
the two other colors. Straightforwardly one gets that 
for each $\mathbf x=(x_1,x_2,x_3)\in\Z^3$,
 \begin{align} \nonumber
  \hspace{-.1cm}
 G\hp 2_{\includegraphics[height=1.9ex]{gfx_Corrigendum//Item2_Melon.pdf}}(\mathbf{x})
& = \frac{1}{m^2+|\mathbf{x}|^2} \\ 
& \nonumber +
\frac{(-2\lambda) }{m^2+|\mathbf{x}|^2}
\left\{
G\hp 2_{\includegraphics[height=1.9ex]{gfx_Corrigendum//Item2_Melon.pdf}}(\mathbf{x}) \cdot 
\bigg[
\sum\limits_{q,r\in\Z}
G\hp 2_{\includegraphics[height=1.9ex]{gfx_Corrigendum//Item2_Melon.pdf}}(x_1,q,r)
+G\hp 2_{\includegraphics[height=1.9ex]{gfx_Corrigendum//Item2_Melon.pdf}}(q,x_2,r)
+G\hp 2_{\includegraphics[height=1.9ex]{gfx_Corrigendum//Item2_Melon.pdf}}(q,r,x_3)\bigg]
\right. 
\\  \nonumber
& +  \sum\limits_{q\in\Z} \bigg(
G\hp 4_{\includegraphics[width=3ex]{gfx_Corrigendum//logo_vone.pdf}}(x_1,q,x_3;\mathbf{x})+
G\hp 4_{\includegraphics[width=3ex]{gfx_Corrigendum//logo_vone.pdf}}(x_1,x_2,q;\mathbf{x}) +
% G\hp 4_{\includegraphics[width=3ex]{gfx_Corrigendum//gfx_W/logo_vone.pdf}}(\mathbf{x};q,x_2,x_3)+
G\hp 4_{\includegraphics[width=3ex]{gfx_Corrigendum//logo_vtwo.pdf}}(q,x_2,x_3;\mathbf{x}) 
\\
& \phantom{ + \sum\limits_{q\in\Z} }+G\hp 4_{\includegraphics[width=3ex]{gfx_Corrigendum//logo_vtwo.pdf}}(x_1,x_2,q;\mathbf{x}) 
% & \qquad \qquad \qquad \quad   
% +G\hp 4_{\includegraphics[width=3ex]{gfx_Corrigendum//gfx_W/logo_vtwo.pdf}}(\mathbf{x};x_1,q,x_3)
+
G\hp 4_{\includegraphics[width=3ex]{gfx_Corrigendum//logo_vthree.pdf}}(q,x_2,x_3;\mathbf{x})  +
G\hp 4_{\includegraphics[width=3ex]{gfx_Corrigendum//logo_vthree.pdf}}(x_1,q,x_3;\mathbf{x})  
%  + G\hp 4_{\includegraphics[width=3ex]{gfx_Corrigendum//gfx_W/logo_vthree.pdf}}(\mathbf{x};x_1,x_2,q) 
 \bigg) \nonumber 
\\ \hspace{-.0cm}
&+
\sum\limits_{q,r\,\in\,\Z}  
\bigg[ 
G\hp4 _{\raisebox{-.33\height}{\includegraphics[height=1.9ex]{gfx_Corrigendum//Item2_Melon}}|
\raisebox{-.33\height}{\includegraphics[height=1.9ex]{gfx_Corrigendum//Item2_Melon}}}
(x_1,q,r;\mathbf{x})
\label{corrig} %\tag{51}
 +
G\hp4 _{\raisebox{-.33\height}{\includegraphics[height=1.9ex]{gfx_Corrigendum//Item2_Melon}}|
\raisebox{-.33\height}{\includegraphics[height=1.9ex]{gfx_Corrigendum//Item2_Melon}}}
(q,x_2,r;\mathbf{x}) +  G\hp4 _{\raisebox{-.33\height}{\includegraphics[height=1.9ex]{gfx_Corrigendum//Item2_Melon}}|
\raisebox{-.33\height}{\includegraphics[height=1.9ex]{gfx_Corrigendum//Item2_Melon}}}
(q,r,x_3;\mathbf{x}) \bigg] 
\\ \hspace{-.4cm}
&   - \sum\limits_{q_1\in\Z} \bigg[
\frac{1}{q_1^2-x_1^2} \cdot 
\big(G\hp 2_{\includegraphics[height=1.9ex]{gfx_Corrigendum//Item2_Melon.pdf}}(q_1,x_2,x_3) - \nonumber
G\hp 2_{\includegraphics[height=1.9ex]{gfx_Corrigendum//Item2_Melon.pdf}}(\mathbf{x}) \big)
\bigg]   \\  \hspace{-.4cm}
&   -\sum\limits_{q_2\in\Z} \bigg[  \nonumber
\frac{1}{q_2^2-x_2^2} \cdot 
\big(G\hp 2_{\includegraphics[height=1.9ex]{gfx_Corrigendum//Item2_Melon.pdf}}(x_1,q_2,x_3) -
G\hp 2_{\includegraphics[height=1.9ex]{gfx_Corrigendum//Item2_Melon.pdf}}(\mathbf{x}) \big)
\bigg] \\ \nonumber \hspace{-.4cm}
&  -\sum\limits_{q_3\in\Z} \bigg[   
\frac{1}{q_3^2-x_3^2} \cdot 
\big(G\hp 2_{\includegraphics[height=1.9ex]{gfx_Corrigendum//Item2_Melon.pdf}}(x_1,x_2,q_3) -
G\hp 2_{\includegraphics[height=1.9ex]{gfx_Corrigendum//Item2_Melon.pdf}}(\mathbf{x}) \big)
\bigg]  +  \sum\limits_{c=1,2,3}  
G\hp 4_{\includegraphics[width=3ex]{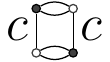}}(\mathbf{x},\mathbf{x})\Bigg\}\,  .
\end{align}
One can conveniently simplify the notation according to  
\begin{equation}
G\hp2 = 
 G\hp 2_{\includegraphics[height=1.75ex]{gfx_Corrigendum//Item2_Melon.pdf}} = \dervpar{\log Z}{
 \hspace{.21ex}\raisebox{-2pt}{\includegraphics[height=2.195ex]{gfx_Corrigendum//Item2_Melon.pdf}}}
 \,, \,\,
G\hp 4_{V_c}= G\hp 4_{\includegraphics[width=.5cm]{gfx_Corrigendum//Item4_Vcv.pdf}}
=
 \dervpar{\log Z}{
 \hspace{.21ex}\raisebox{-2pt}{\includegraphics[height=2.15ex]{gfx_Corrigendum//Item4_Vcv.pdf}}}\,,  \,\,
% \and 
G\hp4 _{\mathrm m | \mathrm m}
=G\hp 4_
{\raisebox{-.33\height}{\includegraphics[height=1.75ex]{gfx_Corrigendum//Item2_Melon}}  |
\raisebox{-.33\height}{\includegraphics[height=1.75ex]{gfx_Corrigendum//Item2_Melon}}}=
\dervpar{\log Z}{
 \big(\hspace{.21ex}\raisebox{-2.43pt}{\includegraphics[height=2.5ex]{gfx_Corrigendum//Item2_Melon.pdf}} |
 \hspace{.21ex}\raisebox{-2.43pt}{\includegraphics[height=2.5ex]{gfx_Corrigendum//Item2_Melon.pdf}}\big)}\,,
 \end{equation}
and write the 2-point SDE in compact form:
 \begin{align}\nonumber
G (\mathbf{x}) &= \frac{1}{m^2+|\mathbf{x}|^2} + \frac{(-2\lambda)}{m^2+|\mathbf{x}|^2} \sum \limits_{c=1}^3
\Bigg\{\sum \limits_{q_a,q_b} G ({q}_a,q_b,x_c) \times G (\mathbf{x})  \\
 & \quad + \sum\limits_{d = a,b} \sum \limits_{q_c}G^{(4)}_{V_d}(x_a,x_b,q_c,\mathbf{x})  \label{eq:SDE2} \tag{\ref{corrig}$'$} 
% \\ & \quad \nonumber
     + \sum \limits_{q_a,q_b} G^{(4)}_{\mathrm m|\mathrm m}(q_a,q_b,x_c,\mathbf{x}) 
     \\ & \quad -\sum \limits_{q_c} \frac{1}{q_c^2-x_c^2 }  \vspace{-1cm}
   \Big[G (x_a,x_b,q_c)-G (\mathbf{x})\Big]  +G^{(4)}_{V_c}(\mathbf{x},\mathbf{x})  \Bigg\}\,,   \nonumber
\end{align} 
assuming set-equality $\{a,b,c\}=\{1,2,3\}$.
This 2-point SDE can be formally taken to a (diffeo-)integral equation, 
as in \cite{us}, by taking the continuum limit. 

\section{Conclusions} 
In the quest for the full Ward Takahashi Identity, we have shown that 
the correlation functions of 
rank-$D$ CTMs are classified by boundary ($D$-colored) graphs.  
For quartic melonic models the boundary sector is the set of \textit{all} 
$D$-colored graphs and 
correlation functions indexed by this set are, in their entirety, non-trivial.  
Concerning combinatorics, it would be also interesting to 
grasp, perhaps in terms of covering spaces, the 
counting itself of the \textit{connected} $D$-colored graphs in $2k$
vertices, which, as one can 
extrapolate from \cite{counting_invariants,OEIS}, gives also the number of 
conjugacy classes of subgroups of index $k$ in free group $F_{D-1}$.
\par
A similar, more intricate organization by boundary graphs is very likely to hold for
multi-orientable tensor models; this is worth exploring, in
particular if one is interested in allowing non-orientable manifolds.
\par 
The full WTI works also for $\mathrm U(1)$-TGFT and it would be
interesting to derive a Ward Takahashi Identity for the
$\mathrm{SU}(2)$-TGFT models, or $\mathrm{SU}(2)$-related models like
Boulatov's and Ooguri's.\par
The culmination of this work would be the construction of the
$\phi^4_3$-theory in an Osterwalder-Schrader manner, properly along
the lines of the Tensor Track.  The $\phi_3^4$-theory is
superrenormalizable and its renormalization has been studied
constructively \cite{RivDele} using the multiscale loop vertex
expansion \cite{MLVE}.  The addition of $\phi^6$-interaction vertices
can make the theory from the renormalization viewpoint even more
interesting.  In rank-$4$ such theory would very likely convey the
interesting properties of the Ben Geloun-Rivasseau model.  The new
theory would have, of course, the same boundary sector and therefore
the same expansion of the free energy (obviously with different
solutions for the correlation functions indexed by the same boundary
graph). Hence for melonic $(\phi^4 + \phi^6)$-theories the present 
results hold.  \par

The next obvious step is to solve the equation for the $2$-point
function derived here.  
We shall begin by studying in depth, for  
general theories, the \textit{discrete permutational symmetry} axiom
\cite[Sec. 5, Rule 2]{track_update} stated by Rivasseau. Although all
the models treated here are in fact $\mathfrak S_D$-invariant, there
are elements of CTMs that are not manifestly $\Sym_D$-invariant
(e.g. the homology of graphs defined in \cite{Gurau:2009tw}). After
showing their invariance one would be entitled to state an equivalence of the form
\[
G\hp4_{\raisebox{-.4\height}{\includegraphics[width=.6cm]{gfx/logo_vone.pdf}}} 
\sim 
G\hp4_{\raisebox{-.4\height}{\includegraphics[width=.6cm]{gfx/logo_vtwo.pdf}}}
\sim
G\hp4_{\raisebox{-.4\height}{\includegraphics[width=.6cm]{gfx/logo_vthree.pdf}}} ,
\quad 
G\hp 6_{\raisebox{-.4\height}{\includegraphics[width=.6cm]{gfx/logo_resto_6.pdf}}} 
\sim
G\hp 6_{\raisebox{-.4\height}{\includegraphics[width=.6cm]{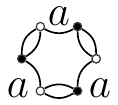}}}  
\mbox{ for }  a\neq c\,,
\]
and similar rules for higher multi-point functions indexed by graphs
lying on the same $\mathfrak S_D$-orbit of the action
$\tau:G_{\B}\hp{2k} \mapsto (\Delta\tau )^*G_{\tau(\B)}\hp{2k } $
where 
$\Delta\tau$ is the diagonal action on $(\Z^D)^k$, $\tau\in\Sym_D$, and the action
of $\Sym_D$ on $\Z^ D$ and
$\amalg\Grph{D}$ is in both cases permutation of
colors. This equalities will noticeably simplify the SDEs, as is
evident in eq.  \eqref{eq:SDE2}.  Based on the WTI exposed here, 
the  full\footnote{As a technicality, the SDE tower was obtained for connected 
correlation functions indexed by connected boundary graphs. 
The SDE obtained there were actually works any model
having \eqref{condicion_subgr} as subgraph in the 
interaction vertices.} 
tower of Schwinger-Dyson equations for quartic theories in 
arbitrary rank was obtained in \cite{SDE}. 
Their renormalized version should also be derived
and we can proceed as in \cite{us}. 
\par
Nonetheless, in order to solve the equations, this approach 
might still not be enough and, in order to obtain a closed equation,
it can be complemented as follows.
Fixed a correlation function $G_{\B} \hp {2p}$, Gur\u au's degree's
range of graphs contributing to $G_{\B}\hp{2p}$ is
$\rho(\B)=\{t(\B) + n\cdot (D-1)!/2 \,|\, n\in \Z_{\geq 0}\}$ being
$t{(\B)}\in\Z_{\geq 0} $ a lower bound depending on $\B$.  As done in
\cite{gw12} for matrix models, one can further possibly decouple the
equation \eqref{eq:SDE2} for the 2-point function by expanding any
correlation function occurring there in subsectors that share the same
value $\alpha$ of the degree,
$G_{\B}\hp{2p}=\sum_{\alpha\in\rho(\B)} G_{\B}\hp{2p,\alpha}$ , in
order to obtain a closed equation. Furthermore, since mainly the
degree conveys the geometrical information, this is a sum over all the
geometries bounded by $|\Delta(\B)|$ that a fixed model (here a
quartic) triangulates.  These subsectors generalize the matrix models'
genus-expansion \eqref{eq:exp_genus_matrix} and in turn justify Figure
\ref{fig:expgurau}.\par
Lemmas \ref{thm:completeness3} and \ref{thm:BS_D}, intended here first
as auxiliary results, are important on their own, if one wants to
understand the geometry of the spaces generated by the $\phi^4_{\mathsf m}$-theories.
In rank $3$ and $4$, for
instance, they realize the triviality of $\Omega^{\mathrm{SO}}_2$ and $\Omega^{\mathrm{SO}}_3$, respectively,
the orientable bordism groups.
This could be an accident due to the equivalence of 
the topological, PL and smooth categories in low dimensions. For 
higher dimensions one should rather compare with
the piecewise linear bordism groups $\Omega_{*}^{\mathrm{PL}}$,
which is beyond the scope of this study but worth analyzing.
\par
The graph-operation $\#$ introduced in Definition \ref{thm:surg_ribb} turned out 
to be optimally-behaved (additive) with respect to Gur\u au's degree. The fact
that melonic graphs are spheres leads us to conjecture that $\#$
\textit{should} be indeed the QFT-compatible graph-realization of the
connected sum in arbitrary rank (a direct proof of which is found in \cite{cips}),
when the colored graphs represent manifolds.

\begin{acknowledgements} The author wishes to thank Raimar Wulkenhaar for enlightening
  comments, and Fabien Vignes-Tourneret motivating discussions. 
  Thanks likewise also to Joseph Ben Geloun and Dine
  Ousmane Samary---whose visit to University of M\"unster was possible
  by means of the Collaborative Research Center ``Groups, Geometry \&
  Actions'' (SFB 878)---for enriching discussions.  The
  \textit{Deutscher Akademischer Austauschdienst} (DAAD) is
  acknowledged for financial support.
\end{acknowledgements}
\appendix

\section{The fundamental group of
  crystallizations} \label{app:fundamental} We expose the computation
of the fundamental group of two crystallizations, i.e. colored graphs
whose number of $(D-1)$-bubbles is exactly $D$ (for
non-crystallizations, Gur\u{a}u has found a representation of the
fundamental group of general colored graphs
\cite[Eq. 40]{Gurau:2009tw}).  The next algorithm by Gagliardi
\cite{gagliardi} works \textit{only for crystallizations}.  Namely,
let $\B$ be a crystallization, for simplicity, of a manifold of
dimension at least $3$ (hence $\B$ has $D>3$ colors).  The fundamental
group of $\B$, $\pi_1(\B)$, is isomorphic to the fundamental group
$\pi_1(|\Delta(\mathcal B)|)$ of the (path connected) space $\mtc B$
represents. The former group is constructed from generators $X$ and
relations $R$ dictated by certain bubbles of $\B$ as follows. Choose
two arbitrary different colors $i,j$ and consider all $D-1$ bubbles
$\{\mathcal B^{\widehat{ i j}}_\alpha\}_{\alpha=1}^n$ without the two
colors $i,j$. Let
\[
  X=\{x_{1},\ldots,x_{ {n }}\}, \qquad x_\alpha = \mbox{generator associated to }
 \mathcal B ^{\widehat{ij}}_\alpha . 
\] Consider the set
$\{\mathcal B^{ i j }_\gamma\}_{\gamma=1}^m$ of 
the $ij$-bicolored $2$-bubbles of $\mathcal B$. 
For each $\gamma$, each vertex of the loop $\mathcal B^{ i j }_\gamma$
intersects certain $\B^{ \widehat{i j} }_\alpha$,
and in that case write $x_\alpha^{\epsilon}$ 
according to the following rule:
set $\epsilon=1$ if the vertex at which $\mathcal B^{ i j }_\gamma$
intersect $\mathcal B ^{\widehat{ij}}_\alpha$ is black and 
$\epsilon=-1$ if it is white.
For  
each $\gamma$ let $R(\mathcal B^{ i j }_\gamma)$ 
be the
word on $X$ defined by juxtaposing all such elements $x_\alpha^{\epsilon}$ 
for each vertex of $\mathcal B^{ i j }_\gamma$, in order of occurrence.
Then
\[
\pi_1(\mathcal B)\cong\langle x_1, \ldots, x_{n-1},x_n \,| \,
x_n, \{ R( \mathcal B^{ i j }_\gamma) : \gamma=1,\ldots, m-1\}\rangle  
\]
Gagliardi's algorithm \cite{gagliardi} states that neither the 
choice of the $D-1$ bubble
$x_n$ that one sets to the identity is important, 
nor the relation $R(\B^{ i j }_m)$ that does not appear is,
nor the two colors $i,j$ are. 

\begin{example}
  We put Gagliardi's algorithm to work for the crystallization. Here
  we come back to the color-set $\{0,1,2,3\}$, instead of
  $\{1,2,3,4\}$ and consider $\Gamma\in\Grph{3+1}$ given in Figure
  \ref{fig:Gagliardi_algorithm} lens space $L_{3,1}$, already
  mentioned in Section \ref{sec:geom_interpretation}.
% \[
% \Gamma=\,\,\raisebox
% {-0.468\height}{\includegraphics[width=2.55cm]{}}
% \] 
We choose first the two colors 
$i=2,j=3$, whose corresponding $\{2,3\}^c$-colored $(D-1)$-bubbles are 
$\Gamma_1 ^{01}$ and $\Gamma_2 ^{01}$ depicted below. 
One  associates to all but one of the 
2-bubbles with chosen colors $  \{2,3\}$, a relation.
There are two such 2-bubbles and we drop the inner bubble 
and pick the outer one 
 $\Gamma_1^{23}$, as shown in Figure \ref{fig:Gagliardi_algorithm}. 
% \[
% \Gamma_1 ^{01} = 
% \raisebox{-.474\height}
% {
% \includegraphics[height=3.0cm]{gfx/fundamental_bubble01_B}}
% \qquad
% \Gamma_2 ^{01}=
% \raisebox{-.474\height}
% {
% \includegraphics[height=3.0cm]{gfx/fundamental_bubble01_A}}\qquad
% \Gamma_1 ^{23} =  \raisebox{-.474\height}
% {
% \includegraphics[height=3.0cm]{}}
%  \]
The rule says that the (only non-trivial) relation 
$R(\Gamma_1^{23})$
corresponds to $\Gamma_1 ^{23}$ and is 
given by $R(\Gamma_1 ^{23}) = x_1^{+1}x_2\inv  x_1^{+1}x_2\inv x_1^{+1}x_2\inv$.
Notice, incidentally, that if we had instead 
chosen to derive a relation for the inner bubble 
the corresponding relation would be 
$R(\Gamma_2 ^{23})=x_2^{+1}x_1\inv  x_2^{+1}x_1\inv x_2^{+1}x_1\inv$
which is just $(R(\Gamma_1 ^{23}))\inv$. Thus 
$\pi_1(\Gamma)\cong \langle x_1,x_2\, |\, x_2 , R(\Gamma_1^{23}) \rangle = 
\langle  x_1 \, |\, x_1^3   \rangle \cong \Z_3 \cong \pi_1(L_{3,1})\,.
$
\begin{figure}[H]
\centering
\includegraphics[height=3.3cm]{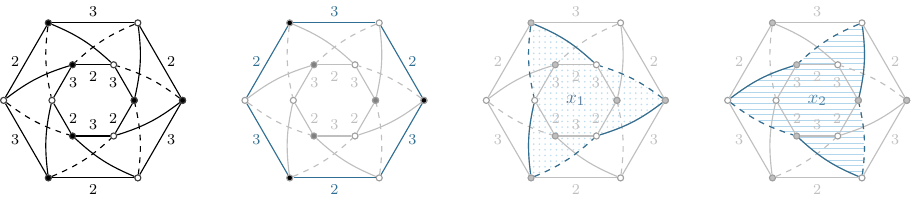} 
\caption{ \label{fig:Gagliardi_algorithm} From left to right: $\Gamma$
  and three of its bubbles $\Gamma_1^{23}$, $\Gamma_1^{01}$ and
  $\Gamma_2^{01}$, used to compute its fundamental group.  For the
  last two bubbles the generators $x_1$ and $x_2$ are depicted }
\end{figure}
\end{example}

\begin{example}\label{ex:cyclic}
 The same algorithm applied to the graph $\mathcal C$ 
 of example \ref{ex:4bordism}:
 \[
 \mathcal C=\hspace{-.5cm}\raisebox{-.45\height}{
 \includegraphics[height=2.6cm]{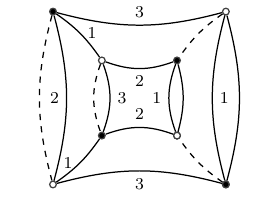}}
 \]
 Namely, one has for the chosen colors $i=2,j=3$ that the relation
 $R(\mathcal C^{23}_{\mathrm{outer}})$ corresponding to the outer
 $23$-bicolored bubble is given by $x_2x_2\inv x_1x_1\inv$ and
 therefore is trivial.  Here $x_1$ (resp. $x_2$) is the leftmost
 (resp. rightmost) $01$-bicolored bubble.  Thus
 $ \pi_1(\mathcal C)=\langle x_1,x_2\, | \,x_2 , R(\mathcal
 C^{23}_{\mathrm{outer}})\rangle = \langle x_1 | \, \emptyset \rangle
 = \Z.  $
\end{example}

\allowdisplaybreaks[1]
\section{The term $Y_{m_a}\hp{a}$ for disconnected graphs}\label{app:B}
A useful formula in order to compute higher order terms in
$Y_{m_a}\hp a$ is presented in this last appendix. We also obtain an
expression that shows that our expansion of $W$ in boundary graphs,
eq. \eqref{expansion_W}, is genuinely the generalization of the
longest-cycle expansion for matrix models of Section
\ref{general_matrix}.  \par We now split an arbitrary boundary graph
$\B$ in its (say $B$) connected components $\R^\beta \in\Grph{D}$,
$\B=\coprod_{\beta=1}^B \mathcal R^\beta$.  Then there exist a
non-negative integer partition $\{n_i\}_{i \geq 0}$ of $B$, i.e.
$B =\sum_{j=1}^\ell n_j$, with $ k= \sum_{j=1}^\ell j\cdot n_j $ and
$\mathcal N =2 k$, where $k$ is the number of $J$-sources (and
therefore also the number of $\bar J$-sources) and $\mathcal N$ the
order of the correlation function, and $n_k$, as before, is number of
boundary components with exactly $k$ $J$-sources (or, equivalently,
$k$ $\bJ$-sources).  We can associate to $\R^\beta$ a product of
sources, $\mathbb J(\R^\beta)$, as above.  Of course
$\mathbb J(\B)=\prod_\beta \mathbb J(\R^\beta)$.  In that case
%, modulo the factors $|\Autc(\B)|$,
\[
G^{(2k)}_{\B}\star \mathbb J (\B )=\sum\limits_{\mathbf{a,b,\ldots,c}}
G^{(2k)}_{|\R^1|\R^2|\ldots|\R^B|} (\{\mathbf a\},\{\mathbf
b\},\ldots,\{\mathbf c\} )\cdot \mathbb J( \R^1)(\{\mathbf a\})\mathbb
J(\R^2)(\{\mathbf b\}) \cdots \mathbb J(\R^B)(\{\mathbf c\}).\]
Now, some of the $\R^\beta$'s might be repeated. Let $s$ be the number
of \textit{different} graphs at the disconnected components of the
boundary
$\{\R_\beta\}_{\beta=1}^B=\cup_{b=1}^s\{ \R_1^b,\ldots,\R_{m_b}^b\}$,
where $m_b$ is the number of copies of $\R^b$. Moreover, we order
%  \begin{equation} \label{orden_graphs}
$
\R^b_{\!\!\phantom{a}_{\bullet}} \mbox{ ascending: } \R^b\leq\R^{b'}, 
\mbox{ if }(\R^b)^{(0)}\leq (\R^{b'})^{(0)}    
$
%  \end{equation}
(which is equivalent to say, that the boundary component of type
$\R^{b'}$ has the same number of external lines or more than $\R^b$).
Then, of course $\sum_b m_b=B$. We rewrite then $W$ as
\begin{align} 
W[J,\bar J] &= \sum\limits_{\ell=1}^\infty \sum\limits_{\substack{m_s\geq 1 \label{ansatz2}  \\ 
m_{1},\ldots,m_{\ell-1}\geq 0}}\!\!\!\!\!\!\!
\!\!\!{\vphantom{\sum}}' \left(\frac{1}{m_1!\ldots m_s! (|\Autc(\R^1)|)^{m_1}\ldots  
(|\Autc(\R^s)|)^{m_s}}
 \right) \\
 & \quad  \cdot
G^{(2k)}_{|\R_1^1|\R^1_2|\ldots|\R^1_{m_1}|\cdots|\R^s_{1}|\ldots |\R^s_{m_s}|} 
 \star \mathbb J
\big(( {\R_1^1}\sqcup {\R_2^1}  \nonumber
\cdots \sqcup {\R_{m_1}^1})\sqcup \cdots \sqcup({\R_1^{s}}  \sqcup
\cdots  \sqcup{\R^s_{m_s}})\big)\, .
\end{align} 
The symmetry factors are consequence of eq. \eqref{eq:corona}.
The prime in the sum denotes the following two restrictions:
\begin{equation} \label{eqn:suma_prime}
2k = \sum_{p=1}^s m_p \cdot |(\R^p)\hp0|\qquad  \mbox{ and } \qquad 2\ell=|(\R^s)\hp 0|\,,
\end{equation}
so $\ell$ is the half of the ``largest number of vertices'' of the
components of the boundary graph.

The sum in eq. \eqref{termY} is over all possible disconnected graphs.
We now obtain the expression for $Y\hp a_{m_a}$ in terms of the
connected components of the graphs.  For the boundary graph
$\B=( {\R_1^1}\sqcup {\R_2^1} \cdots \sqcup {\R_{m_1}^1})\sqcup \cdots
\sqcup({\R_1^{s}} \sqcup \cdots \sqcup{\R^s_{m_s}})$,
where $\R_1^b,\ldots,\R_{m_b}^b$ are copies of the same graph, having
$k_i$ white vertices. Now, the operators $\Delta_{r,m_a}$ select the
$r$-th white vertex. We give the white vertices of $\B$ the order
according to the occurrence of these copies of the connected parts of
$\B$, i.e.  the first $k_1$ white vertices are in $\R_1^1$; the next
vertices, from the $(k_1+1)$-th until the $(2k_1)$-th in $\R_2^1$ and
so on. The last $k_s$ vertices are in $\R_{m_s}^s$.)  For sake of
notation $\mathcal N=\mathcal N(\{k_i\})$, the order of the Green's
function, is not made explicit, given by
$\mathcal N=2\cdot (\sum_{i=1}^s m_i k_i)$ one derives the following
expression for $Y\hp a_{m_a}[J,\bar J]$:
\begin{align}  
& \sum\limits_{\ell=1}^\infty \sum\limits_{\substack{m_s\geq 1   \\ 
m_{1},\ldots,m_{\ell-1}\geq 0}}\!\!\!\!\!\!\!
\!\!\!{\vphantom{\sum}}' \left(\frac{1}{m_1!\ldots m_s! (|\Autc(\R^1)|)^{m_1}\ldots  
(|\Autc(\R^s)|)^{m_s}}
% \frac{1}{\sigma(\R^b_{\mathlarger{\mathlarger{\mathlarger{\mathlarger{.}}}}})
 \right) \\
 &\!\!\!\times \left[
\sum_{r=1}^{k_1} (\Delta_{r,m_a}G^{(\mathcal N)}_{|\R_1^1|\R^1_2|\ldots|\R^1_{m_1}|\cdots|\R^s_{1}| \ldots |\R^s_{m_s}|}) 
\star \left(\R_1^1\ominus e_a^r \sqcup {\R_2^1} \sqcup \ldots \sqcup {\R_1^{s}}\sqcup \cdots  \sqcup{\R^s_{m_s}} \right) \right.  \nonumber
\\
&+ \sum_{r=1}^{k_1} (\Delta_{k_1+r,a}G^{(\mathcal N)}_{|\R_1^1|\R^1_2|\ldots|\R^1_{m_1}|\cdots|\R^s_{1}|\ldots |\R^s_{m_s}|}) 
\star \left(\R_1^1 \sqcup {\R_2^1}\ominus e_a^r \sqcup \ldots \sqcup {\R_1^{s}}\sqcup \cdots  \sqcup{\R^s_{m_s}} \right) +\ldots\nonumber 
\\
 &+ \sum_{r=1}^{k_1} (\Delta_{m_1\cdot k_1+r,a}G^{(\mathcal N)}_{|\R_1^1|\ldots|\R^1_{m_1}|\cdots|\R^s_{1}|\ldots |\R^s_{m_s}|}) 
\star \left(\R_1^1 \sqcup \ldots \sqcup {\R_{m_1}^1}\ominus e_a^r \sqcup \cdots \sqcup {\R_1^{s}} \cdots  \sqcup{\R^s_{m_s}} \right) \nonumber \\
& \nonumber
+\sum_{r=1}^{k_s} (\Delta_{m_1\cdot k_1+m_2\cdot k_2+\ldots+m_{s-1}k_{s-1}+r,a}
G^{(\mathcal N)}_{|\R_1^1|\ldots|\R^1_{m_1}|\cdots|\R^s_{1}|\ldots |\R^s_{m_s}|})
\\
& \hspace{7cm}\star \left(\R_1^1 \sqcup  \ldots \sqcup {\R_{m_1}^1}
\sqcup \cdots \sqcup {\R_1^{s}}\ominus e_a^r \sqcup \cdots  \sqcup{\R^s_{m_s}} \right) \nonumber 
% \end{align}
% \begin{align}
\\
&+\ldots  \nonumber 
+\sum_{r=1}^{k_s} (\Delta_{m_1k_1+m_2k_2\ldots+m_{s-1}k_{s-1}+(m_s-1)k_s+r,a}
G^{(\mathcal N)}_{|\R_1^1|\R^1_2|\ldots|\R^1_{m_1}|\cdots|\R^s_{1}|\ldots |\R^s_{m_s}|}) 
\\
&\hspace{7cm}\star \left(\R_1^1  \sqcup \ldots 
\sqcup {\R_1^{s}}\sqcup \cdots  \sqcup{\R^s_{m_s}\ominus e_a^r } 
\right) \Big]\, ,  \nonumber
\end{align} 
with the prime, as before, meaning the restrictions of the sum by
eqs. \eqref{eqn:suma_prime}.
% \[2k = \sum_{p=1}^s m_p \cdot |(\R^p)\hp0|\qquad  \mbox{ and } \qquad 2\ell=|(\R^s)\hp 0|\,,\]


\begin{thebibliography}{10}

\bibitem{Ambjorn}
Jan Ambj\o{}rn, Bergfinnur Durhuus, and Thordur Jonsson.
\newblock {Three-dimensional simplicial quantum gravity and generalized matrix
  models}.
\newblock {\em Mod. Phys. Lett.}, A6:1133--1146, 1991.

\bibitem{counting_invariants}
Joseph Ben~Geloun and Sanjaye Ramgoolam.
\newblock {Counting Tensor Model Observables and Branched Covers of the
  2-Sphere}.
\newblock {\em {Ann. Inst. Henri Poincaré Comb. Phys. Interact.}}, 1:77--138,
  2014.
\newblock arXiv:1307.6490.

\bibitem{4renorm}
Joseph Ben~Geloun and Vincent Rivasseau.
\newblock {A Renormalizable 4-Dimensional Tensor Field Theory}.
\newblock {\em Commun. Math. Phys.}, 318:69--109, 2013.
\newblock arXiv:1111.4997.

\bibitem{SYKTGFT}
Joseph Ben~Geloun and Vincent Rivasseau.
\newblock {A Renormalizable SYK-type Tensor Field Theory}.
\newblock 2017.

\bibitem{uncoloring}
Valentin Bonzom, R\u{a}zvan Gur\u{a}u, and Vincent Rivasseau.
\newblock {Random tensor models in the large N limit: Uncoloring the colored
  tensor models}.
\newblock {\em Phys. Rev.}, D85:084037, 2012.
\newblock arXiv:1202.3637.

\bibitem{Bonzom:2017pqs}
Valentin Bonzom, Luca Lionni, and Adrian Tanas\u{a}.
\newblock {Diagrammatics of a colored SYK model and of an SYK-like tensor
  model, leading and next-to-leading orders}.
\newblock {\em J. Math. Phys.}, 58(5):052301, 2017.
\newblock arXiv:1702.06944.

\bibitem{su2}
Sylvain Carrozza, Daniele Oriti, and Vincent Rivasseau.
\newblock {Renormalization of a SU(2) Tensorial Group Field Theory in Three
  Dimensions}.
\newblock {\em Commun. Math. Phys.}, 330:581--637, 2014.
\newblock arXiv:1303.6772.

\bibitem{On}
Sylvain Carrozza and Adrian Tanasă.
\newblock {$O(N)$ Random Tensor Models}.
\newblock {\em Lett. Math. Phys.}, 106(11):1531--1559, 2016.

\bibitem{simple4PL}
Maria~R. Casali, Paola Cristofori, and Carlo Gagliardi.
\newblock {PL} 4-manifolds admitting simple crystallizations: framed links and
  regular genus.
\newblock {\em Journal of Knot Theory and Its Ramifications}, 25(01):1650005,
  2016.
\newblock arXiv:1410.3321.

\bibitem{RivDele}
Thibault Delepouve and Vincent Rivasseau.
\newblock {Constructive Tensor Field Theory: The $T^4_3$ Model}.
\newblock {\em Commun. Math. Phys.}, 345(2):477--506, 2016.

\bibitem{dFGZ}
Philippe Di~Francesco, Paul~H. Ginsparg, and Jean Zinn-Justin.
\newblock {2-D Gravity and random matrices}.
\newblock {\em Phys. Rept.}, 254:1--133, 1995.
\newblock arXiv:hep-th/9306153.

\bibitem{DGMR}
Margherita Disertori, R\u{a}zvan Gur\u{a}u, Jacques Magnen, and Vincent
  Rivasseau.
\newblock {Vanishing of Beta Function of Non Commutative $\Phi_4^4$ Theory to
  all orders}.
\newblock {\em Phys. Lett.}, B649:95--102, 2007.
\newblock arXiv:hep-th/0612251.

\bibitem{DR}
Margherita Disertori and Vincent Rivasseau.
\newblock {Two and three loops beta function of non commutative $\Phi_4^4$
  theory}.
\newblock {\em Eur. Phys. J.}, C50:661--671, 2007.
\newblock arXiv:hep-th/0610224.

\bibitem{survey_cryst}
Massimo Ferri, Carlo Gagliardi, and Luigi Grasselli.
\newblock A graph-theoretical representation of pl-manifolds --- a survey on
  crystallizations.
\newblock {\em {Aequationes Mathematicae}}, 31(1):121--141, 1986.

\bibitem{freedman1982}
Michael~H. Freedman.
\newblock The topology of four-dimensional manifolds.
\newblock {\em J. Differential Geom.}, 17(3):357--453, 1982.

\bibitem{Freidel}
Laurent Freidel.
\newblock {Group field theory: An Overview}.
\newblock {\em Int. J. Theor. Phys.}, 44:1769--1783, 2005.
\newblock arXiv:hep-th/0505016.

\bibitem{gagliardi}
Carlo Gagliardi.
\newblock {How to deduce the fundamental group of a closed n-manifold from a
  contracted triangulation}.
\newblock {\em J. Combin. Inform. System Sci.}, 4(3):237--252, 1979.

\bibitem{Gross:2017aos}
David~J. Gross and Vladimir Rosenhaus.
\newblock {All point correlation functions in SYK}.
\newblock 2017.

\bibitem{GWbeta}
Harald Grosse and Raimar Wulkenhaar.
\newblock {The beta function in duality covariant noncommutative $\phi^4$
  theory}.
\newblock {\em Eur. Phys. J.}, C35:277--282, 2004.
\newblock arXiv:hep-th/0402093.

\bibitem{gw:matrixbase04}
Harald Grosse and Raimar Wulkenhaar.
\newblock {Renormalization of $\phi^4$ theory on noncommutative ${R}^4$ in the
  matrix base}.
\newblock {\em Commun. Math. Phys.}, 256:305--374, 2005.
\newblock arXiv:hep-th/0401128.

\bibitem{gw12}
Harald Grosse and Raimar Wulkenhaar.
\newblock {Self-Dual Noncommutative $\phi^4$ -Theory in Four Dimensions is a
  Non-Perturbatively Solvable and Non-Trivial Quantum Field Theory}.
\newblock {\em Commun. Math. Phys.}, 329:1069--1130, 2014.
\newblock arXiv:1205.0465.

\bibitem{GurauVirasoro}
Razvan Gur\u{a}u.
\newblock {A generalization of the Virasoro algebra to arbitrary dimensions}.
\newblock {\em Nucl. Phys.}, B852:592--614, 2011.
\newblock arXiv:1105.6072 [hep-th].

\bibitem{GurauSDE}
Razvan Gur\u{a}u.
\newblock {The Schwinger Dyson equations and the algebra of constraints of
  random tensor models at all orders}.
\newblock {\em Nucl. Phys.}, B865:133--147, 2012.
\newblock arXiv:1203.4965 [hep-th].

\bibitem{MLVE}
Razvan Gur\u{a}u and Vincent Rivasseau.
\newblock {The Multiscale Loop Vertex Expansion}.
\newblock {\em Annales Henri Poincare}, 16(8):1869--1897, 2015.

\bibitem{Gurau:2009tw}
R\u{a}zvan Gur\u{a}u.
\newblock {Colored Group Field Theory}.
\newblock {\em Commun. Math. Phys.}, 304:69--93, 2011.
\newblock arXiv:0907.2582.

\bibitem{Nexpansion}
R\u{a}zvan Gur\u{a}u.
\newblock {The complete $1/N$ expansion of colored tensor models in arbitrary
  dimension}.
\newblock {\em Annales Henri Poincare}, 13:399--423, 2012.
\newblock arXiv:1102.5759.

\bibitem{GurauRyan}
R\u{a}zvan Gur\u{a}u and James~P. Ryan.
\newblock {Colored Tensor Models - a review}.
\newblock {\em SIGMA}, 8:020, 2012.
\newblock arXiv:1109.4812.

\bibitem{KirbySiebenmann}
Robion~C. Kirby and Laurence Siebenmann.
\newblock {\em Foundational essays on topological manifolds, smoothings, and
  triangulations}.
\newblock Number~88. Princeton University Press, 1977.

\bibitem{kitaev}
Alexei Kitaev.
\newblock A simple model of quantum holography (lecture).
\newblock \url{http://online.kitp.ucsb.edu/online/entangled15/kitaev/}, 2015.
\newblock Accessed: 2017-05-10.

\bibitem{LS-duality}
Edwin Langmann and Richard~J. Szabo.
\newblock {Duality in scalar field theory on noncommutative phase spaces}.
\newblock {\em Phys. Lett.}, B533:168--177, 2002.
\newblock arXiv:hep-th/0202039.

\bibitem{moise}
Edwin~E. Moise.
\newblock {Affine Structures in 3-Manifolds: V. The Triangulation Theorem and
  Hauptvermutung}.
\newblock {\em Annals of Mathematics}, 56(1):96--114, 1952.

\bibitem{OEIS}
The On-Line~Encyclopedia of~Integer~Sequences.
\newblock \url{oeis.org/A057005}.
\newblock Accessed: 2016-09-29.

\bibitem{OritiGFT}
Daniele Oriti.
\newblock {Group Field Theory and Loop Quantum Gravity}.
\newblock 2014.
\newblock arXiv:1408.7112.

\bibitem{us}
Dine {Ousmane Samary}, Carlos~I. {P\'erez-S\'anchez}, Fabien
  {Vignes-Tourneret}, and Raimar Wulkenhaar.
\newblock {Correlation functions of a just renormalizable tensorial group field
  theory: the melonic approximation}.
\newblock {\em Class. Quant. Grav.}, 32(17):175012, 2015.
\newblock arXiv:1411.7213.

\bibitem{cips}
Carlos~I. P\'erez-S\'anchez.
\newblock {Surgery in colored tensor models}.
\newblock {\em J. Geom. Phys.}, 120:262--289, 2017.
\newblock arXiv:1608.00246.

\bibitem{SDE}
Romain Pascalie, Carlos~I. P{\'e}r{ez-S\'anchez}, and Raimar Wulkenhaar.
\newblock {Correlation functions of $\mathrm{U}(N)$-tensor models and their
  Schwinger-Dyson equations}.
\newblock 2017.
\newblock arXiv:1706.07358.

\bibitem{pezzana}
Mario Pezzana.
\newblock Sulla struttura topologica delle variet\`a compatte.
\newblock {\em Ati Sem. Mat. Fis. Univ. Modena}, 23(1):269--277 (1975).

\bibitem{RovelliReisenberger}
Michael~P. Reisenberger and Carlo Rovelli.
\newblock {Space-time as a Feynman diagram: The Connection formulation}.
\newblock {\em Class. Quant. Grav.}, 18:121--140, 2001.
\newblock arXiv:gr-qc/0002095.

\bibitem{track_update}
Vincent Rivasseau.
\newblock {The Tensor Track: an Update}.
\newblock In {\em {29th International Colloquium on Group-Theoretical Methods
  in Physics (GROUP 29) Tianjin, China, August 20-26, 2012}}, 2012.
\newblock arXiv:1209.5284 [hep-th].

\bibitem{track3}
Vincent Rivasseau.
\newblock {The Tensor Track, III}.
\newblock {\em Fortsch. Phys.}, 62:81--107, 2014.

\bibitem{track4}
Vincent Rivasseau.
\newblock {The Tensor Track, IV}.
\newblock In {\em {Proceedings, 15th Hellenic School and Workshops on
  Elementary Particle Physics and Gravity (CORFU2015): Corfu, Greece, September
  1-25, 2015}}, 2016.

\bibitem{SachdevYe}
Subir Sachdev and Jinwu Ye.
\newblock {Gapless spin fluid ground state in a random, quantum Heisenberg
  magnet}.
\newblock {\em Phys. Rev. Lett.}, 70:3339, 1993.

\bibitem{DineWard}
Dine~Ousmane Samary.
\newblock {Closed equations of the two-point functions for tensorial group
  field theory}.
\newblock {\em Class. Quant. Grav.}, 31:185005, 2014.
\newblock arXiv:1401.2096.

\bibitem{Smerlak}
Matteo Smerlak.
\newblock {Comment on `Lost in Translation: Topological Singularities in Group
  Field Theory'}.
\newblock {\em Class. Quant. Grav.}, 28:178001, 2011.
\newblock arXiv:1102.1844.

\bibitem{MO}
Adrian Tanas\u{a}.
\newblock {Multi-orientable Group Field Theory}.
\newblock {\em J. Phys.}, A45:165401, 2012.
\newblock arXiv:1109.0694.

\bibitem{reviewtanasa}
Adrian Tanas\u{a}.
\newblock {The Multi-Orientable Random Tensor Model, a Review}.
\newblock {\em SIGMA}, 12:056, 2016.
\newblock arXiv:1512.02087.

\bibitem{WittenSYK}
Edward Witten.
\newblock {An SYK-Like Model Without Disorder}.
\newblock 2016.
\newblock arXiv:1610.09758.

\end{thebibliography}
\end{document}